\newtheorem{definition}{Definition}
\newtheorem{assumption}{Assumption}
\newtheorem{proposition}{Proposition}
\newtheorem{example}{Example}
\newtheorem{remark}{Remark}
\newtheorem{theorem}{Theorem}
\newtheorem*{theorem*}{Theorem}
\newtheorem*{proposition*}{Proposition}
\newtheorem{lemma}{Lemma}
\newtheorem{corollary}{Corollary}
\DeclareMathOperator*{\argmin}{arg\,min}
\setlist[enumerate]{label=\arabic*.}
\setlist[enumerate,2]{label=(\alph*)}
\renewcommand\paragraph{\@startsection{paragraph}{4}%
  \z@{.5em}{-1em}%
  {\normalfont\bfseries}}
  \renewcommand\@addpunct[1]{}
\title{Semiparametric Bayesian Inference for a Conditional Moment Equality Model}
\author{Christopher D. Walker}
\date{\today. Duke University, Department of Economics: \href{mailto:christopher.walker@duke.edu}{christopher.walker@duke.edu}. This paper builds on the first chapter of my Ph.D. dissertation at Harvard University that had the same title. I am grateful to my Ph.D. advisors Elie Tamer, Isaiah Andrews, Anna Mikusheva, and Neil Shephard for their dedication and support. I also thank Aslihan Asil, Harvey Barnhard, Nick Cao, Jake Carlson, Jiafeng Chen, Meagan Currie, Edward Glaeser, Jeff Gortmaker, Andrew Kao, Isaac Meza, Ariel Pakes, Bas Sanders, Diego Santa Maria, Adam Rosen, Rahul Singh, James Stock, James Stratton, Davide Taglialatela, Wilbur Townsend, Jaume Vives-i-Bastida, Davide Viviano, and Jennifer Walsh for helpful comments. Finally, I thank numerous seminar and conference participants at Harvard University, Monash University, Notre Dame Econometrics Junior Conference, Brown University, Oxford Bayesian Reading Group, Georgetown University, University of Michigan, University of Toronto, University of Western Ontario, Columbia University, University of California Berkeley, Rice University, University of Pennsylvania, Duke University Department of Economics, University of Southern California, The University of Sydney Business Analytics, 2025 NBER/NSF SBIES Conference, Duke University Department of Statistical Science, Microeconometrics Class of 2025 Conference at Northwestern University, University of Virginia, The Pennsylvania State University, Stanford University, and Princeton University. All errors are mine.}
\begin{document}
\begin{abstract}
I propose a semiparametric Bayesian inference framework for conditional moment equalities. The core idea is that these models deterministically map a conditional distribution of data to a structural parameter via the restriction that a conditional expectation equals zero. Consequently, a posterior for the conditional distribution leads to a posterior for the structural parameter by minimizing the distance of the conditional moments to zero. The method has similar flexibility to frequentist semiparametric estimators and does not require converting the conditional moments into unconditional moments. I also establish frequentist asymptotic optimality of my proposal via a semiparametric Bernstein-von Mises theorem (BvM), which establishes that the posterior for the structural parameter is asymptotically normal and matches the \cite{chamberlain1987asymptotic} semiparametric efficiency bound. The BvM conditions are verified for Gaussian process priors and complement the numerical aspects of the paper in which these priors are used to estimate welfare effects. \\
\textbf{Keywords:} Bayesian Semiparametrics, Bernstein-von Mises, Optimal Instruments
\end{abstract}
\maketitle
\newpage
\section{Introduction}\label{sec:intro}
\subsection{Overview}
Many economic parameters are identified via conditional moment equalities. These restrictions state that a finite-dimensional parameter of interest $\gamma $ uniquely solves $E_{P_{W|Z}}[g(W,Z,\gamma)|Z] = 0$, where $W $ and $Z $ are observed random vectors, $P_{W|Z}$ is the conditional distribution of $W $ given $Z$, and $g$ is a known vector-valued function. A canonical example is linear instrumental variables (IV) in which the regression model $Y=\gamma_{0}+\gamma_{1}D+\gamma_{2}X+U$ with $E[U|X,A] = 0$ leads to $E_{P_{W|Z}}[g(W,Z,\gamma)|Z]=0$ with $W=(Y,D)'$, $Z = (X,A)' $, and $g(W,Z,\gamma) = Y-\gamma_{0}-\gamma_{1}D-\gamma_{2}X$. Other examples include Euler equations \citep{hansen1982generalized}, demand and supply models \citep{berry1995automobile}, and peer effects models \citep{graham2008identifying}, to name a few. Conditional moment equalities are also theoretically interesting because they impose overidentifying restrictions on $P_{W|Z}$. This makes efficient estimation nontrivial and has motivated important work on estimation using the \cite{chamberlain1987asymptotic} optimal instruments \citep{newey1990efficient,NEWEY1993419,donald2001choosing,donald2003empirical,kitamura2004empirical,donald2009choosing,chen2021harmless,chib2022bayesian}.

This paper makes three contributions. The first is introducing a semiparametric Bayesian inference framework for conditional moment equalities. Consistent with some approaches to conditional moment equalities (e.g., \cite{ai2003efficient}), I leave $P_{W|Z}$ unrestricted and redefine $\gamma$ as a minimum distance estimand $\argmin_{\gamma \in \Gamma}||E_{P_{W|Z}}[g(W,Z,\gamma)|Z]||^{2}$, where $||\cdot||$ is a norm. Observing that this defines a deterministic relationship between $P_{W|Z}$ and $\gamma$, a Bayesian framework arises in which a posterior for $P_{W|Z}$ implies a posterior for $\gamma$. The prior for $P_{W|Z}$ is supported over a nonparametric family, a feature ensuring that this Bayesian approach has similar flexibility to frequentist semiparametric estimators that do not rely on a tightly parametrized likelihood model. Moreover, by targeting $P_{W|Z}$, the posterior for $\gamma$ does not require converting the conditional moments into unconditional moments. This is appealing because ad hoc selections of unconditional moments can result in efficiency loss, and, more severely, some (efficient) choices can result in $\gamma$ not being identifiable.\footnote{See Example 2 of \cite{dominguez2004consistent} for an instance where the unconditional moment equalities based on the optimal IVs have multiple roots, despite $\gamma$ uniquely minimizing the distance of the conditional moments to zero.} These benefits are compounded by the fact that, as a fully Bayesian procedure, it inherits all the basic benefits of Bayesian analysis (e.g., likelihood principle, expected utility decisions, simultaneous inference).

The second contribution is a semiparametric Bernstein-von Mises theorem (BvM) for the posterior for $\gamma$. BvMs provide conditions under which a posterior has the same limiting distribution as an efficient frequentist estimator, and are often invoked as a claim of large-sample robustness of the posterior to the choice of prior. Although BvMs hold under mild conditions in correctly specified regular parametric models, several papers find that these theorems can be delicate in semiparametric applications where the parameter of interest is a functional of a smooth nonparametric object, such as a probability density function, a regression function, etc. (see \cite{bickel2012semiparametric,castillo2012gaussian,castillo2012semiparametric,rivoirard2012bernstein,castillo2015bernstein,ray2020semiparametric}). Since $\gamma$ is a functional of a nonparametric $P_{W|Z}$, these findings raise the possibility that the posterior for $\gamma$ may be sensitive to the choice of prior for $P_{W|Z}$ in large samples. To resolve these concerns, I prove a general semiparametric BvM for the posterior for $\gamma$ in Section \ref{sec:BVM}, providing conditions under which it converges to a normal distribution, centered at an efficient estimator, and with variance equal to the \cite{chamberlain1987asymptotic} efficiency bound. As far as I am aware, this is the first fully Bayesian semiparametric BvM for the conditional moment equality model that does not require explicit conversion into unconditional moments.\footnote{See \cite{kato2013quasi} and \cite{kankanala2025generalized} for quasi-Bayesian results of a similar nature.} A byproduct of the BvM is that some Bayes estimators are asymptotically efficient frequentist estimators.

The third contribution is providing implementation details and verifying the BvM conditions for a flexible class of priors based on Gaussian processes (GP). GPs induce priors for conditional densities supported on the smoothness classes typically imposed on first-stage nuisance parameters, making concrete the flexibility of this Bayesian approach. Section \ref{sec:empirical} demonstrates implementation for these priors with an empirical illustration focused on estimating welfare effects, an exercise capturing a common use of conditional moments (i.e., estimating model primitives and evaluating counterfactuals).\footnote{A complementary data-calibrated simulation comparing my method with alternatives is in Appendix \ref{ap:simulation}.} Section \ref{sec:example} complements these numerical aspects by formally verifying the high-level BvM conditions for these GP priors. The key requirement is that the \cite{chamberlain1987asymptotic} optimal residual is well-approximated by a reproducing kernel Hilbert space (RKHS), a `no-bias' condition that restricts the smoothness of the optimal residual relative to the GP.
\subsection{Literature}
This paper connects to several literatures in econometrics and statistics. The first is the Bayesian analysis of moment conditions. \cite{chamberlain2003nonparametric} perform inference for structural parameters defined by \textit{unconditional} moment equalities by solving the moment restrictions using draws from a Dirichlet process posterior for the data distribution. The key conceptual distinction between their approach and mine is that the former cannot be applied to conditional moment equalities with continuous $Z$ unless the conditional moments are converted into unconditional moments.\footnote{In an unpublished manuscript, \cite{chamberlain1995semiparametric} consider conditional moment equalities, however their Dirichlet priors explicitly require the support of $Z$ to be small relative to the sample size, ruling out continuous $Z$.} \cite{chib2022bayesian} is also relevant because they propose a Bayesian inference procedure for conditional moment equalities in which the conditional moments are converted into a sieve of unconditional moments, and inference is performed using the Bayesian exponentially tilted empirical likelihood (ETEL) framework of \cite{schennach2005bayesian}. This conversion (and the use of ETEL) makes \cite{chib2022bayesian} fundamentally different to my proposal. Other papers on Bayesian analysis of moments include \cite{lancaster2010bayesian}, \cite{kitamura2011bayesian}, \cite{pelenis2014bayesian}, \cite{shin2015bayesian}, \cite{bornn2018moment}, and \cite{chib2018bayesian}.

Another related literature is on semiparametric BvMs. \cite{castillo2015bernstein} prove a general BvM for smooth functionals of nonparametric models, with substantial analysis devoted to those of \textit{unconditional} probability density functions (building on \cite{rivoirard2012bernstein}). I characterize $\gamma$ as a smooth functional of a \textit{conditional} probability density function, and, via conditional asymptotics, I am able to modify their proof to show that the $\gamma$ posterior achieves the \cite{chamberlain1987asymptotic} limit. \cite{chib2022bayesian} also prove a BvM compatible with \cite{chamberlain1987asymptotic}, however their proof differs from mine because, under appropriate conditions, the sieve of unconditional moment equalities makes parametric BvM arguments (cf. Theorem 10.1 of \cite{vaart_1998}) applicable to the ETEL. Other contributions to the semiparametric BvM literature include \cite{shen2002asymptotic}, \cite{bickel2012semiparametric}, \cite{castillo2012gaussian,castillo2012semiparametric}, \cite{norets2015bayesian}, \cite{ray2020semiparametric}, \cite{monard2021bernstein}, \cite{breunig2025double,breunig2025semiparametricbayesiandifferenceindifferences}, \cite{yiu2025semiparametric}, and \cite{walker2024parametrization}.

A third literature is efficient frequentist estimation of conditional moment equalities. The posterior for $\gamma$ is formalized as that of a conditional estimand that solves an efficiently weighted minimum distance problem. For this reason, my approach offers a Bayesian analog to the \cite{ai2003efficient} framework (except I focus on finite-dimensional parameters). Moreover, the first-order conditions of the minimum distance estimand relates my framework to method of moments estimators that plug in estimators of the \cite{chamberlain1987asymptotic} optimal instruments (e.g., those studied in \cite{robinson1987heteroskedasticity}, \cite{newey1990efficient}, \cite{NEWEY1993419}, and \cite{chen2021harmless}). Other frequentist approaches to optimal instruments estimation include Generalized Method of Moments (GMM) and Generalized Empirical Likelihood (EL) estimators based on sieves of unconditional moments \citep{donald2001choosing,donald2003empirical,donald2009choosing}, and localized EL estimators \citep{,kitamura2004empirical}.

A final related literature is quasi-Bayes. Popularized in \cite{chernozhukov2003mcmc}, quasi-Bayes is a sampling-based frequentist estimation framework in which an extremum objective forms a log quasi-likelihood, a prior is assumed for the parameter, and Bayes rule is used to obtain a quasi-posterior. \cite{kato2013quasi} and \cite{kankanala2025generalized} develop quasi-Bayesian frameworks for conditional moment equalities, and prove quasi-Bayesian BvMs for a structural function defined by conditional moment equalities and linear functionals of it, respectively. A key conceptual distinction between my approach and quasi-Bayes is that mine has a finite-sample Bayesian interpretation. Other papers on quasi-Bayesian estimation include \cite{liao2011posterior}, \cite{florens2012nonparametric,florens2021gaussian}, \cite{chen2018monte}, \cite{andrews2022optimal}, and \cite{kankanala2023quasi}.
\subsection{Outline}
The paper is organized as follows. Section \ref{sec:overview} sets up the general Bayesian inference framework and provides a flexible class of priors, Section \ref{sec:empirical} presents an empirical illustration, Section \ref{sec:BVM} states the semiparametric BvM and related results, Section \ref{sec:example} verifies the BvM conditions for the GP priors, and Section \ref{sec:conclusion} concludes with some extensions. The Appendix contains proofs of theorems and corollaries, technical details about implementation, a simulation study calibrated to the empirical application, and further discussion. Proofs of propositions, and lemmas, as well as some additional discussion are in the Online Appendix \citep{walker2026supplement}. Notation is introduced when appropriate.
\section{Bayesian Inference Framework}\label{sec:overview}
This section formalizes semiparametric Bayesian inference and describes implementation for a class of priors based on Gaussian processes.
\subsection{Semiparametric Bayesian Inference}\label{sec:framework}
I start with Bayesian inference for the conditional distribution $P_{W|Z}$ of $W$ given $Z$. Let $\{(W_{i}',Z_{i}')'\}_{i \geq 1}$, $W_{i} \in \mathcal{W} \subseteq \mathbb{R}^{d_{w}}$ and $Z_{i} \in \mathcal{Z} \subseteq \mathbb{R}^{d_{z}}$, be a sequence of random vectors for which the first $n \geq 1$ elements $\{(W_{i}',Z_{i}')'\}_{i=1}^{n}$ forms the observed data. The sampling model conditions on the realization $\{z_{i}\}_{i \geq 1}$ of $\{Z_{i}\}_{i \geq 1}$ and assumes that the elements of $W^{(n)}= (W_{1}',...,W_{n}')'$ satisfy $W_{i}|p_{W|Z} \overset{ind}{\sim} P_{W|z_{i}}$ for $i=1,...,n$, where $p_{W|Z}$ is a conditional probability density function, and $P_{W|z_{i}}$ is the probability distribution attached to $p_{W|Z}(\cdot|z_{i})$. Since the model parameters are conditional densities $p_{W|Z}$, the \textit{prior} $\Pi$ is a (conditional) probability measure defined over the space $\mathcal{P}_{W|Z}$ of conditional densities (equipped with a $\sigma$-algebra $\mathscr{P}_{W|Z}$).\footnote{The qualifier `conditional' is used because implicitly $\Pi$ also defined conditional on $\{Z_{i}\}_{i \geq 1}$.} There are no parametric restrictions on $\mathcal{P}_{W|Z}$, so $\Pi$ should be thought of as a probability measure over an infinite-dimensional space of conditional densities.
\begin{assumption}\label{as:measurability}
For each $n \geq 1$ and almost every fixed realization $\{z_{i}\}_{i \geq 1}$ of $\{Z_{i}\}_{i \geq 1}$, the mapping $(w^{(n)},p_{W|Z}) \mapsto \prod_{i=1}^{n}p_{W|Z}(w_{i}|z_{i})$ is measurable on $(\mathcal{W}^{n} \times \mathcal{P}_{W|Z}, \mathscr{W}^{\otimes n} \otimes \mathscr{P}_{W|Z})$, where $\mathscr{W}$ is a $\sigma$-algebra over $\mathcal{W}$, and $\mathscr{W}^{\otimes n}$ is the corresponding product $\sigma$-algebra.
\end{assumption}
\begin{assumption}\label{as:evidence}
Let $p_{0,W|Z}$ be the true conditional density. For each $ n \geq 1$ and almost every fixed realization $\{z_{i}\}_{i \geq 1}$ of $\{Z_{i}\}_{i \geq 1}$, $P_{0,W|Z}^{(n)}(\int \prod_{i=1}^{n}p_{W|Z}(W_{i}|z_{i})d\Pi(p_{W|Z}) \in (0,\infty)) = 1$, where $P_{0,W|Z}^{(n)} = \bigotimes_{i=1}^{n}P_{0,W|z_{i}}$ (and $P_{0,W|z_{i}}$ is the distribution attached to $p_{0,W|Z}(\cdot|z_{i})$).
\end{assumption}
Assumptions \ref{as:measurability} and \ref{as:evidence} are standard regularity conditions that imply that the conditional distribution $\Pi(p_{W|Z}\in \cdot |W^{(n)})$ of $p_{W|Z}$ given $W^{(n)}$ (known as the \textit{posterior}) has a version satisfying Bayes theorem,\footnote{See, for example, Section 1.3 of \cite{ghosal2017fundamentals} for similar regularity conditions.}
\begin{align*}
    \Pi\left(p_{W|Z} \in A \middle |W^{(n)}\right) = \frac{\int_{A}L_{n}(p_{W|Z})d\Pi(p_{W|Z})}{\int L_{n}(p_{W|Z})d\Pi(p_{W|Z})}, \ \ A \in \mathscr{P}_{W|Z}, \ L_{n}(p_{W|Z}) = \prod_{i=1}^{n}p_{W|Z}(W_{i}|z_{i}).
\end{align*}
Since $\Pi(p_{W|Z} \in \cdot | W^{(n)})$ contains an inference theory for (transformations of) $p_{W|Z}$, inference for $\gamma$ is obtained upon formalizing the estimand $\argmin_{\gamma \in \Gamma}||E_{P_{W|Z}}[g(W,Z,\gamma)|Z]||^{2}$.

The estimand is an iterated minimum distance estimand. For a given conditional density $p_{W|Z}$, let $m(z_{i},\gamma) = E_{P_{W|Z}}[g(W,Z,\gamma)|Z =z_{i}]$ be the $d_{g}\times 1$ vector of conditional moments at $z_{i}$, let $\Sigma(z_{i},\gamma) = E_{P_{W|Z}}[g(W,Z,\gamma)g(W,Z,\gamma)'|Z=z_{i}]$ be the $d_{g}\times d_{g}$ matrix of conditional second moments at $z_{i}$, and, for $\tilde{\gamma} \in \Gamma$ fixed, define a conditional estimand $$q_{n}(\tilde{\gamma},p_{W|Z}) = \argmin_{\gamma \in \Gamma}Q_{n}(\gamma,\tilde{\gamma},p_{W|Z}), $$ where\footnote{The terminology `conditional estimand' follows \cite{abadie2014inference}.}
\begin{align}
Q_{n}(\gamma,\tilde{\gamma},p_{W|Z}) = \frac{1}{n}\sum_{i=1}^{n}m(z_{i},\gamma)'\Sigma^{-1}(z_{i},\tilde{\gamma})m(z_{i},\gamma) .   \label{eq:criterionfunctionsec2}
\end{align} 
To eliminate dependence on $\tilde{\gamma}$, I adopt a similar technique to \cite{hansen2021inference} and consider an iterated estimand defined by the recursive relation
\begin{align}\label{eq:limitfp}
    \gamma_{n} = \lim_{r\rightarrow \infty}\gamma_{n,r}, \quad \gamma_{n,r} = q(\gamma_{n,r-1},p_{W|Z}), \quad r \geq 1,
\end{align}
where $\gamma_{n,0}$ is some element of $\Gamma$ (e.g., the identity-weighted estimand). Notice that $\gamma_{n}$ does not require converting the conditional moments into unconditional moments. Further, since $\gamma_{n}$ satisfies the first-order conditions $n^{-1}\sum_{i=1}^{n}M(z_{i},\gamma_{n})'\Sigma^{-1}(z_{i},\gamma_{n})m(z_{i},\gamma_{n})=0$, it implicitly plugs in an estimate $M(z_{i},\gamma_{n})' \Sigma^{-1}(z_{i},\gamma_{n})$ of the \cite{chamberlain1987asymptotic} optimal instruments, where $M(z_{i},\gamma)$ is the $d_{g} \times d_{\gamma}$ Jacobian of $m(z_{i},\gamma)$ in $\gamma$. This is important for the efficiency guarantees in Section \ref{sec:BVM}.
\begin{assumption}\label{as:uemfp}
For almost every fixed realization $\{z_{i}\}_{i \geq 1}$ of $\{Z_{i}\}_{i \geq 1}$ and for each $n \geq 1$, the limit $\gamma_{n}$ exists for each $p_{W|Z} \in \mathcal{P}_{W|Z}$ and $p_{W|Z} \mapsto \gamma_{n}$ is  $\mathscr{P}_{W|Z}$-measurable.
\end{assumption}
Assumption \ref{as:uemfp} guarantees that $\Pi(p_{W|Z} \in \cdot |W^{(n)})$ leads to a well-defined marginal posterior for $\gamma_{n}$ given by the pushforward measure $\Pi(\gamma_{n} \in \cdot |W^{(n)}) = \Pi(p_{W|Z} \in \cdot|W^{(n)}) \circ \gamma_{n}^{-1}$.\footnote{Two points. First, the pushforward satisfies $(\Pi(p_{W|Z} \in \cdot|W^{(n)}) \circ \gamma_{n}^{-1})(A) = \Pi(p_{W|Z}: \gamma_{n} \in A|W^{(n)})$ for events $A$. Second, Appendix \ref{ap:measurability} contains sufficient conditions for Assumption \ref{as:uemfp} that may be of independent interest.} This formalizes the posterior for $\gamma$. Like $\Pi(p_{W|Z} \in \cdot | W^{(n)})$, it leads to inference about $\gamma_{n}$ and any function $f(\gamma_{n})$. For example, in the leading case where $f(\gamma)$ is scalar, a point estimate can be obtained using the posterior median $c_{n,f}(0.5)$, while uncertainty can be quantified using a $(1-\alpha)$-equitailed probability interval $CS_{n}(1-\alpha) = [c_{n,f}(\alpha/2),c_{n,f}(1-\alpha/2)]$ (a type of credible set), where $\alpha \in (0,1/2)$, and $c_{n,f}(q)$, $q \in (0,1)$, is the $q$-quantile of $\Pi(f(\gamma_{n}) \in \cdot | W^{(n)})$. Section \ref{sec:BVM} establishes the asymptotic equivalence of these Bayes estimators and frequentist estimators based on the \cite{chamberlain1987asymptotic} optimal instruments.

\begin{example}\label{ex:linearIV}
Recall that linear IV sets $W = (Y,D')'$, $Z = (X',A')'$, and $g(W,Z,\gamma) = Y-\gamma_{0}-\gamma_{1}'D-\gamma_{2}'X$. In such case, $\gamma = (\gamma_{0},\gamma_{1}',\gamma_{2}')'$, $m(z,\gamma) = m_{y}(z)-M(z)'\gamma$, where $m_{y}(z) = E_{P_{W|Z}}[Y|Z=z]$ and $M(z) = (1,m_{d}(z),x)'$ with $m_{d}(z) = E_{P_{W|Z}}[D|Z=z]$, and $\Sigma(z,\gamma) = E_{P_{W|Z}}[g^{2}(W,Z,\gamma)|Z=z]$. Substituting $m(z,\gamma)$ and $\Sigma(z,\tilde{\gamma})$ into (\ref{eq:criterionfunctionsec2}),
\begin{align}\label{eq:IVminimizer}
q_{n}(\tilde{\gamma},p_{W|Z}) = \left(\frac{1}{n}\sum_{i=1}^{n}\Sigma^{-1}(z_{i},\tilde{\gamma})M(z_{i})M(z_{i})' \right)^{-1}\frac{1}{n}\sum_{i=1}^{n}\Sigma^{-1}(z_{i},\tilde{\gamma})M(z_{i})m_{y}(z_{i})   
\end{align}
Since $p_{W|Z}$ entirely determines $\{q_{n}(\tilde{\gamma},p_{W|Z}): \tilde{\gamma} \in \Gamma\}$, a posterior for $p_{W|Z}$ implies a marginal posterior for $\{q_{n}(\tilde{\gamma},p_{W|Z}):\tilde{\gamma} \in \Gamma\}$, and, under Assumption \ref{as:uemfp}, leads to a posterior for $\gamma_{n}$. An example of $f$ is $f(\gamma_{n}) = \gamma_{n,1}$, the subvector of coefficients on $D$.
\end{example}
\begin{remark}[Other Minimum Distance Estimands]
Although this paper focuses on the iterated estimand $\gamma_{n}$, the `two-step' estimand $\gamma_{n,1} = q_{n}(\gamma_{n,0},p_{W|Z})$ and continuous updating estimand $\gamma_{n,CU} = \argmin_{\gamma \in \Gamma} Q_{n}(\gamma,\gamma,p_{W|Z})$ can also be viewed as deterministic transformations of $p_{W|Z}$. Consequently, $\Pi(p_{W|Z} \in \cdot | W^{(n)})$ leads to a joint posterior for $(\gamma_{n},\gamma_{n,1},\gamma_{n,CU})$. I leave analysis of this joint posterior as future work.
\end{remark}
\begin{remark}[Prior on the Structural Parameter]
The marginal prior $\Pi_{\Gamma}$ for $\gamma$, given by $\Pi_{\Gamma} = \Pi \circ \gamma_{n}^{-1}$, may not agree with a researcher's prior $\tilde{\Pi}_{\Gamma}$ about $\gamma$. Under regularity conditions, a marginal prior $\tilde{\Pi}_{\Gamma}$ can be incorporated using techniques from \cite{dunson2015marginally}. 
\end{remark}
\begin{remark}[Extensions]
The posterior $\Pi(p_{W|Z} \in \cdot | W^{(n)})$ can be used for other inference problems related to the conditional moment equalities (i.e., beyond inference for $\gamma_{n}$). See Section \ref{sec:conclusion} for several examples that could each form the basis for future research.
\end{remark}
\subsection{A Flexible Class of Priors}\label{sec:implementation}
I present a flexible class of priors for $p_{W|Z}$. The priors are based on logistic transformations of Gaussian processes (GPs), and are a leading class of priors in the Bayesian nonparametrics literature \citep{tokdar2007towards,TOKDAR200734,vaart2008rates,tokdar2010bayesian,vehtari2014laplace}. Specifically, the conditional density of $W$ given $Z$ is modeled as an infinite-dimensional exponential family 
\begin{align*}
p_{\theta,B}(w|z)= f_{\theta}(w|z)\frac{\exp(B(F_{\theta,z}(w),F_{0}(z)))}{\int_{[0,1]^{d_{w}}}\exp(B(u,F_{0}(z)))du},    
\end{align*}
where $f_{\theta}(\cdot|z)$, $\theta \in \Theta \subseteq \mathbb{R}^{d_{\theta}}$ with $d_{\theta} < \infty$, is a conditional density, $F_{\theta,z}(\cdot)$ is the corresponding $d_{w}\times 1$ vector of conditional cumulative distribution functions, $F_{0}(\cdot)$ is a known invertible mapping taking values in $[0,1]^{d_{z}}$, and $B: [0,1]^{d} \rightarrow \mathbb{R}$ is a function with $d = d_{w}+ d_{z}$.\footnote{Let $f_{\theta,j}(w_{j}|w_{1},...,w_{j-1},z)$ denote the density of $W_{j}$ given $\{W_{j'}\}_{j'=1}^{j-1}$ and $Z$ under $f_{\theta}(w|z)$. The elements of $F_{\theta,z}(w)=(F_{\theta,1,z}(w),...,F_{\theta,d_{w},z}(w))$ satisfy $F_{\theta,j,z}(w) = \int_{-\infty}^{w_{j}} f_{\theta,j}(t|w_{1},...,w_{j-1},z)dt$. Using this and the change of variables $u=F_{\theta,z}(w)$, one can show $\int_{\mathcal{W}}p_{\theta,B}(w|z)dw = 1$ for each $z$.} Since $(\theta,B)$ determines $p_{\theta,B}$, a prior for $p_{\theta,B}$ is implied by a prior $\Pi$ for $(\theta,B)$. I set $\Pi = \Pi_{\Theta} \otimes \Pi_{\mathcal{B}}$, where $\Pi_{\Theta}$ is a probability distribution over $\Theta$, and $\Pi_{\mathcal{B}}$ is the probability law of a centered GP. To define the latter, let $(C([0,1]^{d},\mathbb{R}),||\cdot||_{\infty})$ be the space of continuous $h: [0,1]^{d} \rightarrow \mathbb{R}$ with $||h||_{\infty} = \sup_{t \in [0,1]^{d}}|h(t)|$.
\begin{definition}\label{def:GP}
A centered GP in $(C([0,1]^{d},\mathbb{R}),||\cdot||_{\infty})$ is a continuous stochastic process $\{B(t):t \in [0,1]^{d}\}$ such that, for every $t_{1},...,t_{k} \in [0,1]^{d}$ and $k \geq 1$, $(B(t_{1}),...,B(t_{k}))'$ follows a mean-zero Gaussian distribution with covariance matrix $(\kappa(t_{j},t_{l}))_{j,l=1}^{k}$. The symmetric, nonnegative-definite $\kappa: [0,1]^{d} \times [0,1]^{d} \rightarrow \mathbb{R}$ is called the covariance function.\footnote{The function $\kappa$ being symmetric and nonnegative-definite means that, for any $t_{1},...,t_{k} \in [0,1]^{d}$, the matrix $(\kappa(t_{j},t_{l}))_{j,l=1}^{k}$ is symmetric and positive-semidefinite, thereby making it suitable for covariance matrices.}
\end{definition}

Logistically transformed GPs are flexible. Fixing $\theta$ and imposing $z \in [0,1]^{d_{z}}$ for simplicity, $p_{0,W|Z}$ is in the support of the prior if $\Pi_{\mathcal{B}}(||B-b_{0,\theta}||_{\infty} < \delta) > 0$ for each $\delta > 0$, where $b_{0,\theta}(t) = \log p_{0,W|Z}(F_{\theta,z}^{-1}(u)|z) - \log f_{\theta}(F_{\theta,z}^{-1}(u)|z)$ for $t = (u',z')' \in [0,1]^{d}$ is $\log(p_{0,W|Z}/f_{\theta})$ with $w$ transformed to be defined on $[0,1]^{d_{w}}$.\footnote{Two comments. First, the support claim holds because the Kullback-Leibler support of the prior for $p_{\theta,B}$, which is the standard definition of the support of a nonparametric prior, is determined by the uniform support of $\Pi_{\mathcal{B}}$ (Lemma 3.1 in \cite{vaart2008rates}). Second, to allow for $z \in \mathcal{Z}\nsubseteq [0,1]^{d_{z}}$, replace $z$ with $F_{0}^{-1}(v)$ where $v \in [0,1]^{d_{z}}$.} For common GPs, this condition is typically satisfied if $b_{0,\theta}$ is continuous; however, to ensure certain frequentist properties for statistical functionals (i.e., $\gamma_{n}$), continuity of $b_{0,\theta}$ is often strengthened to H\"{o}lder or Sobolev-type smoothness restriction on $b_{0,\theta}$.\footnote{The H\"{o}lder space $C^{\alpha}([0,1]^{d},\mathbb{R})$, $\alpha > 0$, is the set of functions $f: [0,1]^{d} \rightarrow \mathbb{R}$ that are $\lfloor \alpha \rfloor$-times differentiable with bounded derivatives, and, additionally, the $\lfloor \alpha \rfloor$th derivative is $(\alpha-\lfloor \alpha \rfloor)$-H\"{o}lder continuous. It is equipped with norm $||f||_{\alpha} = \max_{|k| \leq \lfloor \alpha \rfloor}||D^{k}f||_{\infty}+ \max_{|k| = \lfloor \alpha \rfloor}\sup_{t,s \in [0,1]^{d}: t \neq s}\frac{|(D^{k}f)(t)-(D^{k}f)(s)|}{||t-s||_{2}^{\alpha-\lfloor\alpha \rfloor}}$, where $k=(k_{1},...,k_{d})'$ is a vector of nonnegative integers, $|k| = \sum_{j=1}^{d}k_{j}$, $D^{k}$ is the differential operator, and $||\cdot||_{2}$ is the Euclidean norm. The Sobolev space $S^{\alpha}([0,1]^{d},\mathbb{R})$, $\alpha > 0$, comprises functions $f:[0,1]^{d} \rightarrow \mathbb{R}$ that are restrictions of functions $f: \mathbb{R}^{d} \rightarrow \mathbb{R}$ with Fourier transforms $\hat{f}$ such that $||f||_{2,2,\alpha}^{2} = \int_{\mathbb{R}^{d}} (1+||\lambda||_{2}^{2})^{\alpha}|\hat{f}(\lambda)|^{2} < \infty$. The Sobolev space norm is $||\cdot||_{2,2,\alpha}$. \label{fn:functionspaces}} Similar smoothness conditions are often imposed on first-stage nuisance parameters in frequentist estimation of conditional moment equalities (e.g., \cite{newey1990efficient,NEWEY1993419}, \cite{ai2003efficient}, \cite{kankanala2025generalized}).
\begin{example}\label{ex:matern}
A centered GP belongs to the Mat\'{e}rn class if the covariance function satisfies $\kappa(t,s) = \int_{\mathbb{R}^{d}}\exp(i\lambda'(t-s))\mu(\lambda)d\lambda$ for $t,s \in [0,1]^{d}$, where $\mu(\lambda) = (1+||\lambda||_{2}^{2})^{-\alpha-d/2}$, $\alpha > 0$, and, here, $i$ is the imaginary unit. The hyperparameter $\alpha$ indexes smoothness because the sample paths $B$ take values in the H\"{o}lder spaces $ C^{a}([0,1]^{d},\mathbb{R})$ for any $a< \alpha$ \citep{van2011information}. In Section \ref{sec:example}, I show that if $b_{0,\theta}$ is in the intersection of a H\"{o}lder space $C^{\alpha_{0}}([0,1]^{d},\mathbb{R})$ and Sobolev space $S^{\alpha_{0}}([0,1]^{d},\mathbb{R})$ for some $\alpha_{0} > 0$ (and other conditions hold), then the general BvM for $\gamma_{n}$ from Section \ref{sec:BVM} can be verified for Mat\'{e}rn GPs. In practice, special functions are used to compute $\kappa$ \citep{williams2006gaussian}.
\end{example}

Posterior computation for logistic GP priors proceeds as follows. The MCMC algorithms proposed in \cite{tokdar2007towards} and \cite{tokdar2010bayesian} can be used to obtain $S \geq 1$ draws $\{(\theta^{[s]},B^{[s]})\}_{s=1}^{S}$ from the posterior for $(\theta,B)$ (see Appendix \ref{ap:empirical}). Given a posterior draw $(\theta^{[s]},B^{[s]})$, the conditional moments $m^{[s]}(z_{i},\gamma)$ and $\Sigma^{[s]}(z_{i},\gamma)$ at $z_{i}$, $i=1,...,n$, can be estimated using importance sampling: for each $i$, 1. generate $\{W_{i,j}^{[s]}\}_{j=1}^{J}\overset{iid}{\sim} f_{\theta^{[s]}}(\cdot|z_{i})$, 2. compute importance weights $\{\omega_{i,j}^{[s]}\}_{j=1}^{J}$, 
\begin{align*}
\omega_{i,j}^{[s]} = \frac{\exp(B^{[s]}(F_{\theta^{[s]},z_{i}}(W_{i,j}^{[s]}),F_{0}(z_{i})))}{\sum_{j'=1}^{J}\exp(B^{[s]}(F_{\theta^{[s]},z_{i}}(W_{i,j'}^{[s]}),F_{0}(z_{i})))}, \quad j=1,...,J  
\end{align*}
and 3. compute importance-weighted averages 
\begin{align*}
m_{J}^{[s]}(z_{i},\gamma) = \sum_{j=1}^{J}\omega_{i,j}^{[s]}g(W_{i,j}^{[s]},z_{i},\gamma), \quad \Sigma_{J}^{[s]}(z_{i},\gamma) = \sum_{j=1}^{J}\omega_{i,j}^{[s]}g(W_{i,j}^{[s]},z_{i},\gamma)g(W_{i,j}^{[s]},z_{i},\gamma)'.  
\end{align*}
A draw $\gamma_{n}^{[s]}$ from $\Pi(\gamma_{n} \in \cdot | W^{(n)})$ is then obtained by solving the iterated minimum distance problem, replacing $m(z_{i},\gamma)$ and $\Sigma(z_{i},\tilde{\gamma})$ in (\ref{eq:criterionfunctionsec2}) with $m_{J}^{[s]}(z_{i},\gamma)$ and $\Sigma_{J}^{[s]}(z_{i},\tilde{\gamma})$, respectively. Performing these steps over $\{(\theta^{[s]},B^{[s]})\}_{s=1}^{S}$ leads to a sample $\{\gamma_{n}^{[s]}\}_{s=1}^{S}$ from $\Pi(\gamma_{n} \in \cdot | W^{(n)})$. A sample from $\Pi(f(\gamma_{n}) \in \cdot | W^{(n)})$ is obtained by computing $\{f(\gamma_{n}^{[s]})\}_{s=1}^{S}$ using $\{\gamma_{n}^{[s]}\}_{s=1}^{S}$. Bayes estimators (e.g., posterior medians, credible sets) are computed using the empirical distribution of $\{f(\gamma_{n}^{[s]})\}_{s=1}^{S}$.
\begin{remark}[Other Priors]
There are many other priors for conditional densities, such as covariate-dependent mixtures of normals \citep{norets2010approximation,norets2014posterior,norets2017adaptive}, and covariate-dependent stick-breaking priors \citep{dunson2008kernel,ren2011logistic,rodriguez2011nonparametric}. The modularity of my approach (due to the unrestricted $p_{W|Z}$) means that, in principle, it could be applied to all of these priors (with precise implementation details depending on the class), and verifying the high-level BvM conditions in Section \ref{sec:BVM} may offer some interesting new theoretical insights. I leave this for future research.
\end{remark}
\section{Empirical Illustration: Estimating Welfare Effects}\label{sec:empirical}
This section uses my proposal to estimate welfare effects of price changes. A data-calibrated simulation that compares my approach with alternatives is in Appendix \ref{ap:simulation}.
\subsection{Dataset and Target Parameter}
I use the 2001 National Household Travel Survey gasoline demand dataset from \cite{blundell2012measuring,blundell2017nonparametric}.\footnote{The dataset was obtained from the publicly available replication files of \cite{chen2018optimal}. They can be found using the link \url{https://github.com/timothymchristensen/NPIV}.} This dataset contains household gasoline consumption $Q$ (in gallons), average price $P$ of gasoline (in dollars per gallon) in the household's county, household income $Y$ (in dollars), and distance $A$ (in 1,000 kilometers) of a household's state capital to a major oil platform in the Gulf of Mexico. There are $n=4,812$ households in the dataset.

The parameter of interest is the welfare effect of a gasoline price change (e.g., due to taxation). Household gasoline demand is modeled using a constant elasticity specification 
\begin{align*}
\log Q = \gamma_{0}+ \gamma_{1} \log P + \gamma_{2} \log Y + U,    
\end{align*}
where the error term $U$ satisfies $E[U|\log Y, \log A] = 0$.\footnote{Distance as an excluded IV is proposed in \cite{blundell2012measuring} and is implemented in \cite{chen2018optimal} too. A justification is that distance is a cost-shifter in that it affects prices only through its impact on firm transportation costs.} The welfare effect of  a price change from $p^{0}$ to $p^{1}$ at income $y$ is measured using deadweight loss (DWL), 
\begin{align*}
DWL(p^{0},p^{1},y,\gamma) = S(p^{0},y,\gamma)-(p^{1}-p^{0})q(p^{1},y,\gamma),    
\end{align*}
where $S(p^{0},y,\gamma)$ is consumer surplus \citep{hausman1981exact,hausman1995nonparametric}, and $q(p,y,\gamma) = \exp(\gamma_{0}+\gamma_{1} \log p + \gamma_{2} \log y)$. \cite{walker2026supplement} details DWL calculation.
\subsection{Semiparametric Bayesian Inference}
Let $W = (\log Q ,\log P)'$, $Z = (\log Y, \log A)'$, and $g(W,Z,\gamma) = \log Q - \gamma_{0} - \gamma_{1} \log P -\gamma_{2} \log Y$. Since $E[U|\log Y, \log A] = 0$ implies $E_{P_{W|Z}}[g(W,Z,\gamma)|Z] = 0$, Bayesian inference for $\gamma$ can be obtained via the approach in Section \ref{sec:overview} (in fact, it is an instance of Example \ref{ex:linearIV}). Furthermore, since $DWL(p^{0},p^{1},y,\gamma)$ is a deterministic function of $\gamma$ (as the researcher sets $(p^{0},p^{1},y)$), one automatically obtains a posterior for $DWL(p^{0},p^{1},\gamma)$. This highlights that my framework offers simultaneous inference for model primitives ($\gamma$) \textit{and} counterfactuals ($DWL(p^{0},p^{1},y,\gamma)$).

Figure \ref{fig:dwl} and Table \ref{tab:empirical} report the posterior densities and Bayes estimates, respectively, of the DWL for $p^{0} = \$1.22$, $p^{1} = \$1.44$, and $y \in \{\$42500,\$57500,\$72500\}$.\footnote{These price-income combinations are the same as those reported in \cite{blundell2012measuring}.} The posteriors are based on a logistic GP prior for $p_{W|Z}$ (from Section \ref{sec:implementation}), with a homoskedastic Gaussian linear regression of $W$ on $Z$ for $f_{\theta}$, diffuse priors for the location-scale parameters $\theta$, and a Mat\'{e}rn GP for $B$ with $\alpha = 5/2$.\footnote{Two comments. First, Appendix \ref{ap:empirical} contains more details about the prior. Second, setting $\alpha = 5/2$ follows recommendations in \cite{williams2006gaussian}; \cite{walker2026supplement} shows changing $\alpha$ has a modest effect on the estimates.} Figure \ref{fig:dwl} indicates that the DWL posteriors are approximately symmetric and bell-shaped across all income groups. The estimates in Table \ref{tab:empirical} reveal that DWL as a percentage of tax paid is almost identical across the income groups, however, as a proportion of income, deadweight loss decreases monotonically with income level. These findings are qualitatively similar to the constant elasticity results in \cite{blundell2012measuring}, though my results relax price exogeneity.
\begin{figure}
    \centering
        \caption{Deadweight Loss Posterior Densities}
    \includegraphics[width=\textwidth]{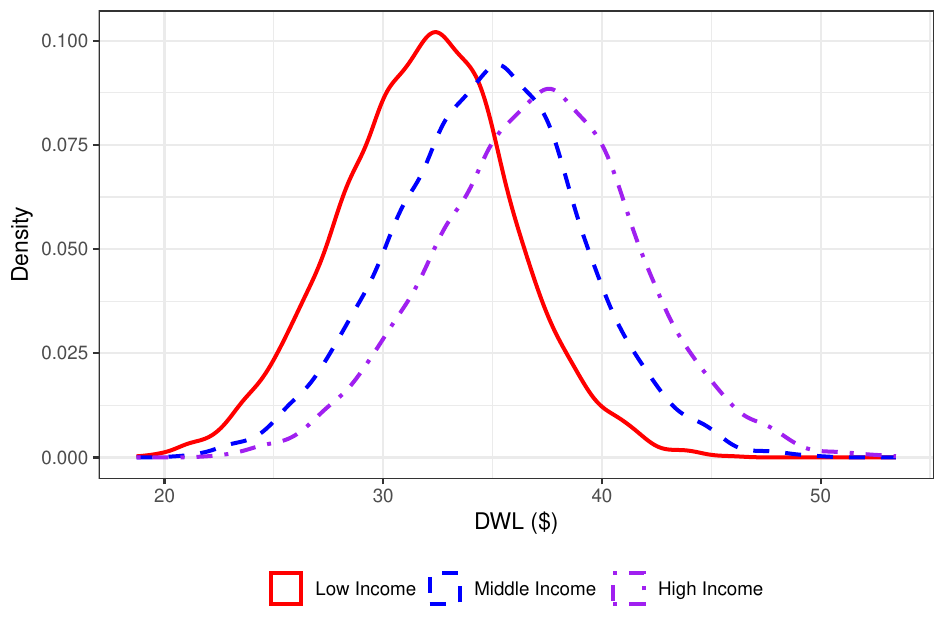}
    \label{fig:dwl}
\end{figure}
\begin{table}
\centering
\caption{Deadweight Loss Estimates}\label{tab:empirical}
\resizebox{\textwidth}{!}{ 
\begin{threeparttable}
\begin{tabular}{l cc cc cc}
\toprule
  & \multicolumn{2}{c}{Low Income} 
  & \multicolumn{2}{c}{Middle Income} 
  & \multicolumn{2}{c}{High Income} \\
\cmidrule(lr){2-3} \cmidrule(lr){4-5} \cmidrule(lr){6-7}
  & Point Est. & Cred. Int. 
  & Point Est. & Cred. Int. 
  & Point Est. & Cred. Int.  \\
\midrule
DWL (\$)        & $32.02$ & $(23.86,39.82)$  & $34.83 $ & $(25.98,43.25)$ & $ 37.15 $ & $(27.75,45.12)$ \\
As a \% of Income    & $0.075$ & $(0.056,0.093)$ & $0.061$ & $(0.045,0.075)$ & $0.051$ & $(0.038,0.064)$ \\
As a \% of Tax Paid  & $13.39$ & $(9.69,17.09)$  & $13.40$ & $(9.71,17.11)$  & $13.42$ & $(9.72,17.13)$  \\
\bottomrule
\end{tabular}
\begin{tablenotes}
\small
\item \textit{Notes:} Results are based on $100{,}000$ iterations of the Metropolis-within-Gibbs outlined in Appendix \ref{ap:postsample}, discarding the first half as burn-in and saving every fifth draw thereafter. Point estimates are posterior medians and interval estimates are 95\% equitailed probability intervals.
\end{tablenotes}
\end{threeparttable}
}
\end{table}
\section{Semiparametric Bernstein-von Mises Theorem}\label{sec:BVM}
This section proves a general BvM, proving that $\Pi(\gamma_{n} \in \cdot|W^{(n)})$ is asymptotically Gaussian and achieves the \cite{chamberlain1987asymptotic} semiparametric efficiency bound.
\subsection{Assumptions about the Data-Generating Process (DGP)}\label{sec:DGPHL}
I start with assumptions about the DGP. For notation, let $||\cdot||_{2}$ be the Euclidean norm, let $||\cdot||_{op}$ be the operator norm (i.e., the maximum singular value), let $\lambda_{min}(A)$ and $\lambda_{max}(A)$ be the minimum and maximum eigenvalues, respectively, of matrix $A$, let $\text{int}(B)$ be the interior of a set $B$, and let $vec(A)$ be the vectorized matrix $A$. A function class is Glivenko-Cantelli if it obeys a uniform strong law of large numbers (USLLN).
\begin{assumption}\label{as:dgp2}
1. $\{(W_{i}',Z_{i}')'\}_{i \geq 1}$ is an i.i.d sequence with common distribution $P_{0,WZ} = P_{0,W|Z} \otimes P_{0,Z}$, where $P_{0,W|Z}$ is the conditional distribution of $W$ given $Z$, and $P_{0,Z}$ is the marginal distribution of $Z$; 2. $P_{0,W|Z}$ has a density $p_{0,W|Z}$.
\end{assumption}
\begin{assumption}\label{as:correctspec} 1. $\Gamma$ is a compact subset of $\mathbb{R}^{d_{\gamma}}$, $d_{\gamma} < \infty$; 2. there is a unique $\gamma_{0} \in \text{int}(\Gamma)$ such that $m_{0}(z,\gamma_{0})= 0$ for every $z \in \mathcal{Z}$, where $m_{0}(z,\gamma)$ denotes $m(z,\gamma)$ at $P_{0,W|Z}$.
\end{assumption}
\begin{assumption}\label{as:moments}
1. $\sup_{(z,\gamma) \in \mathcal{Z}\times \Gamma}||m_{0}(z,\gamma)||_{2} < \infty$; 2. $m_{0}(z,\gamma)$ is twice continuously differentiable in $\gamma$ in a neighborhood $\Gamma_{0}$ of $\gamma_{0}$ for each $z \in \mathcal{Z}$, $\sup_{(z,\gamma) \in \mathcal{Z}\times \Gamma_{0}}||M_{0}(z,\gamma)||_{op}< \infty $, and $\sup_{(z,\gamma) \in \mathcal{Z}\times \Gamma_{0}}||\partial vec(M_{0}(z,\gamma))/\partial \gamma'||_{op} < \infty$, where $M_{0}(z,\gamma)$ is the Jacobian of $m_{0}(z,\gamma)$ in $\gamma$.
\end{assumption}
\begin{assumption}\label{as:covariance}
Let $\Sigma_{0}(z,\gamma)$ denote $\Sigma(z,\gamma)$ at $P_{0,W|Z}$. 1. there exists $0 < \underline{\lambda} < \overline{\lambda}< \infty$ such that $\underline{\lambda} \leq \inf_{(z,\gamma) \in \mathcal{Z}\times \Gamma}\lambda_{min}(\Sigma_{0}(z,\gamma))\leq \sup_{(z,\gamma) \in \mathcal{Z} \times \Gamma} \lambda_{max}(\Sigma_{0}(z,\gamma)) \leq \overline{\lambda}$; 2. $\Sigma_{0}(z,\gamma)$ is continuously differentiable in $\gamma$ in a neighborhood $\Gamma_{0}$ of $\gamma_{0}$ for each $z \in \mathcal{Z}$ with Jacobian satisfying $\sup_{(z,\gamma) \in \mathcal{Z}\times \Gamma_{0}}||\partial vec(\Sigma_{0}(z,\gamma))/\partial \gamma'||_{op} < \infty$.
\end{assumption}
\begin{assumption}\label{as:criterion}
1. the class $\{m_{0}(\cdot,\gamma)'\Sigma_{0}^{-1}(\cdot,\tilde{\gamma})m_{0}(\cdot,\gamma):(\gamma,\tilde{\gamma}) \in \Gamma \times \Gamma\}$ is $P_{0,Z}$-Glivenko-Cantelli; 2. for every $\xi>0$, $\inf_{\gamma \in \mathbb{R}^{d_{\gamma}}:||\gamma-\gamma_{0}||_{2} \geq \xi} E_{P_{0,Z}}[||m_{0}(Z,\gamma)||_{2}^{2}]> 0$.
\end{assumption}
\begin{assumption}\label{as:localident}
Let $V_{0}(Z,\tilde{\gamma}) = M_{0}(Z,\gamma_{0})'\Sigma_{0}^{-1}(Z,\tilde{\gamma})M_{0}(Z,\gamma_{0})$. 1. $\{V_{0}(\cdot,\tilde{\gamma}): \tilde{\gamma} \in \Gamma\}$ is $P_{0,Z}$-Glivenko-Cantelli; 2. $E_{P_{0,Z}}[M_{0}(Z,\gamma_{0})'M_{0}(Z,\gamma_{0})]$ is positive definite.
\end{assumption}
Assumptions \ref{as:dgp2}--\ref{as:localident} are similar restrictions to those encountered in the optimal instruments literature. Assumption \ref{as:dgp2} imposes the same sampling restriction as \cite{chamberlain1987asymptotic}, and, relative to Section \ref{sec:overview}, is a restriction on $\{Z_{i}\}_{i \geq 1}$.
Assumption \ref{as:correctspec} imposes correct specification, requires that $\gamma_{0}$ be point identified, and that $\gamma_{0}$ belongs to the interior of a compact $\Gamma \subseteq \mathbb{R}^{d_{\gamma}}$. Misspecification is discussed in Section \ref{sec:conclusion}. Assumption \ref{as:moments} imposes some smoothness restrictions on $\gamma \mapsto m_{0}(z,\gamma)$. It accommodates the differentiability conditions on $g(W,Z,\gamma)$ that are routinely imposed in the optimal instruments literature (e.g., \cite{chamberlain1987asymptotic}, \cite{newey1990efficient,NEWEY1993419}, \cite{donald2003empirical}, and \cite{chib2022bayesian}), while allowing settings where $g(W,Z,\gamma)$ is nondifferentiable but $m_{0}(Z,\gamma)$ is differentiable (e.g., quantile IV \citep{chernozhukov2005iv,chernozhukov2006instrumental}). Assumption \ref{as:covariance} imposes a compactness restriction on $\{\Sigma_{0}(z,\gamma): (z,\gamma) \in \mathcal{Z} \times \Gamma\}$ and a mild smoothness requirement for $\gamma \mapsto \Sigma_{0}(z,\gamma)$. Assumption \ref{as:criterion}.1 requires $Q_{n}(\gamma,\tilde{\gamma},p_{0,W|Z})$ obeys a USLLN over $\Gamma \times \Gamma$, while Assumption \ref{as:criterion}.2, combined with Assumption \ref{as:covariance}.1, implies that the criterion $Q_{P_{0,Z}}(\gamma,\tilde{\gamma},p_{0,W|Z}) := E_{P_{0,Z}}[m_{0}(Z,\gamma)'\Sigma_{0}^{-1}(Z,\tilde{\gamma})m_{0}(Z,\gamma)]$ has a uniformly well-separated minimum at $\gamma_{0}$, an important condition for consistent estimation of $\gamma_{0}$. Assumption \ref{as:localident}.1 is a USLLN condition for the class $\{V_{0}(\cdot,\tilde{\gamma}): \tilde{\gamma} \in \Gamma\}$, and Assumption \ref{as:localident}.2 is a strong identification condition that, combined with Assumption \ref{as:covariance}.1, implies $V_{0} :=E_{P_{0,Z}}[V_{0}(Z,\gamma_{0})]$ is positive definite, so the \cite{chamberlain1987asymptotic} efficiency bound, $V_{0}^{-1}$, is finite.
\subsection{Assumptions about the Prior}
The next assumption concerns $\Pi(p_{W|Z} \in \cdot |W^{(n)})$. For notation, let $||f||_{n,2}^{2} = n^{-1}\sum_{i=1}^{n}||f(z_{i})||^{2}$ and $||f||_{n,\infty} = \max_{1 \leq i \leq n }||f(z_{i})||$, where $||\cdot||$ is $||\cdot||_{2}$ if $f(z)$ is a vector and is $||\cdot||_{op}$ if $f(z)$ is a matrix, let $$\tilde{\chi}_{0}(W,Z) = -V_{0}^{-1}M_{0}(Z,\gamma_{0})'\Sigma_{0}^{-1}(Z,\gamma_{0})g(W,Z,\gamma_{0}) $$ be the \cite{chamberlain1987asymptotic} efficient influence function (EIF) and note that $V_{0}^{-1}$ also satisfies $V_{0}^{-1} = E_{P_{0,WZ}}[\tilde{\chi}_{0}(W,Z)\tilde{\chi}_{0}(W,Z)']$, let $\overset{P_{0,W|Z}^{(n)}}{\longrightarrow}$ denote convergence in probability under $P_{0,W|Z}^{(n)}=\bigotimes_{i=1}^{n}P_{0,W|z_{i}}$, let $P_{0,Z}^{\infty}$ be the probability law of the i.i.d sequence $\{Z_{i}\}_{i \geq 1}$ (i.e., $P_{0,Z}^{\infty} = \bigotimes_{i \geq 1}P_{0,Z}$), and sequences $\{x_{n}\}_{n \geq 1}$ and $\{y_{n}\}_{n \geq 1}$ satisfy $x_{n} = o(y_{n})$ if $x_{n}/y_{n} \rightarrow 0$ as $n\rightarrow \infty$ and $x_{n} = O(y_{n})$ if there is a constant $C > 0$ such that $|x_{n}/y_{n}| \leq C$ for large $n$.
\begin{assumption}\label{as:concentration}
For $P_{0,Z}^{\infty}$-almost every fixed realization $\{z_{i}\}_{i \geq 1}$ of $\{Z_{i}\}_{i \geq 1}$, there exists a sequence of sets $\{\tilde{\mathcal{P}}_{n,W|Z}\}_{n\geq 1}$ such that
\begin{enumerate}
    \item $\Pi(p_{W|Z} \in \tilde{\mathcal{P}}_{n,W|Z} |W^{(n)}) \longrightarrow 1$ in $P_{0,W|Z}^{(n)}$-probability as $n\rightarrow \infty$
    \item For each $p_{W|Z} \in \tilde{\mathcal{P}}_{n,W|Z}$,
    \begin{enumerate}
     \item $m(z_{i},\gamma)$, $i=1,...,n$, is twice continuously differentiable in $\gamma$ over $\Gamma_{0}$
     \item $\Sigma(z_{i},\gamma)$, $i=1,...,n$, is once continuously differentiable in $\gamma$ over $\Gamma_{0}$,
    \end{enumerate}
    where $\Gamma_{0}$ is the neighborhood of $\gamma_{0}$ from Assumption \ref{as:moments}.
    \item The following holds:
    \begin{enumerate}
\item $\sup_{(\gamma,p_{W|Z}) \in \Gamma \times \tilde{\mathcal{P}}_{n,W|Z}}||m(\cdot,\gamma)-m_{0}(\cdot,\gamma)||_{n,2} = o(n^{-\frac{1}{4}})$ \item $\sup_{(\gamma,p_{W|Z}) \in \Gamma_{0} \times \tilde{\mathcal{P}}_{n,W|Z}}||M(\cdot,\gamma)-M_{0}(\cdot,\gamma)||_{n,2} = o(n^{-\frac{1}{4}})$ \item $\sup_{(\gamma,p_{W|Z}) \in \Gamma \times \tilde{\mathcal{P}}_{n,W|Z}}||\Sigma(\cdot,\gamma)-\Sigma_{0}(\cdot,\gamma)||_{n,\infty} = o(n^{-\frac{1}{4}})$ \item $\sup_{(\gamma,p_{W|Z}) \in \Gamma_{0} \times \tilde{\mathcal{P}}_{n,W|Z}}||\partial \text{vec}(M(\cdot,\gamma)')/\partial \gamma'||_{n,2} = O(1)$ \item $\sup_{(\gamma,p_{W|Z}) \in \Gamma_{0} \times \tilde{\mathcal{P}}_{n,W|Z}}||\partial\text{vec}(\Sigma(\cdot,\gamma))/\partial \gamma'||_{n,\infty}= O(1)$ \item $\sup_{p_{W|Z} \in  \tilde{\mathcal{P}}_{n,W|Z}}\max_{1 \leq i \leq n}E_{P_{W|Z}}[\exp(|t'\tilde{\chi}_{0}(W,Z)|)|Z=z_{i}] = O(1)$
\end{enumerate}
for any $t$ in a neighborhood of zero.\footnote{Use of $o(\cdot)$ and $O(\cdot)$ is appropriate because, conditional on $\{Z_{i}\}_{i \geq 1}$, these objects are nonrandom.}
\end{enumerate}
\end{assumption}
Assumption \ref{as:concentration} states that, conditional on $\{Z_{i}\}_{i \geq 1}$, there are sets $\{\tilde{\mathcal{P}}_{n,W|Z}\}_{n \geq 1}$ on which the posterior concentrates as $n\rightarrow \infty$, and these sets are structured enough to ensure certain smoothness (in $\gamma$) and convergence guarantees for $m$, $M$, and $\Sigma$. The assumption is general in that it is stated for an arbitrary prior $\Pi$ for $p_{W|Z}$; I derive examples of $\{\tilde{\mathcal{P}}_{n,W|Z}\}_{n \geq 1}$ for certain GPs in Section \ref{sec:example}. As for the contents, Parts 3(a) and 3(b) state that $m(\cdot,\gamma)$ and $M(\cdot,\gamma)$ converge to their true counterparts $m_{0}(\cdot,\gamma)$ and $M_{0}(\cdot,\gamma)$, respectively, at a rate faster than $n^{-1/4}$ under the empirical $L^{2}$ norm. Part 3(c) states that $\Sigma(\cdot,\gamma)$ converges to $\Sigma_{0}(\cdot,\gamma)$ at a rate faster than $n^{-1/4}$ under the empirical supremum norm, a condition analogous to Assumption 3.4(iii) in \cite{ai2003efficient}. Requiring $o(n^{-1/4})$ nuisance convergence rates is generally considered a mild condition in semiparametric applications. Part 3(d) and 3(e) are boundedness conditions that help show that $\tilde{\gamma} \mapsto q_{n}(\tilde{\gamma},p_{W|Z})$ is a uniform contraction mapping in large samples, a property that helps establish that $\gamma_{n}$ uniformly converges to $\gamma_{0}$ along $\{\tilde{\mathcal{P}}_{n,W|Z}\}_{n \geq 1}$. Part 3(f) requires that the EIF has sufficiently thin tails over $\tilde{\mathcal{P}}_{n,W|Z}$, a condition sufficient to control a remainder in an expansion of $\log L_{n}(p_{W|Z})$.
\subsection{Bernstein-von Mises Theorem}
\subsubsection{Main Result}
Assumptions \ref{as:measurability}--\ref{as:concentration} lead to a Bernstein-von Mises theorem (BvM). Informally, the BvM states that $\gamma_{n}|W^{(n)}\overset{a}{\sim} \mathcal{N}(\hat{\gamma}_{n},\frac{1}{n}V_{0}^{-1})$ for large $n$, where $\hat{\gamma}_{n} = \gamma_{0}+n^{-1}\sum_{i=1}^{n}\tilde{\chi}_{0}(W_{i},z_{i})$ is a hypothetical efficient estimator of $\gamma_{0}$. Consequently, the posterior for $\gamma_{n}$ behaves like a best regular estimator $\hat{\gamma}_{n}$ for large $n$, thereby establishing its frequentist asymptotic optimality.  

The key to the BvM is an asymptotically linear representation of $\gamma_{n}$. Stated as a theorem below, it shows that $\gamma_{n}$ is asymptotically linear in $p_{W|Z}$ with Riesz representer given by $\tilde{\chi}_{0}$. It can be thought of as a Bayesian analogue to the asymptotically linear representation of an efficient estimator, and validates that $\Pi(\gamma_{n} \in \cdot | W^{(n)})$ accounts for the optimal instruments as $n\rightarrow \infty$.
\begin{theorem}\label{thm:representation}
If Assumptions \ref{as:uemfp}, \ref{as:dgp2}, \ref{as:correctspec}, \ref{as:moments}, \ref{as:covariance}, \ref{as:criterion}, \ref{as:localident}, and \ref{as:concentration}.1--\ref{as:concentration}.5 hold, then, for $P_{0,Z}^{\infty}$-almost every fixed realization $\{z_{i}\}_{i \geq 1}$ of $\{Z_{i}\}_{i \geq 1}$,
\begin{align*}
    \sup_{p_{W|Z} \in \tilde{\mathcal{P}}_{n,W|Z}}\left | \left | \sqrt{n}(\gamma_{n}-\gamma_{0}) - \frac{1}{\sqrt{n}}\sum_{i=1}^{n}E_{P_{W|Z}}[\tilde{\chi}_{0}(W,Z)|Z=z_{i}] \right | \right |_{2} =o(1),
\end{align*}
as $n\rightarrow \infty$, where $\{\tilde{\mathcal{P}}_{n,W|Z}\}_{n \geq 1}$ are the sets defined in Assumption \ref{as:concentration}.
\end{theorem}
Theorem \ref{thm:representation} establishes that, asymptotically, $\gamma_{n}$ is an (efficient) linear functional of $p_{W|Z}$. Consequently, a semiparametric BvM for $\gamma_{n}$ can be established by building on \cite{castillo2015bernstein}, who prove BvMs for \textit{unconditional} density functions with i.i.d data.

\begin{theorem}\label{thm:bvm}
Suppose that Assumptions \ref{as:measurability}--\ref{as:concentration} hold and that, for $P_{0,Z}^{\infty}$-almost every fixed realization $\{z_{i}\}_{i \geq 1}$ of $\{Z_{i}\}_{i \geq 1}$, the prior invariance condition
\begin{align}\label{eq:invariancegeneral}
\frac{\int_{\tilde{\mathcal{P}}_{n,W|Z}}L_{n}(p_{t,n,W|Z})d\Pi(p_{W|Z})}{\int_{\tilde{\mathcal{P}}_{n,W|Z}}L_{n}(p_{W|Z}) d\Pi(p_{W|Z})} = 1 + o_{P_{0,W|Z}^{(n)}}(1) 
\end{align}
holds for any $t$ in a neighborhood of zero, where 
\begin{align*}
p_{t,n,W|Z}(w|z)= \frac{p_{W|Z}(w|z)\exp\left(-\frac{1}{\sqrt{n}}t'\tilde{\chi}_{0}(w,z)\right)}{\int_{\mathcal{W}}p_{W|Z}(\tilde{w}|z)\exp\left(-\frac{1}{\sqrt{n}}t'\tilde{\chi}_{0}(\tilde{w},z)\right)d\tilde{w}}.    
\end{align*} Then, for almost every fixed realization $\{z_{i}\}_{i \geq 1}$ of $\{Z_{i}\}_{i \geq 1}$, the structural posterior satisfies
\begin{align*}
     d_{BL}\left(\Pi\left(\sqrt{n}(\gamma_{n}-\hat{\gamma}_{n}) \in \cdot\middle |W^{(n)}\right), \mathcal{N}(0,V_{0}^{-1})\right) \overset{P_{0,W|Z}^{(n)}}{\longrightarrow} 0
\end{align*}
as $n\rightarrow \infty$, where $d_{BL}$ is the bounded Lipschitz metric.\footnote{For any two probability measures $P$ and $Q$ on a measurable space $(\mathcal{X},\mathscr{X})$, the bounded Lipschitz distance is given by $d_{BL}(P,Q) = \sup_{f \in Lip_{1}}|\int f d P - \int f dQ|$, where $Lip_{1}$ is the set of functions $f: \mathcal{X} \rightarrow [-1,1]$ with $\sup_{x \in \mathcal{X}}|f(x)| \leq 1$ and $|f(x)-f(y)| \leq d(x,y)$ for all $x,y \in \mathcal{X}$ with $x \neq y$, where $d$ is a metric over $\mathcal{X}$. Convergence in $d_{BL}$ is equivalent to convergence in distribution.}
\end{theorem}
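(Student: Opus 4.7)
The plan is a Laplace-transform argument adapted from the semiparametric Bernstein--von Mises machinery of Castillo and Rosasco (2015) and Rivoirard and Rosasco (2012). The target is to show that, for every $t$ in a neighborhood of the origin,
\[
E\bigl[\exp\bigl(t'\sqrt n(\gamma_n - \hat\gamma_n)\bigr)\bigm| W^{(n)}\bigr] \xrightarrow{P_{0,W|Z}^{(n)}} \exp\bigl(\tfrac12\, t' V_0^{-1} t\bigr),
\]
the Laplace transform of $\mathcal{N}(0, V_0^{-1})$; tightness of the Gaussian limit then upgrades this to convergence in $d_{BL}$.

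The first step is to linearize $\gamma_n(p_{W|Z})$ as a smooth functional of the conditional density. I would start from the fixed-point first-order condition $n^{-1}\sum_i M(z_i,\gamma_n)'\Sigma^{-1}(z_i,\gamma_n)m(z_i,\gamma_n)=0$, Taylor expand the conditional moment around $\gamma_0$, and replace $M,\Sigma^{-1}$ by $M_0,\Sigma_0^{-1}$ up to $o(n^{-1/4})$ errors uniformly on $\tilde{\mathcal{P}}_{n,W|Z}$ via Assumption \ref{as:concentration} parts 1--3. Assumption \ref{as:concentration} parts 4--5 make $\tilde\gamma \mapsto q_n(\tilde\gamma, p)$ a uniform contraction in large samples (which validates the fixed-point iteration), while Assumptions \ref{as:criterion} and \ref{as:localident} supply the classical consistency ingredients. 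This gives
\[
\gamma_n(p_{W|Z}) - \gamma_0 = -\frac{1}{n}\sum_{i=1}^n E_{P_{W|Z}}\bigl[\tilde\chi_0(W, z_i)\bigm| Z=z_i\bigr] + R_n(p_{W|Z}),
\]
with $\sup_{p\in \tilde{\mathcal{P}}_{n,W|Z}}|R_n(p)| = o(n^{-1/2})$, exhibiting $\gamma_n-\gamma_0$ to leading order as an averaged linear functional of $p_{W|Z}$ whose kernel is the Chamberlain efficient influence function.

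The core is the Laplace-transform/nonparametric LAN computation. Using the perturbation $p_{t,n}(w|z)\propto p(w|z)\exp(-t'\tilde\chi_0(w,z)/\sqrt n)$, a cumulant expansion of $\log L_n(p_{t,n})/L_n(p)$ gives, uniformly on $\tilde{\mathcal{P}}_{n,W|Z}$,
\[
\log\frac{L_n(p_{t,n})}{L_n(p)} = -\frac{t'}{\sqrt n}\sum_{i=1}^n \tilde\chi_0(W_i,z_i) + \sqrt n\,\frac{t'}{n}\sum_{i=1}^n E_{P_{W|Z}}\bigl[\tilde\chi_0\bigm|Z=z_i\bigr] - \frac{1}{2}t'V_0^{-1}t + o_p(1).
\]
The quadratic $-\tfrac12 t'V_0^{-1}t$ arises from the second-order cumulant of $\tilde\chi_0$ averaged over $\{z_i\}$, which for $p$ close to $p_0$ tends to $V_0^{-1}$ via the identity $E_{P_{0,W|Z}}[\tilde\chi_0\tilde\chi_0'|Z=z] = V_0^{-1}V_0(z,\gamma_0)V_0^{-1}$ combined with Assumption \ref{as:localident} part 1, and Assumption \ref{as:concentration} part 6 (exponential moments of $\tilde\chi_0$) is what justifies the cumulant expansion uniformly and bounds the higher-order remainder. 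Substituting the Step-1 linearization into the exponent of the posterior MGF (restricting to $\tilde{\mathcal{P}}_{n,W|Z}$ by the posterior concentration clause of Assumption \ref{as:concentration}) and applying the prior invariance condition \eqref{eq:invariancegeneral} to equate $\int L_n(p_{t,n})d\Pi(p)$ with $\int L_n(p)d\Pi(p)$ up to $o_p(1)$, the data-dependent terms $\sum_i \tilde\chi_0(W_i,z_i)$ cancel against the efficient-estimator centering $\hat\gamma_n$, leaving exactly the Gaussian Laplace transform.

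The main obstacle will be controlling remainders uniformly on $\tilde{\mathcal{P}}_{n,W|Z}$. The $o(n^{-1/4})$ rates in Assumption \ref{as:concentration} are calibrated precisely so that cross-products of nuisance estimation errors are $o(n^{-1/2})$ after multiplication by $\sqrt n$ --- the semiparametric analogue of Neyman orthogonality --- and Assumption \ref{as:concentration} part 6 upgrades the cumulant expansion from pointwise to uniform, which is what allows the posterior-integrated log-likelihood ratio to be replaced by its leading quadratic form rather than merely $p$-by-$p$. The most delicate bookkeeping is arranging the linear-in-$t$ terms to cancel exactly against $\hat\gamma_n$ while the quadratic term recovers $V_0^{-1}$; this requires simultaneously tracking the empirical convergence $n^{-1}\sum_i V_0(z_i,\gamma_0) \to V_0$ (Assumption \ref{as:localident} part 1) together with the uniform convergence $\Sigma(\cdot,\gamma) \to \Sigma_0(\cdot,\gamma)$ under the posterior (Assumption \ref{as:concentration} part 3), and threading the restriction to $\tilde{\mathcal{P}}_{n,W|Z}$ through the denominator of the posterior MGF so that normalization is not disturbed at order $1/\sqrt n$.
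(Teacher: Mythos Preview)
Your proposal is correct and follows essentially the same route as the paper's proof: first an asymptotically linear representation of $\gamma_n$ via the first-order conditions and mean-value expansion (the paper's Lemma~\ref{lem:representation}), then a cumulant/LAN expansion of $\log L_n(p_{t,n})/L_n(p)$ uniformly over $\tilde{\mathcal{P}}_{n,W|Z}$ (the paper's Lemma~\ref{lem:likelihoodexpansion}), and finally combining these in the posterior Laplace transform so that the prior invariance condition delivers $\exp(\tfrac12 t'V_0^{-1}t)$, with the passage to $d_{BL}$ via convergence of Laplace transforms. The only cosmetic discrepancy is the sign in your linearization display, which needs to match the paper's convention for $\hat\gamma_n = \gamma_0 + n^{-1}\sum_i\tilde\chi_0(W_i,z_i)$ so the linear-in-$t$ terms cancel; once you track the Jacobian sign $M_0(z,\gamma_0)=\partial m_0/\partial\gamma'$ consistently this resolves itself.
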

\subsubsection{Comments about the BvM}
Several aspects of Theorem \ref{thm:bvm} should be highlighted. First, Theorem \ref{thm:bvm} implies that some Bayesian point estimators and credible sets are asymptotically efficient. Let $f: \mathbb{R}^{d_{\gamma}} \rightarrow \mathbb{R}$ be continuously differentiable at $\gamma_{0}$ and recall that $c_{n,f}(q)$, $q \in (0,1)$, is the $q$-quantile of $\Pi(f(\gamma_{n}) \in \cdot|W^{(n)})$. Corollary \ref{cor:quantile} states that, conditional on $\{Z_{i}\}_{i \geq 1}$, $c_{n,f}(q)=f(\hat{\gamma}_{n}) + \Phi^{-1}(q)n^{-1/2}\Omega_{0,f}^{1/2}+o_{P_{0,W|Z}^{(n)}}(n^{-1/2})$, where $\Phi(\cdot)$ is the $\mathcal{N}(0,1)$ cumulative distribution function and $\Omega_{0,f} =\frac{\partial f(\gamma_{0})}{\partial \gamma'}V_{0}^{-1}\frac{\partial f(\gamma_{0})'}{\partial \gamma}$ is the asymptotic variance of $f(\hat{\gamma}_{n})$. Consequently, by setting $q= 0.5$, the posterior median $c_{n,f}(0.5)$ is first-order asymptotically equivalent to the efficient estimator $f(\hat{\gamma}_{n})$.\footnote{The restrictions on $f$ imply the delta method preserves efficiency (see Section 25.7 of \cite{vaart_1998}).} Moreover, the equitailed probability interval $CS_{n,f}(1-\alpha)= [c_{n,f}(\alpha/2), c_{n,f}(1-\alpha/2)]$, where $\alpha \in (0,1/2)$, is first-order asymptotically equivalent to an efficient Wald confidence interval $[f(\hat{\gamma}_{n})+\Phi^{-1}(\alpha/2)\sqrt{\Omega_{0,f}/n},f(\hat{\gamma}_{n})-\Phi^{-1}(\alpha/2)\sqrt{\Omega_{0,f}/n}]$, thereby leading to best asymptotic frequentist uncertainty quantification.
\begin{corollary}\label{cor:quantile}
If the assumptions of Theorem \ref{thm:bvm} hold and $f:\mathbb{R}^{d_{\gamma}}\rightarrow \mathbb{R}$ is continuously differentiable at $\gamma_{0}$, then, for any $q \in (0,1)$ and $P_{0,Z}^{\infty}$-almost every $\{z_{i}\}_{i \geq 1}$, $c_{n,f}(q)$ satisfies
\begin{align*}
    c_{n,f}(q) = f(\hat{\gamma}_{n}) + \Phi^{-1}(q)\sqrt{\frac{\Omega_{0,f}}{n}} + o_{P_{0,W|Z}^{(n)}}\left(\frac{1}{\sqrt{n}}\right), \quad \Omega_{0,f} = \frac{\partial f(\gamma_{0})}{\partial \gamma'}V_{0}^{-1}\frac{\partial f(\gamma_{0})'}{\partial \gamma}.
\end{align*}
\end{corollary}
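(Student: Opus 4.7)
The plan is to combine Corollary \ref{cor:unconditional} with a delta-method argument at the posterior level and then invert the resulting weak convergence to obtain convergence of the posterior quantile. Corollary \ref{cor:unconditional} gives $d_{BL}(\Pi_n(\sqrt{n}(\gamma_n - \hat{\gamma}_n) \in \cdot \mid W^{(n)}), \mathcal{N}(0, V_0^{-1})) \overset{P_{0,WZ}^n}{\longrightarrow} 0$, and since $\hat{\gamma}_n - \gamma_0 = n^{-1}\sum_{i=1}^{n}\tilde{\chi}_0(W_i, z_i) \to 0$ in $P_{0,WZ}^n$-probability by the law of large numbers (using that $\tilde{\chi}_0$ has mean zero under correct specification and is integrable by Assumption \ref{as:concentration}.6), this also yields posterior concentration of $\gamma_n$ at $\gamma_0$.

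Next, I would push the Bernstein-von Mises result through $f$ by a mean-value expansion. Continuous differentiability of $f$ at $\gamma_0$ gives
\begin{align*}
\sqrt{n}(f(\gamma_n) - f(\hat{\gamma}_n)) = \bar{J}_n \sqrt{n}(\gamma_n - \hat{\gamma}_n), \qquad \bar{J}_n := \int_0^1 \frac{\partial f(\hat{\gamma}_n + t(\gamma_n - \hat{\gamma}_n))}{\partial \gamma'}\, dt,
\end{align*}
with $\bar{J}_n \to \partial f(\gamma_0)/\partial \gamma'$ because $\hat{\gamma}_n$ converges to $\gamma_0$ in $P_{0,WZ}^n$-probability and $\gamma_n$ concentrates around $\gamma_0$ in the posterior. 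Decomposing $\bar{J}_n \sqrt{n}(\gamma_n - \hat{\gamma}_n) = (\partial f(\gamma_0)/\partial \gamma') \sqrt{n}(\gamma_n - \hat{\gamma}_n) + R_n$, the remainder $R_n$ is posterior $o_P(1)$ because $\sqrt{n}(\gamma_n - \hat{\gamma}_n)$ is posterior-tight (from the BvM) while its coefficient tends to zero. A Slutsky-type argument combined with the continuous mapping theorem (applied to the linear map $x \mapsto (\partial f(\gamma_0)/\partial \gamma') x$) then yields
\begin{align*}
d_{BL}\bigl(\Pi_n(\sqrt{n}(f(\gamma_n) - f(\hat{\gamma}_n)) \in \cdot \mid W^{(n)}), \mathcal{N}(0, \Omega_{0,f})\bigr) \overset{P_{0,WZ}^n}{\longrightarrow} 0.
\end{align*}

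Finally, I would convert this weak convergence into convergence of the posterior quantile. Let $F_n$ be the posterior CDF of $\sqrt{n}(f(\gamma_n) - f(\hat{\gamma}_n))$ and let $G$ be the CDF of $\mathcal{N}(0, \Omega_{0,f})$. Since $G$ is continuous, Polya's theorem upgrades the weak convergence of $F_n$ to $G$ into uniform convergence $\sup_x |F_n(x) - G(x)| \overset{P_{0,WZ}^n}{\longrightarrow} 0$; standard quantile inversion (using strict monotonicity of $G$ when $\Omega_{0,f} > 0$, or degeneracy of the limit when $\Omega_{0,f} = 0$) then delivers $F_n^{-1}(q) \overset{P_{0,WZ}^n}{\longrightarrow} G^{-1}(q) = \sqrt{\Omega_{0,f}}\,\Phi^{-1}(q)$ for each $q \in (0,1)$. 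Since $c_{n,f}(q) = f(\hat{\gamma}_n) + F_n^{-1}(q)/\sqrt{n}$, the expansion in the statement follows.

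The main obstacle is making the delta step rigorous \emph{at the posterior level}: the Taylor remainder $R_n$ must vanish in a manner compatible with weak convergence of the full random posterior, not merely pointwise. This reduces to a Slutsky argument for random probability measures, which is clean once one carefully distinguishes the outer randomness in $W^{(n)}$ (governed by $P_{0,WZ}^n$) from the inner posterior randomness in $\gamma_n$. The remaining steps are classical continuous-mapping and quantile-inversion facts.
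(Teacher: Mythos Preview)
Your proposal is correct and follows essentially the same route as the paper: invoke Corollary~\ref{cor:unconditional}, apply the delta method to obtain a Bernstein--von Mises result for $\sqrt{n}(f(\gamma_n)-f(\hat{\gamma}_n))$, and then invert weak convergence to obtain convergence of the posterior quantile. The paper compresses the delta step into a single citation and handles the quantile inversion via Lemma~21.2 of \cite{vaart_1998} together with equivariance of quantiles under monotone transformations, whereas you spell out the mean-value expansion and the Polya/inversion step explicitly; these are the same argument at different levels of detail.
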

A second aspect is that Theorem \ref{thm:bvm} also holds unconditionally, where stochastic convergence is defined with respect to $P_{0,WZ}$. Corollary \ref{cor:unconditional} formalizes this, and, by extension, implies an unconditional version of Corollary \ref{cor:quantile} (see Remark \ref{rem:uncon}).
\begin{corollary}\label{cor:unconditional}
If the assumptions of Theorem \ref{thm:bvm} holds, then
\begin{align*}
    d_{BL}\left(\Pi\left(\sqrt{n}(\gamma_{n}-\hat{\gamma}_{n}) \in \cdot \middle |W^{(n)}\right), \mathcal{N}\left(0,V_{0}^{-1}\right)\right)\overset{P_{0,WZ}^{n}}{\longrightarrow} 0   
\end{align*}
as $n\rightarrow \infty$, where $P_{0,WZ}^{n}$ is $n$-fold product of $P_{0,WZ}$.
\end{corollary}
A final aspect worth discussing is condition (\ref{eq:invariancegeneral}). These conditions are common in semiparametric BvMs (see \cite{castillo2012semiparametric,castillo2012gaussian}, \cite{rivoirard2012bernstein}, \cite{castillo2015bernstein}, \cite{ray2020semiparametric}, and \cite{monard2021bernstein}, to name a few), and arise from expanding the log-likelihood ($\log L_{n}(p_{W|Z})$) around the least favorable parametric submodel ($p_{t,n,W|Z}$). The name `prior invariance' reflects that its verification typically requires the prior be roughly unchanged under shifts in the direction of the EIF. Indeed, Section \ref{sec:example} verifies (\ref{eq:invariancegeneral}) for GPs by finding a sequence $\{\tilde{g}_{n}\}_{n \geq 1}$ that uniformly approximates the \cite{chamberlain1987asymptotic} optimal residual $\tilde{g}_{0} = -V_{0}\tilde{\chi}_{0}$, with $o(n^{-1/4})$ approximation error, and for which the law $\Pi_{n,t}$ of $p_{n,t,W|Z}\propto p_{W|Z}\exp(-t'V_{0}^{-1}\tilde{g}_{n}/\sqrt{n})$ under $\Pi$ satisfies $\Pi_{n,t} \ll \Pi$ and $d \Pi_{n,t}/d\Pi \rightarrow 1$ uniformly along $\{\tilde{\mathcal{P}}_{n,W|Z}\}_{n \geq 1}$ as $n\rightarrow \infty$. For the Mat\'{e}rn process (i.e., Example \ref{ex:matern}), this restricts the GP sample path smoothness relative to $\tilde{g}_{0}$.
\section{Verifying Assumption \ref{as:concentration} and Prior Invariance with GPs}\label{sec:example}
I provide sufficient conditions for Assumption \ref{as:concentration} and (\ref{eq:invariancegeneral}) for the priors from Section \ref{sec:implementation}.
\subsection{Setup and Assumptions}
Recall $p_{\theta,B}(w|z) \propto f_{\theta}(w|z)\exp(B(F_{\theta,z}(w),F_{0}(z)))$, where $f_{\theta}(\cdot|z)$, $\theta \in \Theta \subseteq \mathbb{R}^{d_{\theta}}$ with $d_{\theta} < \infty$, is a conditional density, $F_{\theta,z}(\cdot)$ is the $d_{w}\times 1$ vector of conditional CDFs attached to $f_{\theta}(\cdot|z)$, $F_{0}: \mathcal{Z} \rightarrow [0,1]^{d_{z}}$ is a known invertible transformation, and $B: T \rightarrow \mathbb{R}$ is a function. A prior for $p_{\theta,B}$ is obtained from a prior $\Pi$ for $(\theta,B)$. Also, recall that $(C^{\alpha}([0,1]^{d},\mathbb{R}),||\cdot||_{\alpha})$ and $(S^{\alpha}([0,1]^{d},\mathbb{R}),||\cdot||_{2,2,\alpha})$ denote H\"{o}lder and Sobolev spaces, respectively, of order $\alpha >0$ (see Footnote \ref{fn:functionspaces} for precise details).
\begin{assumption}\label{as:boundedinternal}
1. $\mathcal{W}\subseteq \mathbb{R}^{d_{w}}$ is compact; 2. $\mathcal{Z} = [0,1]^{d_{z}}$; 3. $p_{0,W|Z}$ is continuous and positive on $\mathcal{W} \times \mathcal{Z}$; 4. $P_{0,Z}$ admits a Lebesgue density $p_{0,Z}$ that is bounded away from zero and infinity.
\end{assumption}
\begin{assumption}\label{as:differentiablemoments}
1. $(w,z,\gamma) \mapsto g_{j}(w,z,\gamma)$ is continuous in all arguments over $\mathcal{W}\times \mathcal{Z} \times \Gamma$ for $j=1,...,d_{g}$; 2. $g_{j}(w,z,\gamma)$, $j=1,...,d_{g}$, is differentiable in $\gamma$ over a neighborhood $\Gamma_{0}$ of $\gamma_{0}$ with first and second derivatives continuous in all arguments over $\mathcal{W} \times \mathcal{Z} \times \Gamma_{0}$.
\end{assumption}
\begin{assumption}\label{as:parametricfamily}
1. $\Theta \subseteq \mathbb{R}^{d_{\theta}}$ is compact; 2. $(w,z,\theta) \mapsto f_{\theta,j}(w_{j}|w_{1},...,w_{j-1},z)$ is continuous in all arguments for all $j=1,...,d_{w}$; 3. $f_{\theta,j}(w_{j}|w_{1},...,w_{j-1},z) > 0$ for all $(w,z,\theta) \in \mathcal{W} \times \mathcal{Z} \times \Theta$ and all $j=1,...,d_{w}$; 4. there is a constant $L>0$ such that $\sup_{(w,z) \in \mathcal{W} \times \mathcal{Z}}|\log f_{\theta_{1}}(w|z) - \log f_{\theta_{2}}(w|z)| \leq L ||\theta_{1}-\theta_{2}||_{2}$ for all $\theta_{1},\theta_{2} \in \Theta$.\footnote{Recall that $f_{\theta,j}(w_{j}|w_{1},...,w_{j-1},z)$ is the density of $W_{j}$ given $\{W_{j'}\}_{j'=1}^{j-1}$ and $Z$ under $f_{\theta}(w|z)$.}
\end{assumption}
\begin{assumption}\label{as:exprior}
$\Pi = \Pi_{\Theta} \otimes \Pi_{\mathcal{B}}$, where 1. $\Pi_{\Theta}$ is a probability distribution over $\Theta$ with a continuous Lebesgue density $\pi_{\Theta}$, and 2. $\Pi_{\mathcal{B}}$ is the law $\tilde{\Pi}_{\mathcal{B}}$ of a centered Gaussian process $B$ in $(C([0,1]^{d},\mathbb{R}),||\cdot||_{\infty})$ conditioned on $\{||B||_{\infty}\leq \bar{B}\}$ for some large $\bar{B} \in (0,\infty)$ with $B \sim \tilde{\Pi}_{\mathcal{B}}$ taking values in $(C^{a}([0,1]^{d},\mathbb{R}),||\cdot||_{a})$ for every $0<a<\alpha$ and $\alpha > d_{z}/2$.
\end{assumption}
Assumptions \ref{as:boundedinternal}--\ref{as:exprior} comprise \textit{sufficient} conditions on the DGP and prior that will be used to verify Assumption \ref{as:concentration}. The main limitation of Assumption \ref{as:boundedinternal} is compactness of $\mathcal{W}$, however it is not overly restrictive relative to the literature on semiparametric BvMs for density functionals \citep{rivoirard2012bernstein,castillo2015bernstein}, and, more broadly, posterior consistency with infinite-dimensional exponential families \citep{scricciolo2006convergence,rivoirard2012posterior}. I expect compact $\mathcal{W}$ can be relaxed (see Footnote \ref{fn:boundedness}). Compactness of $\mathcal{Z}$ is a standard assumption in nonparametric conditional density estimation, and, given such an assumption, setting $\mathcal{Z} = [0,1]^{d_{z}}$ is without loss of generality (and enables setting $F_{0}(z)=z$). Assumption \ref{as:differentiablemoments} is also stronger than the high-level smoothness conditions imposed in Section \ref{sec:BVM}, as it rules out models with nondifferentiable $g(W,Z,\gamma)$. Assumption \ref{as:parametricfamily} imposes regularity conditions on $\{f_{\theta}: \theta \in \Theta\}$ that are satisfied, for example, when $f_{\theta}$ is a Gaussian linear regression (truncated to $\mathcal{W}$), $\mathcal{Z}$ is compact (i.e. Assumption \ref{as:boundedinternal}.2), and $\Theta$ constrains the slope and variance to compact sets, with the variance bounded away from zero. Assumption \ref{as:exprior} is a sample path smoothness restriction that many Gaussian processes satisfy (e.g., the Mat\'{e}rn process from Example \ref{ex:matern}). Uniformly bounding $B$ is a technical device used to control random approximation errors, and, since the magnitude of $\bar{B}$ has no role, it can be viewed as an arbitrarily large but finite number.\footnote{Lemma 5.1 in \cite{van2008reproducing} shows $\tilde{\Pi}_{\mathcal{B}}(||B||_{\infty} \leq \bar{B}) > 0$ for any $\bar{B} > 0$.} 
\subsection{Verifying Assumption \ref{as:concentration}}
Under Assumptions \ref{as:boundedinternal} and \ref{as:differentiablemoments}, only Assumptions \ref{as:concentration}.3(a)--10.3(c) need verification. Let $h(p_{\theta,B}(\cdot|z),p_{0,W|Z}(\cdot|z))$ be the Hellinger distance between $p_{\theta,B}$ and $p_{0,W|Z}$ at $z$, and let $h_{n,2}(p_{\theta,B},p_{0,W|Z}) = (n^{-1}\sum_{i=1}^{n}h^{2}(p_{\theta,B}(\cdot|z_{i}),p_{0,W|Z}(\cdot|z_{i})))^{1/2}$ be the root mean square Hellinger (RMSH) distance.\footnote{$h(p_{\theta,B}(\cdot|z),p_{0,W|Z}(\cdot|z))$ satisfies $h^{2}(p_{\theta,B}(\cdot|z),p_{0,W|Z}(\cdot|z)) = \int_{\mathcal{W}} (p_{\theta,B}^{1/2}(w|z)-p_{0,W|Z}^{1/2}(w|z))^{2}dw$.} Under Assumptions \ref{as:correctspec}.1, \ref{as:boundedinternal}, \ref{as:differentiablemoments}, the uniform Lipschitz conditions hold: $\sup_{\gamma \in \Gamma}||m(\cdot,\gamma)-m_{0}(\cdot,\gamma)||_{n,2} \leq C_{1} h_{n,2}(p_{\theta,B},p_{0,W|Z})$ and $\sup_{\gamma \in \Gamma_{0}}||M(\cdot,\gamma)-M_{0}(\cdot,\gamma)||_{n,2} \leq C_{2} h_{n,2}(p_{\theta,B},p_{0,W|Z})$ for constants $C_{1},C_{2}>0$. Consequently, a RMSH convergence rate is an upper bound on the rate of convergence for $m(\cdot,\gamma)$ and $M(\cdot,\gamma)$ to $m_{0}(\cdot,\gamma)$ and $M_{0}(\cdot,\gamma)$, respectively, in the empirical $L^{2}$ norm.\footnote{This Lipschitz property (and the resulting convergence implications) seems to be where compactness of $\mathcal{W}$ is most crucial. Subject to modified restrictions on $\{f_{\theta}: \theta \in \Theta\}$, I conjecture that compactness could be relaxed the following ways: (i) boundedness of $g_{j}(W,Z,\gamma)$ and its derivatives in $\gamma$, (ii) uniform integrability of $\{g_{j}(W,z,\gamma): (z,\gamma) \in \mathcal{Z} \times \Gamma\}$ (and similarly the derivatives in $\gamma$, except replacing $\Gamma$ with $\Gamma_{0}$) with respect to $p_{0,W|Z}$ and $\{p_{\theta,B}: \theta \in \Theta, \ ||B||_{\infty} \leq \bar{B}\}$ (or a subset on which the posterior concentrates as $n\rightarrow \infty$), or (iii) convergence around $p_{0,W|Z}$ under a stronger norm (e.g., $L^{2}$). The intuition is that (i) and (ii) preserves RMSH convergence implying $||\cdot||_{n,2}$-convergence of conditional expectations, while (iii) could relax boundedness via, in the case of $L^{2}$, the Cauchy-Schwarz inequality. \label{fn:boundedness}}  This means that Assumptions \ref{as:concentration}.3(a) and \ref{as:concentration}.3(b) are verified if the RMSH rate is $o(n^{-1/4})$ (by restricting $\tilde{\mathcal{P}}_{n,W|Z}$ to be contained in an appropriately shrinking RMSH ball around $p_{0,W|Z}$).

Let $\mathcal{H}$ be the Reproducing Kernel Hilbert Space (RKHS) of the unrestricted centered GP (i.e., corresponding to $\tilde{\Pi}_{\mathcal{B}}$), and let $||\cdot||_{\mathcal{H}}$ denote the RKHS norm (i.e., $||\cdot||_{\mathcal{H}}^{2} = \langle \cdot,\cdot \rangle_{\mathcal{H}}$, where $\langle \cdot,\cdot \rangle_{\mathcal{H}}$ is the RKHS inner product).\footnote{See \cite{van2008reproducing} for a formal definition of the RKHS of a centered GP.} The RMSH rate is determined by the RKHS. To formalize this, recall that $b_{0,\theta}(u,z) = \log p_{0,W|Z}(F_{\theta,z}^{-1}(u)|z)-\log f_{\theta}(F_{\theta,z}^{-1}(u)|z)$ for each $(u',z')' \in [0,1]^{d}$, and define the \textit{supremum norm concentration function} at $b_{0,\theta}$ as 
\begin{align*}
\varphi_{b_{0,\theta}}(\delta) = \inf_{h \in \mathcal{H}: ||h-b_{0,\theta}||_{\infty}< \delta}\frac{1}{2}||h||_{\mathcal{H}}^{2}-\log \tilde{\Pi}_{\mathcal{B}}(||B||_{\infty}<\delta)    
\end{align*} 
for $\delta > 0$. The concentration function uses the RKHS to measure the prior mass around $b_{0,\theta}$ (see Lemma 5.3 in \cite{van2008reproducing}). Building on \cite{ghosal2007convergence} and \cite{vaart2008rates}, Proposition \ref{prop:RMSHrate} states that the RMSH rate $\delta_{n}$ satisfies $\sup_{\theta \in \Theta}\varphi_{b_{0,\theta}}(\delta_{n}) \leq n \delta_{n}^{2}$. For notation, let $\overline{\mathcal{H}}$ denote the $||\cdot||_{\infty}$-closure of $\mathcal{H}$.
\begin{proposition}\label{prop:RMSHrate}
Suppose Assumptions \ref{as:dgp2}, \ref{as:boundedinternal}, \ref{as:parametricfamily}, and \ref{as:exprior} holds. If 1. $\sup_{\theta \in \Theta}||b_{0,\theta}||_{\infty} \leq \bar{B}$, 2. $\{b_{0,\theta}: \theta \in \Theta\} \subseteq \overline{\mathcal{H}}$, and 3. $\sup_{\theta \in \Theta}\varphi_{b_{0,\theta}}(\delta_{n}) \leq n \delta_{n}^{2}$ for some $\{\delta_{n}\}_{n \geq 1}$ satisfying $\delta_{n} \rightarrow 0$ and $n\delta_{n}^{2} \rightarrow \infty$ as $n\rightarrow \infty$, then, for $P_{0,Z}^{\infty}$-almost every fixed realization $\{z_{i}\}_{i \geq 1}$ of $\{Z_{i}\}_{i \geq 1}$, $\Pi(h_{n,2}(p_{\theta,B},p_{0,W|Z}) < D \delta_{n}| W^{(n)}) \rightarrow 1$ in $P_{0,W|Z}^{(n)}$-probability as $n\rightarrow \infty$ for some $D>0$.
\end{proposition}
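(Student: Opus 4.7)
The plan is to apply the posterior-contraction theorem for independent but not identically distributed observations stated as Theorem 4 of \cite{ghosal2007convergence}, with the average squared Hellinger $h_{n}^{2}$ serving as the testing distance. That theorem requires three ingredients: (a) a prior-mass lower bound on a Kullback--Leibler--type neighborhood of $p_{0,W|Z}$ measured with respect to the fixed design $\{z_{i}\}_{i=1}^{n}$; (b) a sieve $\mathcal{P}_{n}$ of conditional densities whose $h_{n}$-metric entropy grows no faster than $n\delta_{n}^{2}$; and (c) exponentially small prior mass on the sieve complement. The desired conclusion $\Pi_{n}(h_{n}(p_{\theta,B},p_{0,W|Z})<D\delta_{n}\,|\,W^{(n)}) \to 1$ follows once all three are verified for some constant $D>0$, with almost-sure control over $\{z_{i}\}_{i\geq 1}$ obtained by a Borel--Cantelli argument applied to the $z$-sequence.

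The first step is to reduce each ingredient to a statement about the Gaussian process $B$. Via the change of variables $u=F_{\theta,z}(w)$ the density rewrites as
\begin{align*}
p_{\theta,B}(w|z)=f_{\theta}(w|z)\cdot\frac{\exp(B(F_{\theta,z}(w),z))}{\int_{[0,1]^{d_{w}}}\exp(B(u,z))\,du}.
\end{align*}
Standard exponential-family computations---together with the sup-norm boundedness of $B$ from Assumption \ref{as:exprior} and the uniform bound $\sup_{\theta}\|b_{0,\theta}\|_{\infty}\leq\bar{B}$ from Condition 1---yield the pointwise bound $h^{2}(p_{\theta,B}(\cdot|z),p_{0,W|Z}(\cdot|z)) \lesssim \|B(\cdot,z)-b_{0,\theta}(\cdot,z)\|_{\infty}^{2}$, with analogous bounds for the KL divergence and its log-ratio second-moment counterpart. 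Averaging over $i=1,\ldots,n$ controls all three quantities appearing in the hypotheses of \cite{ghosal2007convergence} by $\|B-b_{0,\theta}\|_{\infty}^{2}$, uniformly in the design.

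With that reduction in hand, the three conditions are verified by invoking the Gaussian-process machinery of \cite{vaart2008rates}. For (a), the concentration-function small-ball inequality gives $\tilde{\Pi}_{\mathcal{B}}(\|B-b_{0,\theta}\|_{\infty}<\delta_{n}) \geq \exp(-\varphi_{b_{0,\theta}}(\delta_{n}))$, which by Condition 3 is at least $\exp(-n\delta_{n}^{2})$ uniformly in $\theta$; the restriction in Assumption \ref{as:exprior} to the event $\{\|B\|_{\infty}\leq\bar{B}\}$ loses only a multiplicative constant, and integrating against $\Pi_{\Theta}$---using continuity of $\pi_{\Theta}$ and the uniform Lipschitz dependence of $\log f_{\theta}$ on $\theta$ from Assumption \ref{as:parametricfamily} to translate a sup-norm neighborhood in $B$ into a KL neighborhood in $(\theta,B)$---produces the required prior-mass lower bound. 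For (b) and (c), I would use the standard Borell--Sudakov--Tsirelson sieve $\mathcal{B}_{n}=M_{n}\mathbb{H}_{1}+\delta_{n}\mathbb{C}_{1}$, where $\mathbb{H}_{1}$ and $\mathbb{C}_{1}$ denote the unit balls of the RKHS and of $(C(T,\mathbb{R}),\|\cdot\|_{\infty})$, respectively, selecting $M_{n}$ so that $\tilde{\Pi}_{\mathcal{B}}(\mathcal{B}_{n}^{c})\leq \exp(-(C+4)n\delta_{n}^{2})$. The H\"{o}lder regularity encoded in Assumption \ref{as:exprior} (namely $\alpha>d_{z}/2$) ensures via Kolmogorov--Tikhomirov-type entropy estimates that $\mathbb{H}_{1}$ embeds compactly into $(C(T,\mathbb{R}),\|\cdot\|_{\infty})$ with polynomial metric-entropy growth, so the $h_{n}$-entropy of the implied sieve of conditional densities is of order $n\delta_{n}^{2}$ once compactness of $\Theta$ and Assumption \ref{as:parametricfamily} are used to absorb the finite-dimensional component.

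The main obstacle is propagating uniformity in $\theta$ through the argument: $b_{0,\theta}$ depends on $\theta$ through the inverse conditional cumulative distribution function $F_{\theta,z}^{-1}$, so it is Conditions 1--3 working in tandem with Assumption \ref{as:parametricfamily} that permit the argument to be uniform in $\theta\in\Theta$. A secondary technical nuisance is the translation between a sup-norm neighborhood of $B$ around $b_{0,\theta}$ and the KL-type neighborhood required by the triangular-array theorem; here the boundedness of the sample paths and of $b_{0,\theta}$ combine to yield a multiplicative constant rather than a diverging prefactor. Beyond these points, the proof is an adaptation of the Gaussian-process posterior-contraction theory of \cite{vaart2008rates} from the unconditional to the conditional setting, placed inside the non-i.i.d.\ contraction framework of \cite{ghosal2007convergence}.
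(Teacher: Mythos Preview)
Your proposal is correct and follows essentially the same route as the paper: verify the three conditions of Theorem 4 in \cite{ghosal2007convergence} by reducing Hellinger/KL control to sup-norm control of $B-b_{0,\theta}$ (the paper invokes Lemma 3.1 of \cite{vaart2008rates}), use the concentration-function small-ball bound for prior mass, and build the sieve $\mathcal{B}_{n}=\delta_{n}\mathcal{B}^{(1)}+D_{n}\mathcal{H}^{(1)}$ via Borell's inequality. Two small corrections: the entropy bound $\log N(\delta_{n},\mathcal{B}_{n},\|\cdot\|_{\infty})\lesssim n\delta_{n}^{2}$ comes from Theorem 2.1 of \cite{vaart2008rates} through the concentration-function hypothesis and Kuelbs--Li duality, not from the H\"{o}lder index $\alpha>d_{z}/2$ (your Kolmogorov--Tikhomirov route would require separately checking that the RKHS-ball entropy at radius $M_{n}\asymp\sqrt{n}\delta_{n}$ and scale $\delta_{n}$ is $O(n\delta_{n}^{2})$, which is not automatic); and no Borel--Cantelli over $\{z_{i}\}$ is needed, since after passing to the sup norm on $T$ every bound is uniform in the design.
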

Proposition \ref{prop:RMSHrate} is not sufficient for Assumption \ref{as:concentration}.3(c). Proposition \ref{prop:interpolation} below proves that, under smoothness conditions, an RMSH rate leads to a convergence rate in the empirical supremum Hellinger distance $h_{n,\infty}(p_{\theta,B},p_{0,W|Z})=\max_{1 \leq i \leq n} h(p_{\theta,B}(\cdot|z_{i}),p_{0,W|Z}(\cdot|z_{i}))$.\footnote{The smoothness conditions on $f_{\theta}$ hold, for example, if $f_{\theta}(w|z)$ is a homoskedastic Gaussian linear regression (truncated to $\mathcal{W}$), with bounded $z$, and with compactly supported slopes and variances (and variances bounded away from zero).} By Assumptions \ref{as:correctspec}, \ref{as:boundedinternal}, and \ref{as:differentiablemoments}, this leads to conditions for Assumption \ref{as:concentration}.3(c) by a similar Lipschitz property (i.e., $\sup_{\gamma \in \Gamma}||\Sigma(\cdot,\gamma)-\Sigma_{0}(\cdot,\gamma)||_{n,\infty} \leq C_{3} h_{n,\infty}(p_{\theta,B},p_{0,W|Z})$ for some $C_{3} > 0$).\footnote{A similar conjecture to Footnote \ref{fn:boundedness} applies.} It is important to note the $h_{n,\infty}$ rate is \textit{only} an upper bound and sharper convergence rates may be possible; I leave full treatment of optimal $h_{n,\infty}$ rates for future research because, to the best of my knowledge, posterior contraction rates in the supremum distance for conditional densities is an open question (and is of general statistical interest).
\begin{proposition}\label{prop:interpolation}
Suppose Assumptions \ref{as:dgp2}, \ref{as:boundedinternal}, \ref{as:parametricfamily}, and \ref{as:exprior} hold, the conditions of Proposition \ref{prop:RMSHrate} hold, $p_{0,W|Z}$ satisfies $\int_{\mathcal{W}} ||\log p_{0,W|Z}(w|\cdot)||_{\alpha_{0}}dw < \infty$ for some $\alpha_{0} > d_{z}/2$, and $\{f_{\theta}: \theta \in \Theta\}$ satisfies $\sup_{\theta \in \Theta} \int_{\mathcal{W}}|| \log f_{\theta}(w|\cdot) ||_{a}dw < \infty$ and $\sup_{\theta \in \Theta}\int_{\mathcal{W}}||F_{\theta,\cdot}(w)||_{a}^{\lfloor a \rfloor}dw < \infty$  for each $a \in (0,\alpha)$. Then, for $P_{0,Z}^{\infty}$-almost every fixed realization $\{z_{i}\}_{i \geq 1}$ of $\{Z_{i}\}_{i \geq 1}$, there exists $\{\tilde{\mathcal{P}}_{1,n,W|Z}\}_{n \geq 1}$ and $\{\tilde{\delta}_{n}\}_{n \geq 1}$ such that $\Pi(\tilde{\mathcal{P}}_{n,1,W|Z}|W^{(n)}) \rightarrow 1$ in $P_{0,W|Z}^{(n)}$-probability as $n\rightarrow \infty$ and $\sup_{p_{\theta,B} \in \tilde{\mathcal{P}}_{1,n,W|Z}}h_{n,\infty}(p_{\theta,B},p_{0,W|Z})\leq \tilde{D}\tilde{\delta}_{n}$ for a constant $\tilde{D}>0$, where $\tilde{\delta}_{n}$ depends on $\alpha,\alpha_{0}$, $d_{z}$, and the RMSH rate $\delta_{n}$.
\end{proposition}
\addtocounter{example}{-1}
\begin{example}[Continued]
Suppose $\{b_{0,\theta}: \theta \in \Theta\} \subseteq C^{\alpha_{0}}([0,1]^{d},\mathbb{R}) \cap S^{\alpha_{0}}([0,1]^{d},\mathbb{R})$ for some $\alpha_{0} > 0$, $\sup_{\theta \in \Theta}||b_{0,\theta}||_{\alpha_{0}} < \infty$, and $\sup_{\theta \in \Theta}||b_{0,\theta}||_{2,2,\alpha_{0}} < \infty$. Then, Lemma 3 of \cite{van2011information} and Lemma 13 in \cite{walker2026supplement} imply that the $h_{n,2}$ rate is $\delta_{n} = n^{-\min\{\alpha,\alpha_{0}\}/(2 \alpha + d)}$. Consequently, $\delta_{n} = o(n^{-1/4})$ iff $\alpha > d/2$ and $\alpha_{0} > \alpha/2 + d/4$. The second inequality restricts oversmoothing because, if $\alpha \leq \alpha_{0}$ (i.e., undersmoothing), the contraction rate is $\delta_{n} = n^{-\alpha/(2\alpha+d)}$ and $\delta_{n} = o(n^{-1/4})$ if $\alpha > d/2$. If $\alpha = \alpha_{0}$, then $\delta_{n}$ is the minimax rate $n^{-\alpha_{0}/(2\alpha_{0}+d)}$ and $\delta_{n} = o(n^{-1/4})$ if $\alpha_{0} > d/2$. Substituting $\delta_{n} = n^{-\min\{\alpha,\alpha_{0}\}/(2\alpha + d)}$ into the $h_{n,\infty}$ rate $\tilde{\delta}_{n}$, one can show $\tilde{\delta}_{n} = o(n^{-1/4})$ if $\alpha/2 + 3d_{z}/4<\alpha_{0}$ and $1/4 + (d_{z}+(\alpha-a))/(2\alpha+d_{z}) < \min\{\alpha,\alpha_{0}\}/(2\alpha + d)$.\footnote{See the discussion following Lemma 13 in \cite{walker2026supplement} for more details on the rate calculations.}
\end{example}
\subsection{Verifying Prior Invariance} The next proposition states that (\ref{eq:invariancegeneral}) holds if the \cite{chamberlain1987asymptotic} optimal residual, $\tilde{g}_{0}(w,z,\gamma_{0}) = M_{0}(z,\gamma_{0})'\Sigma_{0}^{-1}(z,\gamma_{0})g(w,z,\gamma_{0})$, is well-approximated by sequences in $\mathcal{H}$. Let $\tilde{g}_{0,\theta}(u,z,\gamma_{0}) = \tilde{g}_{0}(F_{\theta,z}^{-1}(u),z,\gamma_{0}),z,\gamma_{0})$ for $(u,z,\theta) \in [0,1]^{d_{w}} \times [0,1]^{d_{z}} \times \Theta$ be the optimal residual transformed so that is defined on $[0,1]^{d}$.
\begin{proposition}\label{prop:invariance}
Suppose Assumptions \ref{as:dgp2}, \ref{as:correctspec}, \ref{as:boundedinternal}, \ref{as:differentiablemoments}, \ref{as:parametricfamily}, \ref{as:exprior}, and the conditions of Proposition \ref{prop:RMSHrate} hold. Further, suppose there are sequences $\{\tilde{g}_{n,\theta,j}^{*}: \theta \in \Theta\}_{n \geq 1}$, $j=1,...,d_{\gamma},$ in $ \mathcal{H}$ and $\{\zeta_{n}\}_{n\geq 1}$ in $\mathbb{R}_{++}$ with $\zeta_{n}\rightarrow 0$ such that 1. $\sup_{\theta \in \Theta}\max_{1 \leq j \leq d_{\gamma}}||\tilde{g}_{n,\theta,j}^{*}-\tilde{g}_{0,\theta,j}||_{\infty} \lesssim \zeta_{n}$, 2. $\sup_{\theta \in \Theta}\max_{1 \leq j \leq d_{\gamma}}||\tilde{g}_{n,\theta,j}^{*}||_{\mathcal{H}} \lesssim \sqrt{n}\zeta_{n}$, 3. $\sqrt{n}\zeta_{n}\delta_{n}\rightarrow 0$ as $n\rightarrow \infty$, where $\delta_{n}$ is the RMSH rate, and 4. $\{\Delta_{n,\theta}^{*}: \theta \in \Theta\}$ with $\Delta_{n,\theta}^{*}(w,z)=\tilde{g}_{n,\theta}^{*}(F_{\theta,z}(w),z)-\tilde{g}_{0,\theta}(F_{\theta,z}(w),z)$ for $(w,z) \in \mathcal{W} \times [0,1]^{d_{z}}$ satisfies, for $P_{0,Z}^{\infty}$-almost every fixed realization $\{z_{i}\}_{i\geq 1}$ of $\{Z_{i}\}_{i \geq 1}$,
\begin{align*}
 \sup_{\theta \in \Theta}\left|\left|\frac{1}{\sqrt{n}}\sum_{i=1}^{n}\{\Delta_{n,\theta}^{*}(W_{i},z_{i})-E_{P_{0,W|Z}}[\Delta_{n,\theta}^{*}(W,Z)|Z=z_{i}]\}\right|\right|_{2} \overset{P_{0,W|Z}^{(n)}}{\longrightarrow} 0   
\end{align*}
as $n\rightarrow \infty$. Then, for $P_{0,Z}^{\infty}$-almost every fixed realization $\{z_{i}\}_{i\geq 1}$ of $\{Z_{i}\}_{i \geq 1}$, there is $\{\tilde{\mathcal{P}}_{n,W|Z}\}_{n \geq 1}$ such that $\Pi(p_{W|Z} \in \tilde{\mathcal{P}}_{n,W|Z}|W^{(n)}) \rightarrow 1$ in $P_{0,W|Z}^{(n)}$-probability and (\ref{eq:invariancegeneral}) holds.
\end{proposition}
Some comments on Proposition \ref{prop:invariance}. Sequences satisfying 1. and 2. can be found by solving $\sup_{\theta \in \Theta}\inf_{f \in \mathcal{H}: ||f-\tilde{g}_{0,\theta,j}||_{\infty} \leq \zeta_{n,j}}\frac{1}{2}||f||_{\mathcal{H}}^{2} \leq n \zeta_{n,j}^{2}$ for $j=1,...,d_{\gamma}$ and setting $\zeta_{n} = \max_{1 \leq j \leq d_{\gamma}}\zeta_{n,j}$. Since LHS is the component of the concentration function that describes the approximation of $\tilde{g}_{0,\theta,j}$ by $\mathcal{H}$, this can be checked for many GPs \citep{vaart2008rates,van2008reproducing,van2011information}. The condition $\sqrt{n}\zeta_{n}\delta_{n}\rightarrow 0$ as $n\rightarrow \infty$ in 3. is known as a `no-bias' condition, and, since $\delta_{n} = o(n^{-1/4})$, it holds if $\zeta_{n} = o(n^{-1/4})$.\footnote{This label follows \cite{castillo2012gaussian,castillo2012semiparametric}, \cite{rivoirard2012bernstein}, and \cite{castillo2015bernstein}, where similar conditions are likened to frequentist `no-bias' conditions (e.g., Section 25.8 of \cite{vaart_1998}).} Example \ref{ex:matern} below demonstrates that this restricts the smoothness of $B$ relative to $\tilde{g}_{0,\theta}$. Part 4. amounts to a complexity constraint on $\{\Delta_{n,\theta}^{*}: \theta \in \Theta\}$ because $\sup_{\theta \in \Theta}||\Delta_{n,\theta}^{*}||_{\infty} = o(1)$ implies that the convergence holds pointwise. Finally, if $\{g_{0,\theta,j}: \theta \in \Theta\} \subseteq \mathcal{H}$ for each $j$, then Proposition \ref{prop:invariance} holds trivially (i.e., $g_{n,\theta,j}^{*} = \tilde{g}_{0,\theta,j}$ and $\zeta_{n,j} = 0$ for each $j$), however, for many GPs, the RKHS comprises a relatively small class of functions.
\addtocounter{example}{-1}
\begin{example}[Continued]
Suppose that $\{\tilde{g}_{0,\theta,j}: \theta \in \Theta\} \subseteq C^{\beta_{0,j}}(T,\mathbb{R}) \cap H^{\beta_{0,j}}([0,1]^{d},\mathbb{R})$ for some $\beta_{0,1},...,\beta_{0,d_{\gamma}} >0$, and $\sup_{\theta \in \Theta}||\tilde{g}_{0,\theta,j}||_{\beta_{0,j}} < \infty$ and $\sup_{\theta \in \Theta}||\tilde{g}_{0,\theta,j}||_{2,2,\beta_{0,j}} < \infty$ for $j=1,...,d_{\gamma}$. If $\beta_{0,j} < \alpha$ for some $j$, then Lemma 13 in \cite{walker2026supplement} yields that $\zeta_{n} = n^{-\underline{\beta}/(2\alpha + d)}$, where $\underline{\beta} = \min_{1 \leq j \leq d_{\gamma}}\beta_{0,j}$, and $ \zeta_{n} = o(n^{-1/4})$ if $\underline{\beta} > \alpha/2+ d/4$. If $\alpha + d/2 >\underline{\beta} > \alpha$, then $\zeta_{n} = o(n^{-1/4})$ holds as long as $\alpha > d/2$. Finally, if $\underline{\beta} > \alpha + d/2$, then $\{\tilde{g}_{0,\theta,j}: \theta \in \Theta\} \subseteq \mathcal{H}$ for all $j$ and Proposition \ref{prop:invariance} holds trivially.
\end{example}
\section{Conclusion and Extensions}\label{sec:conclusion}
This paper proposes semiparametric Bayesian inference for conditional moment equalities. The central idea is that a posterior for a conditional distribution of data implies a posterior for a minimum distance estimand based on the conditional moments. The framework has similar flexibility to frequentist semiparametric estimators, and does not require converting the conditional moments to unconditional moments. I also establish the method's frequentist optimality via a BvM, providing conditions under which the posterior is asymptotically equivalent to a \cite{chamberlain1987asymptotic} efficient estimator.

My paper offers several directions for future research. First, there are important settings in which $\gamma$ is infinite-dimensional. An example is nonparametric IV in which $W=(Y,D)'$, $Z = (X,A)'$, and $g(W,Z,\gamma) = Y-\gamma(D,X)$, where $\gamma(\cdot,\cdot)$ is an unknown function \citep{ai2003efficient,newey2003instrumental}. Conceptually, the same ideas apply: a posterior for $P_{W|Z}$ implies a posterior for a \textit{function-valued} $\gamma$. However, theoretical issues, such as non-compact function spaces, ill-posedness of identifying conditions, etc., may warrant different minimum distance estimands (e.g., penalized estimands like in \cite{chen2012estimation}). 

Second, model misspecification is an important concern for conditional moment equality models (i.e., when $E_{P_{W|Z}}[g(W,Z,\gamma)|Z] \neq 0$ for all $\gamma \in \Gamma$, with positive probability). The posterior for the value function $Q_{n}(\gamma_{n},\gamma_{n},p_{W|Z})$, a `$J$-statistic'-like estimand, contains information about the compatibility of the conditional moments with the data. Developing a Bayesian specification assessment framework using this posterior and comparing it with classical overidentifying restrictions tests would be interesting. Moreover, reporting the posterior distribution of $Q_{n}(\gamma_{n},\gamma_{n},p_{W|Z})$ may also connect with the recommendations in \cite{andrews2025purpose} of reporting $J$-statistics (not $J$-tests) in overidentified models.

Finally, competing structural models often lead to nonnested conditional moment equalities $E_{P_{W|Z}}[g_{1}(W,Z,\gamma_{1})|Z] = 0$ and $E_{P_{W|Z}}[g_{2}(W,Z,\gamma_{2})|Z] = 0$. A posterior for $p_{W|Z}$ leads to a joint posterior for $(\gamma_{n,1},Q_{n,1},\gamma_{n,2},Q_{n,2})$, where $Q_{n,j}$, $j \in \{1,2\}$, is shorthand for $Q_{n,j}(\gamma_{n,j},\gamma_{n,j},p_{W|Z})$, with $Q_{n,j}(\gamma,\tilde{\gamma},p_{W|Z})$ being (\ref{eq:criterionfunctionsec2}) for $E_{P_{W|Z}}[g_{j}(W,Z,\gamma_{j})|Z]$. The posterior for $Q_{n,1}-Q_{n,2}$ can be used to assess which conditional moments are most plausible.

\bibliographystyle{ecca}
\bibliography{conditional}
\newpage
\appendix
\section{Existence and Measurability of $\gamma_{n}$}\label{ap:measurability}
I state sufficient conditions for Assumption \ref{as:uemfp}. A discussion follows.
\begin{assumption}\label{as:param}
The following holds for almost every realization $\{z_{i}\}_{i \geq 1}$ of $\{Z_{i}\}_{i \geq 1}$ and every $n \geq 1$, 1. $\Gamma$ is a compact subset of $(\mathbb{R}^{d_{\gamma}},||\cdot||_{2})$, 2. $p_{W|Z} \mapsto Q_{n}(\gamma,\tilde{\gamma},p_{W|Z})$ is $\mathscr{P}_{W|Z}$-measurable for each $(\gamma,\tilde{\gamma}) \in \Gamma \times \Gamma$, 3. $\gamma \mapsto Q_{n}(\gamma,\tilde{\gamma},p_{W|Z})$ is continuous for every $\tilde{\gamma} \in \Gamma$ and every $p_{W|Z} \in \mathcal{P}_{W|Z}$, and 4. $q_{n}(\tilde{\gamma},p_{W|Z})= \argmin_{\gamma \in \Gamma} Q_{n}(\gamma,\tilde{\gamma},p_{W|Z})$ is unique and satisfies $q_{n}(\tilde{\gamma},p_{W|Z}) \in \text{int}(\Gamma)$ for every $\tilde{\gamma} \in \Gamma$ and every $p_{W|Z} \in \mathcal{P}_{W|Z}$.
\end{assumption}
\begin{assumption}\label{as:fp}
The following holds for almost every realization $\{z_{i}\}_{ i\geq 1}$ of $\{Z_{i}\}_{ i\geq 1}$ and every $n \geq 1$: 1. $\gamma \mapsto m(z_{i},\gamma)$, $i=1,...,n$, is twice continuously differentiable for each $p_{W|Z} \in \mathcal{P}_{W|Z}$, 2. $\gamma \mapsto \Sigma(z_{i},\gamma)$,$i=1,...,n$, is continuously differentiable for each $p_{W|Z} \in \mathcal{P}_{W|Z}$, 3. $\inf_{\gamma\in \Gamma}\min_{1 \leq i \leq n}\lambda_{min}(\Sigma(z_{i},\gamma)) > 0$ for each $p_{W|Z} \in \mathcal{P}_{W|Z}$, and 4. $(x_{1},x_{2}) \mapsto Q_{n}(x_{1},x_{2},p_{W|Z})$ satisfies
\begin{align*}
\sup_{\tilde{\gamma} \in \Gamma}\left | \left |  \frac{\partial^{2}}{\partial x_{1} \partial x_{2}'}Q_{n}(q_{n}(\tilde{\gamma},p_{W|Z}),\tilde{\gamma},p_{W|Z})\right| \right|_{op} < \inf_{\tilde{\gamma}\in \Gamma} \lambda_{min}\left(\frac{\partial^{2}}{\partial x_{1} \partial x_{1}'}Q_{n}(q_{n}(\tilde{\gamma},p_{W|Z}),\tilde{\gamma},p_{W|Z})\right)  
\end{align*}
for every $p_{W|Z} \in \mathcal{P}_{W|Z}$.
\end{assumption}
Assumptions \ref{as:param} and \ref{as:fp} are similar to Assumption 1 in \cite{hansen2021inference}, except that they apply to $q_{n}(\tilde{\gamma},p_{W|Z})$ rather than a GMM estimand. The key condition is Assumption \ref{as:fp}.4, which requires that $n^{-1}\sum_{i=1}^{n}M(z_{i},q_{n}(\tilde{\gamma},p_{W|Z}))'\Sigma^{-1}(z_{i},\tilde{\gamma})m(z_{i},q_{n}(\tilde{\gamma},p_{W|Z}))$ not deviate too far from zero under small changes to $\Sigma^{-1}(z_{i},\tilde{\gamma})$ (with deviations constrained by how strongly identified $q_{n}(\tilde{\gamma},p_{W|Z})$ is). Combined with the other assumptions, the condition implies $\tilde{\gamma} \mapsto q_{n}(\tilde{\gamma},p_{W|Z})$ is a measurable contraction mapping. Consequently, a probabilistic analogue to the Banach Fixed Point Theorem \citep{bharucha1976fixed} reveals that $\tilde{\gamma} \mapsto q_{n}(\tilde{\gamma},p_{W|Z})$ has a unique fixed point $\gamma_{n}$ for each $p_{W|Z}$ that can be characterized as the $r\rightarrow \infty$ limit of the recursive relation $\gamma_{n,r} = q_{n}(\gamma_{n,r-1},p_{W|Z})$ (and $p_{W|Z} \mapsto \gamma_{n}$ is measurable). This discussion is summarized in the next proposition, and, since the result concerns a conditional minimum distance estimand, it may be of independent interest.
\begin{proposition}\label{prop:fixedpoint}
If Assumptions \ref{as:param} and \ref{as:fp} hold, then Assumption \ref{as:uemfp} holds.
\end{proposition}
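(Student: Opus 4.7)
The plan is to derive the three conclusions of Assumption \ref{as:uemfp} (existence of the fixed point, validity of the fixed-point iteration, and measurability of $p_{W|Z} \mapsto \gamma_n(p_{W|Z})$) by combining a deterministic Banach fixed point argument applied pointwise in $p_{W|Z}$ with a measurable selection argument for the step map $q_n(\tilde\gamma,p_{W|Z})$. I would first establish that, under Assumptions \ref{as:param} and \ref{as:fp}, the map $\tilde\gamma \mapsto q_n(\tilde\gamma,p_{W|Z})$ is, with $\Pi$-probability one, a contraction on the compact convex set $\Gamma$, and then use this to obtain convergence of the iterates.

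The first step is to show $\tilde\gamma \mapsto q_n(\tilde\gamma,p_{W|Z})$ is well-defined and differentiable for $\Pi$-almost every $p_{W|Z}$. Part 4 of Assumption \ref{as:param} gives an interior unique minimizer, so the first-order condition
\begin{equation*}
\frac{\partial}{\partial x_1'}Q_n(q_n(\tilde\gamma,p_{W|Z}),\tilde\gamma,p_{W|Z})=0
\end{equation*}
holds, and Parts 1--3 of Assumption \ref{as:fp} together with Assumption \ref{as:param} ensure the criterion is $C^2$ in the structural parameters with $\Sigma^{-1}(z_i,\tilde\gamma)$ well-defined along the relevant paths. Part 4 of Assumption \ref{as:param} (uniqueness of the minimum with a second-order derivative bounded away from singularity) lets me apply the implicit function theorem to obtain $\partial q_n/\partial \tilde\gamma'$ explicitly as $-[\partial^2_{x_1 x_1'} Q_n]^{-1}\partial^2_{x_1 x_2'} Q_n$. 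The inequality in Part 4 of Assumption \ref{as:fp} then yields $\sup_{\tilde\gamma\in\Gamma}\lVert \partial q_n/\partial \tilde\gamma'\rVert_{op} < 1$, and since $\Gamma$ is compact and convex (Part 1 of Assumption \ref{as:param}, or more precisely its convex hull, which suffices after enlarging $\Gamma$ harmlessly), a mean-value argument produces a random $\rho(p_{W|Z})\in[0,1)$ such that
\begin{equation*}
\lVert q_n(\tilde\gamma_1,p_{W|Z})-q_n(\tilde\gamma_2,p_{W|Z})\rVert_2 \leq \rho(p_{W|Z})\lVert\tilde\gamma_1-\tilde\gamma_2\rVert_2
\end{equation*}
for all $\tilde\gamma_1,\tilde\gamma_2\in\Gamma$, with $\Pi$-probability one. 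The deterministic Banach fixed point theorem applied pointwise in $p_{W|Z}$ then gives existence of a unique $\gamma_n(p_{W|Z})\in\Gamma$ with $\gamma_n(p_{W|Z}) = q_n(\gamma_n(p_{W|Z}),p_{W|Z})$, and geometric convergence of the iterates $\gamma_{n,r}(p_{W|Z})=q_n(\gamma_{n,r-1}(p_{W|Z}),p_{W|Z})$ to $\gamma_n(p_{W|Z})$ from any initial value, which delivers (\ref{eq:limitfp}) and (\ref{eq:recursion}).

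The last step, which I expect to be the main obstacle, is measurability of $p_{W|Z}\mapsto \gamma_n(p_{W|Z})$. My plan is to show measurability of $(p_{W|Z},\tilde\gamma)\mapsto q_n(\tilde\gamma,p_{W|Z})$ by a Berge/measurable-maximum-theorem argument: Part 2 of Assumption \ref{as:param} gives $\mathscr P_{W|Z}$-measurability of $p_{W|Z}\mapsto Q_n(\gamma,\tilde\gamma,p_{W|Z})$ for each $(\gamma,\tilde\gamma)$, Part 3 gives continuity in $\gamma$, and Part 1 gives compactness of $\Gamma$. A standard measurable selection result (for instance, the Kuratowski--Ryll-Nardzewski theorem applied to the $\argmin$ correspondence, or Theorem 18.19 of Aliprantis--Border) then yields a jointly measurable selection; uniqueness from Part 4 of Assumption \ref{as:param} shows that the selection is the function $q_n$ itself, so $(p_{W|Z},\tilde\gamma)\mapsto q_n(\tilde\gamma,p_{W|Z})$ is $\mathscr P_{W|Z}\otimes\mathcal B(\Gamma)$-measurable. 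Consequently, starting from a measurable initial value $\gamma_{n,0}(p_{W|Z})$ (e.g., the identity-weighted minimizer, which is measurable by the same argument applied to $\Sigma\equiv I_{d_g}$), each iterate $\gamma_{n,r}(p_{W|Z})$ is measurable as the composition of measurable functions, and the pointwise limit $\gamma_n(p_{W|Z})=\lim_{r\to\infty}\gamma_{n,r}(p_{W|Z})$ is measurable as the limit of measurable $\mathbb{R}^{d_\gamma}$-valued maps. Together, these three steps give Assumption \ref{as:uemfp}. The delicate points are verifying the measurable selection hypothesis under the weak measurability assumed only pointwise in $(\gamma,\tilde\gamma)$ and making sure the contraction constant $\rho(p_{W|Z})$ can be chosen uniformly on a set of $\Pi$-probability one so that the Banach iteration converges without an additional uniformity hypothesis.
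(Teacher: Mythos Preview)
Your proposal is correct and follows essentially the same route as the paper: the implicit function theorem applied to the first-order condition yields $\partial q_n/\partial\tilde\gamma'=-[\partial^2_{x_1x_1'}Q_n]^{-1}\partial^2_{x_1x_2'}Q_n$, Part~4 of Assumption~\ref{as:fp} gives $\sup_{\tilde\gamma}\lVert\partial q_n/\partial\tilde\gamma'\rVert_{op}<1$, and a Banach-type fixed point argument produces $\gamma_n$.

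The only genuine difference is in how measurability is handled. The paper cites Lemma~2 of Jennrich (1969) for measurability of the argmin $q_n(\tilde\gamma,\cdot)$ and then invokes a probabilistic fixed point theorem (Theorem~7 of Bharucha, 1976) that delivers existence, uniqueness, \emph{and} measurability of the fixed point in one stroke. Your route---measurable selection for $q_n$, then measurability of each iterate by composition, then measurability of the limit---is a perfectly valid and arguably more elementary alternative that avoids the specialized reference. Regarding your two self-flagged ``delicate points'': (i) Jennrich's lemma (or the measurable maximum theorem) requires exactly the pointwise-in-parameter measurability you have from Assumption~\ref{as:param}.2 together with continuity in $\gamma$ and compactness of $\Gamma$, so there is no gap there; (ii) uniformity of $\rho(p_{W|Z})$ over $p_{W|Z}$ is not needed---pointwise $\rho(p_{W|Z})<1$ suffices for the Banach iteration to converge for each $p_{W|Z}$ in the $\Pi$-full set, and that is all you use.
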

\section{Implementation Details and Data-Calibrated Simulation}\label{ap:empirical}
\subsection{Sampling from the Posterior for the Conditional Density}\label{ap:postsample}
\subsubsection{Covariance Function and Base Density} The Mat\'{e}rn covariance function is computed using the formula $\kappa(t,s) = \Gamma^{-1}(\alpha)2^{1-\alpha}(\sqrt{2\alpha}||t-s||_{2})^{\alpha}K_{\alpha}(\sqrt{2\alpha}||t-s||_{2})$ for $t,s \in [0,1]^{d}$, where $\Gamma(\cdot)$ is the gamma function, $K_{\alpha}$ is the modified Bessel function of the second kind of order $\alpha$, and $\alpha = 5/2$ (following \cite{williams2006gaussian}).  I specify $f_{\theta}(w|z)$ to be the density of $\mathcal{N}(\mu_{q}'\tilde{Z},\xi_{q}^{2})\otimes \mathcal{N}(\mu_{p}'Z,\xi_{p}^{2})$, where $\tilde{Z} = (Z',\log P)'$ and, for this case only, $Z = (1,\log Y, \log A)$. This means $\theta = (\mu_{q}',\mu_{p}',\xi_{p}^{2},\xi_{q}^{2})'$. I set $\xi_{q}^{2},\xi_{p}^{2} \overset{iid}{\sim}InvGamma(0.001,0.001)$ and $\mu_{j}|\xi_{p}^{2},\xi_{q}^{2} \overset{ind}{\sim} \mathcal{N}(0,10^{3}\cdot \xi_{j}^{2})$ for each $j \in \{p,q\}$. Finally, $F_{0}(z_{i})=(F_{0,1}(z_{i,1}),F_{0,2}(z_{i,2}))'$, where $F_{0,j}(z_{i,j}) = (z_{i,j}-\underline{z}_{n,j})/(\overline{z}_{n,j}-\underline{z}_{n,j})$, $z_{i,1} = \log y_{i}$, $z_{i,2} = \log a_{i}$, $\overline{z}_{n,j} = \max_{1 \leq i \leq n}z_{i,j}$, and $\underline{z}_{n,j} = \min_{1 \leq i \leq n}z_{i,j}$.
\subsubsection{Sampling from the Posterior for $p_{\theta,B}$} I follow \cite{tokdar2007towards} and \cite{tokdar2010bayesian}, replacing $B$ with the \textit{kriging} $B^{*}$, where $B^{*}(t) = E[B(t)|B(t_{1}),...,B(t_{K})]$, $t \in [0,1]^{d}$, for a prespecified set of nodes $t_{1},...,t_{K} \in [0,1]^{d}$.\footnote{The reason for kriging is that the normalizing integral in $p_{\theta,B}(\cdot|z)$ depends on the entire function $B(\cdot,F_{0}(z)): [0,1]^{d_{w}} \rightarrow \mathbb{R}$, which complicates likelihood evaluation. It is an approximation because if the nodes grow dense in $[0,1]^{d}$ as $K\rightarrow \infty$ then $B^{*}$ converges to $B$ \citep{ghosal2017fundamentals}} $B^{*}$ is also a centered GP satisfying $B^{*}(t) = \lambda'\mathcal{K}^{-1/2}\kappa_{1:K}(t)$, where $\lambda \sim \mathcal{N}(0,I)$, $\mathcal{K} = (\kappa(t_{k},t_{j}))_{k,j=1}^{K}$, and $\kappa_{1:K}(t) = (\kappa(t_{1},t),...,\kappa(t_{K},t))'$, and leads to a Metropolis-within-Gibbs algorithm to sample from
\begin{align*}
    \pi(\lambda, \theta|W^{(n)}) \propto \prod_{i=1}^{n}\left\{f_{\theta}(W_{i}|z_{i})\frac{\exp(B^{*}(F_{\theta,z_{i}}(W_{i}),F_{0}(z_{i})))}{\int_{[0,1]^{d_{w}}}\exp(B^{*}(u,F_{0}(z_{i})))du}\right\}p_{\mathcal{N}(0,I)}(\lambda)\pi_{\Theta}(\theta),
\end{align*}
where $\pi_{\Theta}$ is the density of $\Pi_{\Theta}$. Specifically, given an initial parameter value $(\theta_{old}',\lambda_{old}')'$, a sweep of the Gibbs sampling algorithm applies the following steps:
\begin{enumerate}
 \item Sample $\lambda_{new} \sim \Pi(\lambda \in \cdot | W^{(n)},\theta_{old})$ using Metropolis-Hastings (MH) with proposal $\mathcal{N}(\lambda_{old},c^{2} I)$.
 \item Sample $\theta_{new} \sim \Pi(\theta \in \cdot | W^{(n)},\lambda_{new}) $ using MH with proposal density $q_{n}(\theta|\theta_{old}) \propto \prod_{i=1}^{n}f_{\theta}(W_{i}|z_{i})\pi_{\Theta}(\theta)$.
\end{enumerate}
Iterating over these steps $S$ times leads to a sample $\{(\lambda^{[s]},\theta^{[s]})\}_{s=1}^{S}$ from $\pi(\lambda,\theta|W^{(n)})$. In terms of specific implementation: 1. I adapt $c>0$ during the burn-in to target an acceptance probability of 35\%, 2. sampling from $q_{n}(\cdot|\theta_{old})$ exploits conjugacy of the normal-inverse gamma prior in Gaussian linear regression and conforms to a recommendation in \cite{Chib01111995}, and 3. I set $\{t_{k}\}_{k=1}^{K} = (\{j/2\}_{j=0}^{2})^{2} \times \{\hat{F}_{0,n,j}\}_{j=1}^{3}$, where $\hat{F}_{0,n,j}$ are $k$-means cluster centers of $\{F_{0}(z_{i})\}_{j=1}^{n}$, setting the number of clusters equal to $3$.\footnote{The choice $(\{j/2\}_{j=0}^{2})^{2}$ for $W$ corresponds to a node configuration in \cite{tokdar2007towards}, while $\{\hat{z}_{n,j}\}_{j=1}^{3}$ reflects that $\prod_{i=1}^{n}p_{\theta,B}(W_{i}|z_{i})$ only depends on $\{B(\cdot,z_{i})\}_{i=1}^{n}$ thereby serving as parsimonious approximation to using $\{z_{i}\}_{i=1}^{n}$ as nodes for $z$, an alternative that eliminates node selection for $z$ but increases the number of parameters to $9n$.} There are data-driven node selection schemes \citep{tokdar2007towards,tokdar2010bayesian}, however these can be computationally intensive. Given $\{(B^{*,[s]},\theta^{[s]})\}_{s=1}^{S}$ (with $B^{*,[s]}(t) =\lambda^{[s]'}\mathcal{K}^{-1/2}\kappa_{1:K}(t)$), I obtain $\{\gamma_{n}^{[s]}\}_{s=1}^{S}$ using the importance sampling algorithm described in Section \ref{sec:implementation}.
\begin{remark}
Kriging is introduced purely for computation. The asymptotic analysis in Section \ref{sec:example} is based on the underlying $B$, which is consistent with standard theoretical treatment of GP
density estimation \citep{TOKDAR200734,vaart2008rates,tokdar2010bayesian,castillo2015bernstein}. This distinction is important because if advances in Bayesian computation make it possible bypass kriging (and use $B$), then the asymptotic theory still applies.\footnote{Note that, since $B^{*}$ is a centered GP, conditions for Propositions \ref{prop:RMSHrate}--\ref{prop:invariance} could be derived for this class (i.e., an exercise similar to Example \ref{ex:matern}, except with covariance function $\kappa(t,s) = \kappa_{1:K}(t)'\mathcal{K}^{-1}\kappa_{1:K}(s)$).}
\end{remark}
\subsection{Data-Calibrated Simulation}\label{ap:simulation}
\subsubsection{Simulation DGP and Target Parameter}\label{ap:DGPprioradditional} 
I calibrate the simulation DGP to the 2001 National Household Travel Survey gasoline demand dataset. Specifically, I simulate:
\begin{align}
    \log Q &= \gamma_{0} + \gamma_{1} \log P + \gamma_{2} \log Y + U_{1} ,\quad E[U_{1}|\log Y, \log A] = 0\label{eq:struc} \\
    \log P &= \lambda_{0} + \lambda_{1}\log Y + \lambda_{2} \log A + U_{2}, \quad E[U_{2}|\log Y, \log A] = 0 \label{eq:rf},
\end{align}
where $(\gamma_{0},\gamma_{1},\gamma_{2}) = (4.413,-1.631,0.293)$ and $(\lambda_{0},\lambda_{1},\lambda_{2}) = (0.178,-0.001, 0.066)$. The values of $\gamma$ and $\lambda$ are the Wald estimates based on $E[U_{1}(1,\log Y,\log A)] =0$ and the OLS estimates of (\ref{eq:rf}), respectively, using the full dataset. The target parameter is deadweight loss (DWL) from a price change from $p^{0} = \$1.22$ and $p^{1} = \$1.44$ for $y \in \{\$42500,\$72500\}$ (resulting in DWL of $\$33.05$ and $\$38.74$ for the low- and high-income groups, respectively). Samples $\{(A_{i_{j}}^{*},Y_{i_{j}}^{*}, Q_{i_{j}}^{*}, P_{i_{j}}^{*})\}_{j=1}^{n}$ are generated as follows:
\begin{enumerate}
    \item Sample indices $\{i_{1},...,i_{n}\}$ by sampling from $\{1,...,4812\}$ with replacement.
    \item Sample $(\varepsilon_{i_{1},1}, \varepsilon_{i_{1},2}, \varepsilon_{i_{1},3},\varepsilon_{i_{1},4}),...,(\varepsilon_{i_{n},1}, \varepsilon_{i_{n},2}, \varepsilon_{i_{n},3},\varepsilon_{i_{n},4})$ independently from $\mathcal{N}(0, \widehat{\Omega})$, where $\widehat{\Omega} =  \frac{1}{100}\text{diag}(\widehat{Var}(\log Y), \widehat{Var}(\log A), \widehat{Var}(\log P),\widehat{Var}(\log Q))$ with $\widehat{Var}$ denoting the sample variance.
    \item Generate $\log Y_{i_{j}}^{*}= \log Y_{i_{j}}+\varepsilon_{i_{j},1}$ and $\log A_{i_{j}}^{*} = \log A_{i_{j}}+\varepsilon_{i_{j},2}$ for $j=1,...,n$.
    \item Sample $R_{i_{1}},...,R_{i_{n}}$ independently from a Rademacher distribution.
    \item Generate $\log P_{i_{j}}^{*} = \lambda_{0}+ \lambda_{1}\log Y_{i_{j}}^{*}+ \lambda_{2}\log A_{i_{j}}^{*}+ R_{i_{j}}U_{i_{j},2}+ \varepsilon_{i_{j},3}$ for $j=1,...,n$, where $U_{i,2}$ are the OLS residuals from the full dataset.
    \item Generate $\log Q_{i_{j}}^{*} = \gamma_{0}+ \gamma_{1}\log P_{i_{j}}^{*}+\gamma_{2}\log Y_{i_{j}}^{*}+ R_{i_{j}}U_{i_{j},1}+\varepsilon_{i_{j},4}$ for $j=1,...,n$, where $U_{i,1}$ are the IV/Wald residuals from the full dataset.
\end{enumerate}
Adding elements of $(\varepsilon_{i_{j},1},\varepsilon_{i_{j},2},\varepsilon_{i_{j},3},\varepsilon_{i_{j},4})$ guarantees all variables are continuously distributed. Interacting $R_{i_{j}}$ and $U_{i_{j},k}$, $k \in \{1,2\}$, ensures that the conditional moment restrictions hold at $\gamma$ and allows for heteroskedasticity of an unknown form. Multiplying $U_{i_{j},1}$ and $U_{i_{j},2}$ by a common $R_{i_{j}}$ preserves the first- and second-stage residual covariance.
\subsubsection{Other Procedures}\label{sec:otherprocedures}
I implement semiparametric Bayes (SB) by running $S =50,000$ iterations of the Metropolis-Within-Gibbs sampler, discarding the first half as burn-in, and keeping every fifth thereafter (resulting in $5,001$ draws from the posterior). I compare with:
\begin{enumerate}
\item \textit{Two-Stage Least Squares (TSLS).}  $\hat{\gamma}_{TSLS} = (\sum_{i=1}^{n}z_{i}X_{i}')^{-1}\sum_{i=1}^{n}z_{i}\log Q_{i}$, where $X_{i} = (1, \log P_{i}, \log Y_{i})'$ and $z_{i} = (1,\log y_{i}, \log a_{i})'$. The estimator for the asymptotic variance is $\hat{V}_{TSLS} = n\left(\sum_{i=1}^{n}z_{i}X_{i}'\right)^{-1}\left(\sum_{i=1}^{n}z_{i}z_{i}'\hat{U}_{i}^{2}\right)\left(\sum_{i=1}^{n}X_{i}z_{i}'\right)^{-1},$ where $\hat{U}_{i} = \log Q_{i} - \hat{\gamma}'X_{i}$. Standard errors for DWL are based on the delta method.
    \item \textit{Bayesian Bootstrap (BB).} BB is a weighted bootstrap: for each repetition, sample $V_{1},...,V_{n} \overset{iid}{\sim} Expo(1)$ and compute $\gamma_{BB} = (\sum_{i=1}^{n}V_{i}z_{i}X_{i}')^{-1}\sum_{i=1}^{n}V_{i}z_{i}\log Q_{i}$. The output of $S$ bootstrap repetitions $\{\gamma^{[s]}_{BB}\}_{s=1}^{S}$ is a sample from the BB posterior for $\gamma$, and $\{DWL(\gamma^{[s]}_{BB},p^{0},p^{1},y)\}_{s=1}^{S}$ is a sample from the BB DWL posterior. Bayesian bootstrapping the Wald estimand conforms to the IV example in \cite{chamberlain2003nonparametric}. The BB posterior median is the point estimator for DWL, the DWL credible interval is the BB 95\% equitailed probability interval, and I set $S = 5,001$.
    \item \textit{Plug-In Efficient (PE).} The PE estimator is computed as follows: 1. calculate $\hat{\gamma}_{TSLS}$, and estimate $E[\log P | Z=z_{i}]$ and $E[U^{2}|Z=z_{i}]$ using the fitted values from series regressions of $\log P$ and $\hat{U}^{2}$, where $\hat{U} = \log Q - \tilde{\gamma}'Z$, on $Z$, respectively, and 2. compute $\hat{\gamma}_{PE} = (\sum_{i=1}^{n}\hat{\omega}(z_{i},\tilde{\gamma})\hat{M}(z_{i})X_{i}')^{-1}\sum_{i=1}^{n}\hat{\omega}(z_{i},\tilde{\gamma})\hat{M}(z_{i})\log Q_{i}$, where $\hat{M}(z) = (1,\hat{E}[\log P |Z=z],\log Y)'$ and $\hat{\omega}(z,\gamma) = 1/\hat{E}[U^{2}|Z=z]$ with $\hat{E}[U^{2}|Z]$ being the series regression estimator of $E[U^{2}|Z]$.\footnote{Due to divide-by-zero issues, $\hat{E}[U^{2}|Z]$ is computed as $\max\{\hat{E}[U^{2}|Z],10^{-3}\}$ for the PE estimator.} The plug-in estimator of the \cite{chamberlain1987asymptotic} asymptotic variance is computed using the formula $\hat{V}_{PE} = (n^{-1}\sum_{i=1}^{n}\hat{\omega}(z_{i},\tilde{\gamma})\hat{M}(z_{i})\hat{M}(z_{i})')^{-1}$ and standard errors for the DWL are based on the delta method. I use the polynomial basis $\{(\log y)^{j}(\log a)^{k}\}_{j,k=0}^{J}$ with ad hoc choices $J = 2$ for $n=1,000$ and $J=3$ for $n=2,500$.
    \item \textit{Bayesian ETEL (BETEL).} The BETEL posterior satisfies 
    \begin{align*}
    \pi(\gamma|W^{(n)}) \propto \prod_{i=1}^{n}\hat{p}_{i}(\gamma)\pi_{\Gamma}(\gamma)\mathbf{1}\{\gamma \in \text{int}(Co(\gamma))\},    
    \end{align*} 
    where 
    \begin{align*}
     \hat{p}_{i}(\gamma) =\frac{\exp(\hat{\lambda}(\gamma)'g(W_{i},z_{i},\gamma)h(z_{i}))}{\sum_{j=1}^{n}\exp(\hat{\lambda}(\gamma)'g(W_{j},z_{j},\gamma)h(z_{j}))},   
    \end{align*} 
    for each $i=1,...,n$, $\hat{\lambda}(\gamma)=\argmin_{\gamma \in \mathbb{R}^{d_{h}}}n^{-1} \sum_{i=1}^{n}\exp(\lambda'g(W_{i},z_{i},\gamma)h(z_{i}))$, $h(z)$ is a $d_{h}\times 1$ vector of transformations of $z$, $\pi_{\Gamma}$ is a prior for $\gamma$, and $Co(\gamma)$ is the convex hull of $\{g(W_{i},z_{i},\gamma)h(z_{i})\}_{i=1}^{n}$. I set $h(z)$ to the natural cubic spline basis from p. 744 of \cite{chib2022bayesian} with their recommended choice of $\lfloor 2\cdot n^{1/6} \rfloor$ knots, and use the \cite{chib2018bayesian,chib2022bayesian} default Student's $t$ prior for $\gamma$ in both cases. I use the posterior median as the point estimator and the 95\% equitailed probability interval to quantify uncertainty about DWL. Results are based on $50,000$ MH draws, discarding the first half as burn-in, and saving every fifth thereafter.
    \end{enumerate}
\subsubsection{Simulation Results}
Table \ref{tab:simulation} reports the simulation results based on $500$ independent datasets. The MAE and Bias columns report the median absolute error (MAE) and median bias (Bias), respectively, of the point estimators.\footnote{I report MAE instead of mean-squared error (and median bias instead of mean bias) because TSLS has no moments in the just-identified case with continuous data.} SB has noticeably smaller MAE when $n=1,000$. When $n=2,500$, SB has the smallest MAE among procedures that estimate the \cite{chamberlain1987asymptotic} optimal IVs (i.e., SB, PE, and BETEL), and is comparable to TSLS and BB. Regarding uncertainty quantification, when $n=1,000$, the procedures based on the optimal IVs exhibit more severe (frequentist) undercoverage than BB and TSLS. However, when $n=2,500$, SB achieves frequentist coverage close to the nominal level of 95\%, and, across all of the procedures, SB credible intervals have the shortest length. Although specific to one DGP, these results are evidence that SB has solid finite-sample frequentist performance in an empirically relevant setting (i.e., inference for counterfactuals).
\begin{table}
\centering
\caption{Simulation Results}\label{tab:simulation}
\resizebox{\textwidth}{!}{  \begin{threeparttable}
\begin{tabular}{l rc ccccc cccc }
\toprule
 &  & & \multicolumn{4}{c}{\underline{Low Income}} & &  \multicolumn{4}{c}{\underline{High Income}} \\
$n$ & Procedure & & MAE & Bias & CP (\%) & Length & & MAE & Bias & CP (\%) & Length \\
\midrule
\multirow{5}{*}{$1,000$}
  & SB & & $5.70$ & $0.34$ & $92.6$ & $30.82$ & & $6.90$ & $0.13$ & $91.6$ & $36.18$   \\
  & TSLS & & $6.22$ & $-0.11$ & $93.8$ & $34.29$ & & $7.47$ & $0.02$ & $93.6$ &  $40.44$   \\
  & BB &  & $6.17$ & $-0.05$ & $94.0$ & $34.31$ & & $7.41$ & $-0.02$ & $93.4$ & $40.47$    \\
   & PE & & $6.37$ & $-0.38$ & $91.6$ & $32.28$ & & $7.36$ & $-0.47$ & $91.8$ & $38.14$   \\
  & BETEL & & $6.31$ & $-0.26$ & $91.6$ & $33.15$ & & $7.41$ & $-0.25$ & $92.4$ & $39.14$   \\
\midrule
\multirow{5}{*}{$2,500$}
  & SB & & $3.61$ & $-0.12$ & $96.4$ & $20.23$ & & $4.39$ & $0.01$ & $95.6$ &  $23.73$  \\
  & TSLS & & $3.61$ & $-0.14$ & $95.6$ & $21.89$ & & $4.35$ & $-0.02$ & $95.4$ & $25.78$    \\
  & BB & & $3.60$ & $-0.19$ & $96.0$ & $21.94$ & & $4.36$ & $-0.01$ & $95.6$ &  $25.76$   \\
   & PE & & $3.83$ & $-0.14$ & $91.6$ & $20.60$ & & $4.55$ & $-0.18$ & $92.8$ & $24.36$   \\
  & BETEL & & $3.72$ & $0.07$ & $90.2$ & $21.38$ & & $4.44$ & $0.23$ & $89.2$ & $25.27$  \\
\bottomrule
\end{tabular}
\begin{tablenotes}
      \small
      \item \textit{Notes.} MAE is the median absolute difference between the point estimate and the true parameter value, Bias is the difference between the median of the point estimators and the truth, CP is the frequency with which the interval estimator contains the truth, and Length is the median length.
    \end{tablenotes}
  \end{threeparttable}
  }
\end{table}

\section{Proofs of Theorems and Corollaries}\label{ap:proof}
\subsection{Proof of Theorem \ref{thm:representation}}
I state technical lemmas. Proofs are in \cite{walker2026supplement}.
\begin{lemma}\label{lem:covariance2}
Suppose Assumptions \ref{as:covariance}.1 and \ref{as:concentration}.3 hold. Then, for $P_{0,Z}^{\infty}$-almost every fixed realization $\{z_{i}\}_{i \geq 1}$ of $\{Z_{i}\}_{i \geq 1}$,
\begin{align*}
\sup_{(\gamma,p_{W|Z}) \in \Gamma \times \tilde{\mathcal{P}}_{n,W|Z}}\max_{1 \leq i \leq n}||\Sigma^{-1}(z_{i},\gamma)||_{op} = O(1),   
\end{align*}
where $\{\tilde{\mathcal{P}}_{n,W|Z}\}_{n \geq 1}$ are the sets introduced in Assumption \ref{as:concentration}.
\end{lemma}
\begin{lemma}\label{lem:pointwiseconsistency}
Suppose that Assumptions \ref{as:dgp2}, \ref{as:correctspec}, \ref{as:moments}.1, \ref{as:covariance}.1, \ref{as:criterion}, \ref{as:concentration}.1, and \ref{as:concentration}.3 hold. Then, for $P_{0,Z}^{\infty}$-almost every fixed realization $\{z_{i}\}_{i \geq 1}$ of $\{Z_{i}\}_{i \geq 1}$,
\begin{align*}
 \limsup_{n\rightarrow \infty}\sup_{(\tilde{\gamma},p_{W|Z}) \in \Gamma \times \mathcal{P}_{n,W|Z}}||q_{n}(\tilde{\gamma},p_{W|Z})-\gamma_{0}||_{2} = 0
\end{align*}
where $\{\tilde{\mathcal{P}}_{n,W|Z}\}_{n \geq 1}$ are the sets defined in Assumption \ref{as:concentration}.
\end{lemma}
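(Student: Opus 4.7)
The plan is to follow the standard M-estimator consistency template: establish (i) uniform convergence of the criterion $Q_n(\gamma,\tilde{\gamma},p_{W|Z})$ to the population criterion $Q_{P_{0,Z}}(\gamma,\tilde{\gamma},p_{0,W|Z})$ over $\Gamma\times\Gamma\times\tilde{\mathcal{P}}_{n,W|Z}$, and (ii) a well-separated minimum at $\gamma_0$ that is uniform in $\tilde{\gamma}$. A standard contradiction argument then converts these two facts into the required uniform consistency of $q_n(\tilde{\gamma},p_{W|Z})$. Since the limsup along any subsequence is dominated by the limsup along $\{n\}$, it is enough to argue the claim along $k_n=n$.

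For (i), I split
\begin{align*}
\bigl|Q_n(\gamma,\tilde{\gamma},p_{W|Z}) - Q_{P_{0,Z}}(\gamma,\tilde{\gamma},p_{0,W|Z})\bigr|
&\leq \bigl|Q_n(\gamma,\tilde{\gamma},p_{W|Z}) - Q_n(\gamma,\tilde{\gamma},p_{0,W|Z})\bigr| \\
&\quad + \bigl|Q_n(\gamma,\tilde{\gamma},p_{0,W|Z}) - Q_{P_{0,Z}}(\gamma,\tilde{\gamma},p_{0,W|Z})\bigr|,
\end{align*}
the second term of which vanishes uniformly over $\Gamma\times\Gamma$ by Assumption \ref{as:criterion}.1. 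For the first term I apply the identity $a'Aa - b'Bb = (a-b)'A(a+b) + b'(A-B)b$ with $a_i=m(z_i,\gamma)$, $b_i=m_0(z_i,\gamma)$, $A_i=\Sigma^{-1}(z_i,\tilde{\gamma})$, $B_i=\Sigma_0^{-1}(z_i,\tilde{\gamma})$, and average over $i$. By Cauchy--Schwarz, the uniform bound on $\|\Sigma^{-1}\|_{op}$ from Lemma \ref{lem:covariance2}, the uniform bound $\|m_0\|_{n,\infty}=O(1)$ from Assumption \ref{as:moments}.1, and the rate $\sup\|m(\cdot,\gamma)-m_0(\cdot,\gamma)\|_{n,2}=o(n^{-1/4})$ from Assumption \ref{as:concentration}.1, the $(a-b)'A(a+b)$ contribution vanishes uniformly. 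For the $b'(A-B)b$ contribution I use $\Sigma^{-1}-\Sigma_0^{-1}=\Sigma^{-1}(\Sigma_0-\Sigma)\Sigma_0^{-1}$ together with Lemma \ref{lem:covariance2}, Assumption \ref{as:covariance}.1, and Assumption \ref{as:concentration}.3 to obtain $\max_i\|\Sigma^{-1}(z_i,\tilde{\gamma})-\Sigma_0^{-1}(z_i,\tilde{\gamma})\|_{op}=o(n^{-1/4})$ uniformly, and then close out using Assumption \ref{as:moments}.1.

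For (ii), Assumption \ref{as:criterion}.2 only gives a well-separated minimum pointwise in $\tilde{\gamma}$, but Assumption \ref{as:covariance}.1 yields the sandwich
\[
\overline{\lambda}^{-1}E_{P_{0,Z}}\bigl[\|m_0(Z,\gamma)\|_2^2\bigr] \;\leq\; Q_{P_{0,Z}}(\gamma,\tilde{\gamma},p_{0,W|Z}) \;\leq\; \underline{\lambda}^{-1}E_{P_{0,Z}}\bigl[\|m_0(Z,\gamma)\|_2^2\bigr].
\]
Applying Assumption \ref{as:criterion}.2 at a single reference $\tilde{\gamma}^*$ gives $\inf_{\|\gamma-\gamma_0\|_2\geq\xi}E_{P_{0,Z}}[\|m_0(Z,\gamma)\|_2^2]>0$, and the lower bound of the sandwich upgrades this to $\inf_{\tilde{\gamma}\in\Gamma}\inf_{\|\gamma-\gamma_0\|_2\geq\xi}Q_{P_{0,Z}}(\gamma,\tilde{\gamma},p_{0,W|Z})\geq c(\xi)>0$ for every $\xi>0$.

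Combining the pieces: if the lemma fails, there exist $\xi>0$ and $(\tilde{\gamma}_n,p_{n,W|Z})\in\Gamma\times\tilde{\mathcal{P}}_{n,W|Z}$ with $\|q_n(\tilde{\gamma}_n,p_{n,W|Z})-\gamma_0\|_2\geq\xi$. By the minimization property and $m_0(\cdot,\gamma_0)=0$ from Assumption \ref{as:correctspec}, $Q_n(q_n,\tilde{\gamma}_n,p_{n,W|Z})\leq Q_n(\gamma_0,\tilde{\gamma}_n,p_{n,W|Z})\to 0$ by (i), while (ii) forces $Q_{P_{0,Z}}(q_n,\tilde{\gamma}_n,p_{0,W|Z})\geq c(\xi)$, contradicting (i) once more. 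The main obstacle is (i): the supremum is taken over an infinite-dimensional set of densities, and it is precisely here that the sharper-than-$n^{-1/4}$ concentration rates in Assumption \ref{as:concentration} together with the operator-norm bounds supplied by Lemma \ref{lem:covariance2} are essential.
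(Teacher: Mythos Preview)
Your proposal is correct and follows the same M-estimator consistency template as the paper: uniform convergence of $Q_n$ to $Q_{P_{0,Z}}$ over $\Gamma\times\Gamma\times\tilde{\mathcal{P}}_{n,W|Z}$, then a contradiction argument via the minimization property. Two differences are worth noting. First, your reduction to $k_n=n$ is a valid simplification; the paper instead re-runs each step along an arbitrary subsequence. Second, your eigenvalue sandwich in (ii), using Assumption~\ref{as:covariance}.1 to upgrade the pointwise-in-$\tilde{\gamma}$ well-separation of Assumption~\ref{as:criterion}.2 to a uniform lower bound $c(\xi)>0$, is a clean refinement. The paper's Step~3 invokes Assumption~\ref{as:criterion}.2 at the varying point $\tilde{\gamma}_{l_{k_n}}$, which gives strict positivity for each $n$ but does not by itself rule out the infimum drifting to zero along the sequence; your sandwich closes this directly.
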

\begin{lemma}\label{lem:UCM}
If Assumptions \ref{as:dgp2}, \ref{as:correctspec}, \ref{as:moments}, \ref{as:covariance}.1, \ref{as:criterion}, \ref{as:localident}, and \ref{as:concentration} hold, then, for $P_{0,Z}^{\infty}$-almost every fixed realization $\{z_{i}\}_{i \geq 1}$ of $\{Z_{i}\}_{i\geq 1}$ and $n$ large, the function $\tilde{\gamma} \mapsto q_{n}(\tilde{\gamma},p_{W|Z})$ is a uniform contraction mapping over $\tilde{\mathcal{P}}_{n,W|Z}$. That is, there exists a constant $\rho \in [0,1)$ and $N \geq 1$ such that $||q_{n}(\tilde{\gamma}_{1},p_{W|Z})-q_{n}(\tilde{\gamma}_{2},p_{W|Z})||_{2} \leq \rho ||\tilde{\gamma}_{1}-\tilde{\gamma}_{2}||_{2}$ for all $\tilde{\gamma}_{1},\tilde{\gamma}_{2} \in \Gamma$, all $p_{W|Z} \in \tilde{\mathcal{P}}_{n,W|Z}$, and $n \geq N$. 
\end{lemma}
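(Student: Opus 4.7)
The plan is to show that the contraction constant for $\tilde{\gamma} \mapsto q_{k_n}(\tilde{\gamma}, p_{W|Z})$ tends to zero uniformly over $\tilde{\mathcal{P}}_{k_n,W|Z}$, which is much stronger than needed. The starting point is the representation of the derivative already derived in the proof of Proposition \ref{prop:fixedpoint}: for $p_{W|Z} \in \tilde{\mathcal{P}}_{k_n,W|Z}$, the implicit function theorem applied to the first-order conditions yields
\begin{align*}
\frac{\partial}{\partial \tilde{\gamma}'} q_{k_n}(\tilde{\gamma}, p_{W|Z}) = -\left(\frac{\partial^2 Q_{k_n}}{\partial x_1 \partial x_1'}(q_{k_n}, \tilde{\gamma}, p_{W|Z})\right)^{-1} \frac{\partial^2 Q_{k_n}}{\partial x_1 \partial x_2'}(q_{k_n}, \tilde{\gamma}, p_{W|Z}),
\end{align*}
and submultiplicativity of the operator norm bounds $\|\partial q_{k_n}/\partial \tilde{\gamma}'\|_{op}$ by the product of the two corresponding operator norms. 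The argument then consists of showing the mixed partial vanishes uniformly while the Hessian stays uniformly positive definite.

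First I would compute the two second derivatives in closed form. A direct calculation gives
\begin{align*}
\frac{\partial^2 Q_n}{\partial x_1 \partial x_2'}(\gamma, \tilde{\gamma}, p_{W|Z})[h] = \frac{2}{n}\sum_{i=1}^{n} M(z_i,\gamma)' \left(\frac{\partial \Sigma^{-1}(z_i,\tilde{\gamma})}{\partial \tilde{\gamma}'}[h]\right) m(z_i,\gamma),
\end{align*}
which is linear in $m(z_i,\gamma)$, and
\begin{align*}
\frac{\partial^2 Q_n}{\partial x_1 \partial x_1'}(\gamma, \tilde{\gamma}, p_{W|Z}) = \frac{2}{n}\sum_{i=1}^{n} M(z_i,\gamma)' \Sigma^{-1}(z_i,\tilde{\gamma}) M(z_i,\gamma) + R_n(\gamma, \tilde{\gamma}, p_{W|Z}),
\end{align*}
where $R_n$ is a remainder linear in $m(z_i,\gamma)$. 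Evaluating at $\gamma = q_{k_n}(\tilde{\gamma}, p_{W|Z})$ and using Lemma \ref{lem:pointwiseconsistency} to conclude that $q_{k_n}(\tilde{\gamma}, p_{W|Z}) \in \Gamma_0$ uniformly for $n$ large, I would then apply the triangle inequality $\|m(\cdot, q_{k_n})\|_{k_n,2} \leq \|m(\cdot, q_{k_n}) - m_0(\cdot, q_{k_n})\|_{k_n,2} + \|m_0(\cdot, q_{k_n})\|_{k_n,2}$ and show each term is $o(1)$ uniformly: the first by Assumption \ref{as:concentration}.1, the second by uniform continuity of $m_0$ in $\gamma$ (from Assumption \ref{as:moments}.2) combined with $q_{k_n} \to \gamma_0$ and the fact that $m_0(z, \gamma_0) = 0$. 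Combined with the uniform boundedness of $\|\Sigma^{-1}\|_{op}$ from Lemma \ref{lem:covariance2}, $\|M\|_{n,2}$ implied by Assumption \ref{as:concentration}.2 plus \ref{as:moments}.2, and $\|\partial \Sigma^{-1}/\partial \tilde\gamma\|_{n,\infty}$ from Assumptions \ref{as:covariance}.2 and \ref{as:concentration}.5, this gives
\begin{align*}
\sup_{(\tilde{\gamma}, p_{W|Z}) \in \Gamma \times \tilde{\mathcal{P}}_{k_n,W|Z}} \left\| \frac{\partial^2 Q_{k_n}}{\partial x_1 \partial x_2'}(q_{k_n}, \tilde{\gamma}, p_{W|Z})\right\|_{op} = o(1),
\end{align*}
and the remainder $R_{k_n}$ at $(q_{k_n}, \tilde\gamma, p_{W|Z})$ also satisfies the same bound.

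For the Hessian, the leading term $\frac{2}{k_n}\sum_i M' \Sigma^{-1} M$ evaluated at $(q_{k_n}, \tilde{\gamma})$ converges uniformly to $\frac{2}{k_n}\sum_i V_0(z_i, \tilde{\gamma})$ (using Assumption \ref{as:concentration}.2, \ref{as:concentration}.3, and $q_{k_n} \to \gamma_0$), and then to $2 E_{P_{0,Z}}[V_0(Z, \tilde{\gamma})]$ uniformly in $\tilde{\gamma}$ by Assumption \ref{as:localident}.1. Since Assumption \ref{as:covariance}.1 yields $\Sigma_0^{-1}(z,\tilde{\gamma}) \succeq \bar{\lambda}^{-1} I$, one has $V_0(z,\tilde{\gamma}) \succeq \bar{\lambda}^{-1} M_0(z,\gamma_0)' M_0(z,\gamma_0)$, so Assumption \ref{as:localident}.2 gives $\inf_{\tilde{\gamma}\in\Gamma} \lambda_{\min}(E_{P_{0,Z}}[V_0(Z,\tilde{\gamma})]) > 0$. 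Combining with the $o(1)$ remainder, the Hessian stays uniformly positive definite for large $n$, hence the inverse has uniformly bounded operator norm. Multiplying the two bounds, $\sup_{\tilde{\gamma}, p_{W|Z}} \|\partial q_{k_n}/\partial \tilde\gamma'\|_{op} = o(1)$, so in particular this supremum is strictly less than $1$ for all $n \geq N$ for some $N$; the mean value theorem on $\Gamma$ then yields the uniform contraction inequality with any $\rho \in (0,1)$ of one's choice.

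The main obstacle will be bookkeeping the various uniformities: the Hessian and mixed partial are evaluated at $\gamma = q_{k_n}(\tilde{\gamma}, p_{W|Z})$ which itself depends on $(\tilde{\gamma}, p_{W|Z})$, so I would rely on Lemma \ref{lem:pointwiseconsistency} to push this evaluation point into $\Gamma_0$ uniformly and then exploit the $L^2$-$L^\infty$ bounds in Assumption \ref{as:concentration} together with Cauchy–Schwarz to control the remainder terms. The subsequence $\{k_n\}$ plays no role beyond the fact that Lemmas \ref{lem:covariance2} and \ref{lem:pointwiseconsistency} have already been stated along arbitrary subsequences.
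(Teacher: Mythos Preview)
Your proposal is correct and follows essentially the same approach as the paper's proof: both use the implicit-function-theorem representation of $\partial q_{k_n}/\partial\tilde\gamma'$, show the mixed partial $\partial^2 Q_{k_n}/\partial x_1\partial x_2'$ is $o(1)$ uniformly (via $\|m(\cdot,q_{k_n})\|_{k_n,2}=o(1)$ and the boundedness of $\Sigma^{-1}$, $M$, $\partial\Sigma/\partial\tilde\gamma$), and show the Hessian $\partial^2 Q_{k_n}/\partial x_1\partial x_1'$ converges uniformly to the positive-definite $2E_{P_{0,Z}}[V_0(Z,\tilde\gamma)]$ plus an $o(1)$ remainder. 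One small correction: you invoke Assumption~\ref{as:covariance}.2 to bound $\|\partial\Sigma^{-1}/\partial\tilde\gamma\|_{n,\infty}$, but the lemma only assumes Assumption~\ref{as:covariance}.1; fortunately Assumption~\ref{as:concentration}.5 (combined with Lemma~\ref{lem:covariance2} and the identity $\partial\Sigma^{-1}/\partial\tilde\gamma=-\Sigma^{-1}(\partial\Sigma/\partial\tilde\gamma)\Sigma^{-1}$) already suffices, so this is only a citation slip, not a gap.
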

\begin{lemma}\label{lem:consistency}
If Assumptions \ref{as:uemfp}, \ref{as:dgp2}, \ref{as:correctspec}, \ref{as:moments}, \ref{as:covariance}.1, \ref{as:criterion}, \ref{as:localident}, and \ref{as:concentration} hold, then, for $P_{0,Z}^{\infty}$-almost every fixed realization $\{z_{i}\}_{i \geq 1}$ of $\{Z_{i}\}_{i \geq 1}$, 
\begin{align*}
\limsup_{n\rightarrow \infty}\sup_{p_{W|Z} \in \tilde{\mathcal{P}}_{n,W|Z}}||\gamma_{n}-\gamma_{0}||_{2} = 0,    
\end{align*}
where $\{\tilde{\mathcal{P}}_{n,W|Z}\}_{n \geq 1}$ are the sets defined in Assumption \ref{as:concentration}.
\end{lemma}
\begin{lemma}\label{lem:jacobiancross}
Suppose that Assumptions \ref{as:uemfp}, \ref{as:dgp2}, \ref{as:correctspec}, \ref{as:moments}, \ref{as:covariance}, \ref{as:criterion}, \ref{as:localident}, and \ref{as:concentration} hold. Then, for $P_{0,Z}^{\infty}$-almost every realization $\{z_{i}\}_{i \geq 1}$ of $\{Z_{i}\}_{i \geq 1}$,
\begin{align*}
\sup_{p_{W|Z}\in \tilde{\mathcal{P}}_{n,W|Z}}\left | \left | M(\cdot,\gamma_{n})-M_{0}(\cdot,\gamma_{0}) \right | \right|_{n,2} \lesssim o(n^{-1/4}) + \sup_{p_{W|Z} \in \tilde{\mathcal{P}}_{n,W|Z}}||\gamma_{n}-\gamma_{0}||_{2}
\end{align*}
and
\begin{align*}
\sup_{p_{W|Z}\in \tilde{\mathcal{P}}_{n,W|Z}}\left | \left | \Sigma(\cdot,\gamma_{n})-\Sigma_{0}(\cdot,\gamma_{0}) \right | \right|_{n,\infty} \lesssim o(n^{-1/4}) + \sup_{p_{W|Z} \in \tilde{\mathcal{P}}_{n,W|Z}}||\gamma_{n}-\gamma_{0}||_{2}
\end{align*}
where $\{\tilde{\mathcal{P}}_{n,W|Z}\}_{n \geq 1}$ are the sets defined in Assumption \ref{as:concentration}.
\end{lemma}
\begin{lemma}\label{lem:gram}
Let $V_{n}(\gamma,\bar{\gamma},p_{W|Z}) = n^{-1}\sum_{i=1}^{n}M(z_{i},\gamma)'\Sigma^{-1}(z_{i},\gamma)M(z_{i},\bar{\gamma})$. If Assumptions \ref{as:uemfp}, \ref{as:dgp2}, \ref{as:correctspec}, \ref{as:moments}, \ref{as:covariance}, \ref{as:criterion}, \ref{as:localident}, and \ref{as:concentration} hold, then, for $P_{0,Z}^{\infty}$-almost every fixed realization $\{z_{i}\}_{i \geq 1}$ of $\{Z_{i}\}_{i \geq 1}$ and for every positive sequence $\{r_{n}\}_{n \geq 1}$ with $r_{n} \rightarrow 0$ as $n\rightarrow \infty$, the collection of matrices $\{V_{n}(\gamma,\bar{\gamma},p_{W|Z}): (\gamma,\bar{\gamma},p_{W|Z}) \in B(0,r_{n})\times B(0,r_{n})\times \tilde{\mathcal{P}}_{n,W|Z}\}$ forms a class of nonsingular matrices for $n$ large, where $\{\tilde{\mathcal{P}}_{n,W|Z}\}_{n \geq 1}$ are defined in Assumption \ref{as:concentration} and $\{B(\gamma_{0},r_{n})\}_{n \geq 1}$ are open balls in $(\mathbb{R}^{d_{\gamma}},||\cdot||_{2})$ centered at $\gamma_{0}$ and with radii $\{r_{n}\}_{n \geq 1}$. Moreover, $\sup_{(\gamma,\bar{\gamma},p_{W|Z}) \in B(\gamma_{0},r_{n})\times B(\gamma_{0},r_{n})\times \tilde{\mathcal{P}}_{n,W|Z}}||V_{n}^{-1}(\gamma,\bar{\gamma},p_{W|Z})||_{op} = O(1)$.
\end{lemma}
\begin{lemma}\label{lem:contraction}
Suppose that Assumptions \ref{as:uemfp}, \ref{as:dgp2}, \ref{as:correctspec}, \ref{as:moments}, \ref{as:covariance}, \ref{as:criterion}, \ref{as:localident}, and \ref{as:concentration}.1--\ref{as:concentration}.5 hold. Then, for $P_{0,Z}^{\infty}$-almost every fixed realization $\{z_{i}\}_{i \geq 1}$ of $\{Z_{i}\}_{i \geq 1}$, the following holds:
\begin{align*}
\limsup_{n\rightarrow \infty}\sup_{p_{W|Z} \in \tilde{\mathcal{P}}_{n,W|Z}}||n^{\frac{1}{4}}(\gamma_{n}-\gamma_{0})||_{2} = 0,    
\end{align*}
where $\{\tilde{\mathcal{P}}_{n,W|Z}\}_{n \geq 1}$ are the sets defined in Assumption \ref{as:concentration}.
\end{lemma}
\begin{proof}[Proof of Theorem \ref{thm:representation}]
 The proof has five steps. Step 1 shows that the lemma holds if four conditions are met. Steps 2--5 are verify these conditions in order to prove the theorem.
\paragraph{Step 1.} Let $\{z_{i}\}_{i\geq 1}$ be an arbitrary fixed realization of $\{Z_{i}\}_{i \geq 1}$. Since $\gamma_{n}$ satisfies the first order conditions
\begin{align*}
    \frac{1}{n}\sum_{i=1}^{n}M(z_{i},\gamma_{n})'\Sigma(z_{i},\gamma_{n})^{-1}m(z_{i},\gamma_{n}) = 0
\end{align*}
and $\gamma_{n} \in \Gamma_{0}$ for $n$ large and for all $p_{W|Z} \in \tilde{\mathcal{P}}_{n,W|Z}$, I expand $m(z_{i},\gamma_{n})$ around $\gamma_{0}$ (using element by element mean value expansions) to obtain
\begin{align*}
0=\frac{1}{n}\sum_{i=1}^{n}M(z_{i},\gamma_{n})'\Sigma^{-1}(z_{i},\gamma_{n})m(z_{i},\gamma_{0})+\frac{1}{n}\sum_{i=1}^{n}M(z_{i},\gamma_{n})'\Sigma^{-1}(z_{i},\gamma_{n})M(z_{i},\gamma_{n}^{\dagger})(\gamma_{n}-\gamma_{0}),
\end{align*}
where $\gamma_{n}^{\dagger}$ is on the line segment joining $\gamma_{n}$ and $\gamma_{0}$ (and takes different values in each row of $M(z_{i},\gamma_{n}^{\dagger})$). Lemma \ref{lem:consistency} and \ref{lem:gram} then implies that
\begin{align*}
\sqrt{n}(\gamma_{n}-\gamma_{0}) = -V_{n}^{-1}\frac{1}{\sqrt{n}}\sum_{i=1}^{n}M(z_{i},\gamma_{n})'\Sigma^{-1}(z_{i},\gamma_{n})m(z_{i},\gamma_{0}). 
\end{align*}
for large $n$, where $V_{n}=n^{-1}\sum_{i=1}^{n}M(z_{i},\gamma_{n})'\Sigma^{-1}(z_{i},\gamma_{n})M(z_{i},\gamma_{n}^{\dagger})$. By definition of $\tilde{\chi}_{0}$,
\begin{align*}
    \frac{1}{\sqrt{n}}\sum_{i=1}^{n}E_{P_{W|Z}}[\tilde{\chi}_{0}(W,Z)|Z=z_{i}] = -V_{0}^{-1}\frac{1}{\sqrt{n}}\sum_{i=1}^{n}M_{0}(z_{i},\gamma_{0})'\Sigma_{0}^{-1}(z_{i},\gamma_{0})m(z_{i},\gamma_{0}).
\end{align*}
Consequently, the triangle inequality, the matrix identity $A^{-1}-B^{-1}= B^{-1}(B-A)A^{-1}$, and submultiplicativity of operator norms reveals
\begin{align*}
&\left | \left | \sqrt{n}(\gamma_{n}-\gamma_{0}) - \frac{1}{\sqrt{n}}\sum_{i=1}^{n}E_{P_{W|Z}}[\tilde{\chi}_{0}(W,Z)|Z=z_{i}] \right | \right |_{2} \\
&\leq \left| \left|V_{n}^{-1} \right| \right|_{op} \cdot \bigg ( \left|\left| \frac{1}{\sqrt{n}}\sum_{i=1}^{n}\left\{(M(z_{i},\gamma_{n})'\Sigma^{-1}(z_{i},\gamma_{n})- M_{0}(z_{i},\gamma_{0})'\Sigma_{0}^{-1}(z_{i},\gamma_{0}))m(z_{i},\gamma_{0})\right\}\right| \right|_{2} \\
&\quad \quad \quad \quad \quad \quad \quad \quad + n^{\frac{1}{4}}||V_{n}-V_{0}||_{op} \cdot ||V_{0}^{-1}||_{op} \cdot n^{\frac{1}{4}}\left| \left|\frac{1}{n}\sum_{i=1}^{n}M_{0}(z_{i},\gamma_{0})'\Sigma_{0}^{-1}(z_{i},\gamma_{0})m(z_{i},\gamma_{0}) \right| \right|_{2}\bigg)
\end{align*}
for all $p_{W|Z} \in \tilde{\mathcal{P}}_{n,W|Z}$. This indicates that I need to show
\begin{align}
&\sup_{p_{W|Z} \in \tilde{\mathcal{P}}_{n,W|Z}}\left|\left| \frac{1}{\sqrt{n}}\sum_{i=1}^{n}(M(z_{i},\gamma_{n})'\Sigma(z_{i},\gamma_{n})^{-1} -M_{0}(z_{i},\gamma_{0})'\Sigma_{0}(z_{i},\gamma_{0})^{-1})m(z_{i},\gamma_{0})\right| \right|_{2}=o(1) \label{eq:tech1} \\
& \sup_{p_{W|Z} \in \tilde{\mathcal{P}}_{n,W|Z}}n^{\frac{1}{4}}\left|\left|\frac{1}{n}\sum_{i=1}^{n}M_{0}(z_{i},\gamma_{0})'\Sigma_{0}^{-1}(z_{i},\gamma_{0})m(z_{i},\gamma_{0}) \right| \right|_{2} = o(1) \label{eq:tech11} \\
& \sup_{p_{W|Z} \in \tilde{\mathcal{P}}_{n,W|Z}}n^{\frac{1}{4}}\left | \left |V_{n} - V_{0} \right| \right|_{op}= o(1) \label{eq:tech2} \\
& \sup_{p_{W|Z} \in \tilde{\mathcal{P}}_{n,W|Z}}\left|\left|V_{n}^{-1}\right|\right|_{op} = O(1) \label{eq:tech22}.
\end{align}
\paragraph{Step 2.} I verify (\ref{eq:tech1}). Adding/subtracting $M_{0}(z_{i},\gamma_{0})$ and $\Sigma_{0}^{-1}(z_{i},\gamma_{0})$ appropriately, using that $m_{0}(z_{i},\gamma_{0}) = 0$ (Assumption \ref{as:correctspec}), and applying the triangle inequality, it suffices to show
\begin{align}\label{eq:tech1a}
\sup_{p_{W|Z} \in \tilde{\mathcal{P}}_{n,W|Z}}\frac{1}{\sqrt{n}}\sum_{i=1}^{n}\left | \left | (M(z_{i},\gamma_{n})-M_{0}(z_{i},\gamma_{0}))'\Sigma^{-1}(z_{i},\gamma_{n})\left(m(z_{i},\gamma_{0})-m_{0}(z_{i},\gamma_{0})\right) \right | \right |_{2}  =o(1)
\end{align}
and
\begin{align}\label{eq:tech1b}
\sup_{p_{W|Z} \in \tilde{\mathcal{P}}_{n,W|Z}}\frac{1}{\sqrt{n}}\sum_{i=1}^{n}\left | \left  |M_{0}(z_{i},\gamma_{0})'\left(\Sigma^{-1}(z_{i},\gamma_{n})-\Sigma_{0}^{-1}(z_{i},\gamma_{0})\right)\left(m(z_{i},\gamma_{0})-m_{0}(z_{i},\gamma_{0})\right) \right  | \right |_{2} =o(1).
\end{align}
For (\ref{eq:tech1a}), I use that $||Ax||_{2} \leq ||A||_{op} ||x||_{2}$ for a matrix $A$ and vector $x$ to obtain
\begin{align*}
&\left | \left | (M(z_{i},\gamma_{n})-M_{0}(z_{i},\gamma_{0}))'\Sigma(z_{i},\gamma_{n})^{-1}\left(m(z_{i},\gamma_{0})-m_{0}(z_{i},\gamma_{0})\right) \right | \right |_{2} \\
&\quad \leq ||M(z_{i},\gamma_{n})-M_{0}(z_{i},\gamma_{0})||_{op} \cdot || \Sigma^{-1}(z_{i},\gamma_{n})(m(z_{i},\gamma_{0})-m_{0}(z_{i},\gamma_{0}))||_{2} \\
&\quad \leq ||M(z_{i},\gamma_{n})-M_{0}(z_{i},\gamma_{0})||_{op} \cdot ||\Sigma^{-1}(z_{i},\gamma_{n})||_{op} \cdot ||m(z_{i},\gamma_{0})-m_{0}(z_{i},\gamma_{0})||_{2}
\end{align*}
for $i=1,...,n$ and for all $ n\geq 1$. Consequently, by H\"{o}lder's inequality,
\begin{align*}
&\sup_{p_{W|Z} \in \tilde{\mathcal{P}}_{n,W|Z}}\frac{1}{\sqrt{n}}\sum_{i=1}^{n}\left | \left | (M(z_{i},\gamma_{n})-M_{0}(z_{i},\gamma_{0}))'\Sigma^{-1}(z_{i},\gamma_{n})\left(m(z_{i},\gamma_{0})-m_{0}(z_{i},\gamma_{0})\right) \right | \right |_{2} \\
&\quad \leq \sup_{p_{W|Z} \in \tilde{\mathcal{P}}_{n,W|Z}}||\Sigma^{-1}(\cdot,\gamma_{n})||_{n,\infty} \times n^{\frac{1}{4}}\sup_{p_{W|Z} \in \tilde{\mathcal{P}}_{n,W|Z}}||M(\cdot,\gamma_{n})-M_{0}(\cdot,\gamma_{0})||_{n,2} \\
&\quad \quad \times n^{\frac{1}{4}}\sup_{p_{W|Z} \in \tilde{\mathcal{P}}_{n,W|Z}}||m(\cdot,\gamma_{0})-m_{0}(\cdot,\gamma_{0})||_{n,2} \\
&\quad = o(1)
\end{align*}
with the last equality holding by Lemma \ref{lem:covariance2}, Assumption \ref{as:concentration}.3(a), Lemma \ref{lem:jacobiancross}, and Lemma \ref{lem:contraction}. For (\ref{eq:tech1b}), I apply the inequality $||Ax||_{2} \leq ||A||_{op} ||x||_{2}$ for a matrix $A$ and vector $x$ as well as the identity $B^{-1}-A^{-1} = A^{-1}(A-B)B^{-1}$ for square matrices $A$ and $B$ to obtain
\begin{align*}
    &\left | \left  |M_{0}(z_{i},\gamma_{0})'\left(\Sigma^{-1}(z_{i},\gamma_{n})-\Sigma_{0}^{-1}(z_{i},\gamma_{0})\right)\left(m(z_{i},\gamma_{0})-m_{0}(z_{i},\gamma_{0})\right) \right  | \right |_{2}\\
    &\quad \leq ||M_{0}(z_{i},\gamma_{0})||_{op}\cdot ||\Sigma^{-1}(z_{i},\gamma_{n})||_{op} \cdot ||\Sigma(z_{i},\gamma_{n})-\Sigma_{0}(z_{i},\gamma_{0})||_{op} \\
    &\quad \quad \times ||\Sigma_{0}^{-1}(z_{i},\gamma_{0})||_{op} \cdot ||m(z_{i},\gamma_{0})-m_{0}(z_{i},\gamma_{0})||_{2}.
\end{align*}
Assumption \ref{as:moments}.3 and \ref{as:covariance} implies $\sup_{z \in \mathcal{Z}}||\Sigma_{0}^{-1}(z,\gamma_{0})||_{op}< \infty$ and $\sup_{z \in \mathcal{Z}}||M_{0}(z,\gamma_{0})||_{op} < \infty$. Consequently, using this result and $||Ax||_{2} \leq ||A||_{op} ||x||_{2}$ for a matrix $A$ and vector $x$,
\begin{align*}
&\left | \left  |M_{0}(z_{i},\gamma_{0})'\left(\Sigma^{-1}(z_{i},\gamma_{n})-\Sigma_{0}^{-1}(z_{i},\gamma_{0})\right)\left(m(z_{i},\gamma_{0})-m_{0}(z_{i},\gamma_{0})\right) \right  | \right |_{2} \\
&\quad \leq C \cdot  ||\Sigma^{-1}(z_{i},\gamma_{n})||_{op}\cdot ||\Sigma(z_{i},\gamma_{n})-\Sigma_{0}(z_{i},\gamma_{0})||_{op}||m(z_{i},\gamma_{0})-m_{0}(z_{i},\gamma_{0})||_{2}.   
\end{align*}
for $i=1,...,n$, where $C = \sup_{z \in \mathcal{Z}}||\Sigma_{0}^{-1}(z,\gamma_{0})||_{op} \cdot \sup_{z \in \mathcal{Z}}||M_{0}(z,\gamma_{0})||_{op} < \infty$. An application of H\"{o}lder's inequality then implies
\begin{align*}
&\sup_{p_{W|Z} \in  \tilde{\mathcal{P}}_{n,W|Z}}\frac{1}{\sqrt{n}}\sum_{i=1}^{n}\left | \left  |M_{0}(z_{i},\gamma_{0})'\left(\Sigma^{-1}(z_{i},\gamma_{n})-\Sigma_{0}^{-1}(z_{i},\gamma_{0})\right)\left(m(z_{i},\gamma_{0})-m_{0}(z_{i},\gamma_{0})\right) \right  | \right |_{2} \\
&\leq C\sup_{p_{W|Z} \in \tilde{\mathcal{P}}_{n,W|Z}}||\Sigma^{-1}(\cdot,\gamma_{n})||_{n,\infty} \times n^{\frac{1}{4}}\sup_{p_{W|Z} \in \tilde{\mathcal{P}}_{n,W|Z}}||\Sigma(\cdot,\gamma_{n})-\Sigma_{0}(\cdot,\gamma_{0})||_{n,\infty}\\
&\quad \times n^{\frac{1}{4}}\sup_{p_{W|Z} \in \tilde{\mathcal{P}}_{n,W|Z}}||m(\cdot,\gamma_{0})-m_{0}(\cdot,\gamma_{0})||_{n,2} \\
&= o(1),
\end{align*}
with the last equality holding by Lemma \ref{lem:covariance2}, Assumption \ref{as:concentration}.3(a), Lemma \ref{lem:jacobiancross}, and Lemma \ref{lem:contraction}.
\paragraph{Step 3.} I verify (\ref{eq:tech11}). Since $m_{0}(z_{i},\gamma_{0}) = 0$ for all $i$ by Assumption \ref{as:correctspec}, $\sup_{z \in \mathcal{Z}}||\Sigma_{0}^{-1}(z,\gamma_{0})||_{op} < \infty$ by Assumption \ref{as:covariance}, and $\sup_{z \in \mathcal{Z}}||M_{0}(z,\gamma_{0})||_{op}< \infty$ by Assumption \ref{as:moments}, it follows from the triangle inequality, H\"{o}lder's inequality, and $||Ax||_{2} \leq ||A||_{op}\cdot ||x||_{2}$ that
\begin{align*}
\sup_{p_{W|Z} \in \tilde{\mathcal{P}}_{n,W|Z}}\left|\left|\frac{1}{n}\sum_{i=1}^{n}M_{0}(z_{i},\gamma_{0})'\Sigma_{0}^{-1}(z_{i},\gamma_{0})m(z_{i},\gamma_{0})\right| \right|_{2} &\leq C\sup_{p_{W|Z} \in \tilde{\mathcal{P}}_{n,W|Z}}||m(\cdot,\gamma_{0})-m_{0}(\cdot,\gamma_{0})||_{n,2} \\
&=o(n^{-\frac{1}{4}}),
\end{align*}
where $C$ is the same as Step 2, and $o(n^{-1/4})$ by Assumption \ref{as:concentration}.3(a).

\paragraph{Step 4.} For (\ref{eq:tech2}), I first show that $\sup_{p_{W|Z} \in \tilde{\mathcal{P}}_{n,W|Z}}||V_{n}-V_{n,0}||_{op} = o(n^{-\frac{1}{4}})$, where $V_{n,0}=n^{-1}\sum_{i=1}^{n}M_{0}(z_{i},\gamma_{0})'\Sigma_{0}^{-1}(z_{i},\gamma_{0})M_{0}(z_{i},\gamma_{0})$. Adding/subtracting $M_{0}(z_{i},\gamma_{0})$ and $\Sigma_{0}^{-1}(z_{i},\gamma_{0})$,
\begin{align*}
 V_{n}-V_{n,0}&=\frac{1}{n}\sum_{i=1}^{n}(M(z_{i},\gamma_{n})-M_{0}(z_{i},\gamma_{0}))'\Sigma^{-1}(z_{i},\gamma_{n})M(z_{i},\gamma_{n}^{\dagger}) \\
 &\quad +\frac{1}{n}\sum_{i=1}^{n}M_{0}(z_{i},\gamma_{0})'\Sigma^{-1}(z_{i},\gamma_{n})(M(z_{i},\gamma_{n}^{\dagger})-M_{0}(z_{i},\gamma_{0})) \\
 &\quad + \frac{1}{n}\sum_{i=1}^{n}M_{0}(z_{i},\gamma_{0})'(\Sigma^{-1}(z_{i},\gamma_{n})-\Sigma_{0}^{-1}(z_{i},\gamma_{0}))M_{0}(z_{i},\gamma_{0}).
\end{align*}
The triangle inequality then reveals that it suffices to show three results
\begin{align*}
 &\sup_{p_{W|Z} \in  \tilde{\mathcal{P}}_{n,W|Z}} \frac{1}{n}\sum_{i=1}^{n} \left| \left |M_{0}(z_{i},\gamma_{0})'(\Sigma(z_{i},\gamma_{n})^{-1}-\Sigma_{0}(z_{i},\gamma_{0})^{-1})M_{0}(z_{i},\gamma_{0}) \right | \right|_{op} = o(n^{-\frac{1}{4}}) \\
  &\sup_{p_{W|Z} \in  \tilde{\mathcal{P}}_{n,W|Z}} \frac{1}{n}\sum_{i=1}^{n} \left| \left |M_{0}(z_{i},\gamma_{0})'\Sigma^{-1}(z_{i},\gamma_{n})(M(z_{i},\gamma_{n}^{\dagger})-M_{0}(z_{i},\gamma_{0})) \right | \right|_{op} = o(n^{-\frac{1}{4}}) \\
  &\sup_{p_{W|Z}\in  \tilde{\mathcal{P}}_{n,W|Z}}\frac{1}{n}\sum_{i=1}^{n} \left| \left |(M(z_{i},\gamma_{n})-M_{0}(z_{i},\gamma_{0}))'\Sigma(z_{i},\gamma_{n})^{-1}M(z_{i},\gamma_{n}^{\dagger})\right | \right |_{op} = o(n^{-\frac{1}{4}})
\end{align*}
These are verified by applying symmetric arguments to Step 1 of Lemma \ref{lem:gram} (see Remark 1 in \cite{walker2026supplement}). Consequently, $\sup_{p_{W|Z} \in \tilde{\mathcal{P}}_{n,W|Z}}\left|\left|V_{n} -V_{n,0}\right|\right|_{op} = o(n^{-\frac{1}{4}})$. It remains to show that $||V_{n,0}-V_{0}||_{op} = o(n^{-1/4})$. Under Assumptions \ref{as:dgp2}, \ref{as:moments}.3, and \ref{as:covariance}.1, the Law of the Iterated Logarithm (applied to each element of $V_{n,0}- V_{0}$) implies that $||V_{n,0}-V_{0}||_{F} = o(n^{-1/4})$, where $||\cdot||_{F}$ is Frobenius norm, and $||V_{n,0}-V_{0}||_{op} = o(n^{-1/4})$ using $||\cdot||_{op} \leq ||\cdot||_{F}$.
\paragraph{Step 5.} Verification of (\ref{eq:tech22}) follows immediately from the second claim of Lemma \ref{lem:gram} because $\gamma_{n},\gamma_{n}^{\dagger} \in B(\gamma_{0},r_{n})$ for $n$ large and $r_{n}\rightarrow 0$ arbitarily slowly as $n \rightarrow \infty$.
\end{proof}
\subsection{Proof of Theorem \ref{thm:bvm}}
I start with an intermediate lemma (a proof is in \cite{walker2026supplement}).
\begin{lemma}\label{lem:likelihoodexpansion}
 Let $S_{n,i}(p_{W|Z}) = \tilde{\chi}_{0}(W_{i},z_{i})- E_{P_{W|Z}}[\tilde{\chi}_{0}(W,Z)|Z=z_{i}]$ for $i=1,...,n$. If Assumptions \ref{as:dgp2}, \ref{as:correctspec}, \ref{as:moments}, \ref{as:covariance}, \ref{as:criterion}, \ref{as:localident}, and \ref{as:concentration} hold, then, for $P_{0,Z}^{\infty}$-almost every fixed realization $\{z_{i}\}_{i \geq 1}$ of $\{Z_{i}\}_{i \geq 1}$,
\begin{align*}
    \sup_{p_{W|Z} \in \tilde{\mathcal{P}}_{n,W|Z}}\left|\log L_{n}(p_{W|Z})- \log L_{n}(p_{t,n,W|Z})- t'\frac{1}{\sqrt{n}}\sum_{i=1}^{n}S_{n,i}(p_{W|Z})-\frac{1}{2}t'V_{0}^{-1}t\right| \overset{P_{0,W|Z}^{(n)}}{\longrightarrow} 0,
\end{align*}
where $\{\tilde{\mathcal{P}}_{n,W|Z}\}_{n \geq 1}$ are the sets defined in Assumption \ref{as:concentration} and $p_{t,n,W|Z} \propto p_{W|Z}\exp(-t'\tilde{\chi}_{0}/\sqrt{n})$.
\end{lemma}
\begin{proof}[Proof of Theorem \ref{thm:bvm}]
 The proof has two steps. Step 1 establishes a preliminary convergence result. Step 2 uses this result, Theorem \ref{thm:representation}, and Lemma \ref{lem:likelihoodexpansion} to conclude the BvM.
\paragraph{Step 1.} Let $d_{TV}$ be the total variation distance. I show that, for $P_{0,Z}^{\infty}$-almost every fixed realization $\{z_{i}\}_{i \geq 1}$ of $\{Z_{i}\}_{i \geq 1}$, $d_{TV}(\Pi(p_{W|Z} \in \cdot|W^{(n)}), \Pi(p_{W|Z}\in \cdot|W^{(n)},\tilde{\mathcal{P}}_{n,W|Z})) \rightarrow 0$ in $P_{0,W|Z}^{(n)}$-probability as $n\rightarrow \infty$. The law of total probability and $A\cap \tilde{\mathcal{P}}_{n,W|Z}^{c} \subseteq \tilde{\mathcal{P}}_{n,W|Z}^{c}$ implies that
\begin{align*}
    &\sup_{A}\left|\Pi(p_{W|Z} \in A|W^{(n)})-\Pi(p_{W|Z} \in A\cap \tilde{\mathcal{P}}_{n,W|Z}|W^{(n)})\right| \leq \Pi\left(p_{W|Z} \in \tilde{\mathcal{P}}_{n,W|Z}^{c}\middle | W^{(n)}\right).
\end{align*}
Similarly, the definition of conditional probability and complement rule implies
\begin{align*}
    \sup_{A}\left | \Pi(p_{W|Z} \in A\cap \tilde{\mathcal{P}}_{n,W|Z}|W^{(n)})-\Pi\left(p_{W|Z} \in A \middle | W^{(n)}, \tilde{\mathcal{P}}_{n,W|Z}\right) \right | \leq \Pi\left(p_{W|Z} \in \tilde{\mathcal{P}}_{n,W|Z}^{c}\middle | W^{(n)}\right).
    \end{align*}
Consequently, the triangle inequality and Assumption \ref{as:concentration} leads to the conclusion that
\begin{align*}
d_{TV}\left(\Pi(p_{W|Z} \in \cdot|W^{(n)}), \Pi(p_{W|Z}\in \cdot|W^{(n)},\tilde{\mathcal{P}}_{n,W|Z})\right) \leq  \Pi\left(p_{W|Z} \in \tilde{\mathcal{P}}_{n,W|Z}^{c}\middle | W^{(n)}\right) \overset{P_{0,W|Z}^{(n)}}{\longrightarrow} 0     
\end{align*}
as $n\rightarrow \infty$, for $P_{0,Z}^{\infty}$-almost every fixed realization $\{z_{i}\}_{i \geq 1}$ of $\{Z_{i}\}_{i \geq 1}$. 
\paragraph{Step 2.} Since $d_{BL} \leq d_{TV}$ and $\Pi(\sqrt{n}(\gamma_{n}-\hat{\gamma}_{n}) \in A|W^{(n)}) = \Pi(p_{W|Z} \in \gamma_{n}^{-1}(A_{n}) |W^{(n)})$ for all events $A$ (where $A_{n} = n^{-1/2}A + \hat{\gamma}_{n}$), Step 1 implies that, for $P_{0,Z}^{\infty}$-almost every fixed realization $\{z_{i}\}_{i \geq 1}$ of $\{Z_{i}\}_{i \geq 1}$, 
\begin{align*}
d_{BL}(\Pi(\sqrt{n}(\gamma_{n}-\hat{\gamma}_{n}) \in \cdot|W^{(n)}), \Pi(\sqrt{n}(\gamma_{n}-\hat{\gamma}_{n}) \in \cdot|W^{(n)},\tilde{\mathcal{P}}_{n,W|Z})) \overset{P_{0,W|Z}^{(n)}}{\longrightarrow} 0
\end{align*} 
as $n\rightarrow \infty$. Consequently, it suffices to show, for $P_{0,Z}^{\infty}$-almost every fixed realization $\{z_{i}\}_{i \geq 1}$ of $\{Z_{i}\}_{i \geq 1}$,
\begin{align}\label{eq:restrictedconvergence}
    d_{BL}\left(\Pi\left(\sqrt{n}(\gamma_{n}-\hat{\gamma}_{n}) \in \cdot\middle |W^{(n)}, \tilde{\mathcal{P}}_{n,W|Z} \right), \mathcal{N}(0,V_{0}^{-1})\right) \overset{P_{0,W|Z}^{(n)}}{\longrightarrow} 0
\end{align}
as $n\rightarrow \infty$, by the triangle inequality. By Lemma 1 of \cite{castillo2015bernstein}, (\ref{eq:restrictedconvergence}) holds if, for $P_{0,Z}^{\infty}$-almost every fixed realization $\{z_{i}\}_{i \geq 1}$ of $\{Z_{i}\}_{i \geq 1}$,
\begin{align*}
    \left| E\left[\exp\left(t'\sqrt{n}(\gamma_{n}-\hat{\gamma}_{n})\right)\middle |W^{(n)},\tilde{\mathcal{P}}_{n,W|Z}\right] - \exp\left(\frac{1}{2}t'V_{0}^{-1}t\right) \right| \overset{P_{0,W|Z}^{(n)}}{\longrightarrow} 0
\end{align*}
as $n\rightarrow \infty$ for any $t$ in a neighborhood of $0$. Adding/subtracting $t'\sqrt{n}\gamma_{0}$ and $\log L_{n}(p_{t,n,W|Z})$, I know that, for $P_{0,Z}^{\infty}$-almost every fixed realization $\{z_{i}\}_{i \geq 1}$ of $\{Z_{i}\}_{i \geq 1}$,
\begin{align*}
&t'\sqrt{n}(\gamma_{n}-\hat{\gamma}_{n}) + \log L_{n}(p_{W|Z}) \\
&\quad = t'\sqrt{n}(\gamma_{n}-\gamma_{0})-t'\sqrt{n}(\hat{\gamma}_{n}-\gamma_{0})+\log L_{n}(p_{W|Z}) - \log L_{n}(p_{t,n,W|Z}) + \log L_{n}(p_{t,n,W|Z}) \\
&\quad = - t'\frac{1}{\sqrt{n}}\sum_{i=1}^{n}S_{n,i}(p_{W|Z}) + t'\frac{1}{\sqrt{n}}\sum_{i=1}^{n}S_{n,i}(p_{W|Z}) + \frac{1}{2}t'V_{0}^{-1}t+ \log L_{n}(p_{t,n,W|Z}) + o_{P_{0,W|Z}^{(n)}}(1) \\
&\quad = \frac{1}{2}t'V_{0}^{-1}t + \log L_{n}(p_{t,n,W|Z}) + o_{P_{0,W|Z}^{(n)}}(1)
\end{align*}
uniformly along $\tilde{\mathcal{P}}_{n,W|Z}$, with the second equality using Lemma \ref{lem:likelihoodexpansion}, Theorem \ref{thm:representation}, and the definition of $\hat{\gamma}_{n}$ and $p_{t,n,W|Z}$. Hence, for $P_{0,Z}^{\infty}$-almost every fixed realization $\{z_{i}\}_{i \geq 1}$ of $\{Z_{i}\}_{i \geq 1}$,
\begin{align*}
     &E\left[\exp\left(t'\sqrt{n}(\gamma_{n}-\hat{\gamma}_{n})\right)\middle |W^{(n)},\tilde{\mathcal{P}}_{n,W|Z}\right] \\
     &\quad = (1+o_{P_{0,W|Z}^{(n)}}(1))\exp\left(\frac{1}{2}t'V_{0}^{-1}t\right)\frac{\int_{\tilde{\mathcal{P}}_{n,W|Z}}\exp\left(\log L_{n}(p_{t,n,W|Z})\right) d\Pi(p_{W|Z})}{\int_{\tilde{\mathcal{P}}_{n,W|Z}}\exp\left(\log L_{n}(p_{W|Z})\right) d\Pi(p_{W|Z})} \\
     &\quad = (1+o_{P_{0,W|Z}^{(n)}}(1))\exp\left(\frac{1}{2}t'V_{0}^{-1}t\right)
\end{align*}
as $n\rightarrow \infty$, where the last equality applies (\ref{eq:invariancegeneral}). This completes the proof.
\end{proof}
\subsection{Proofs of Corollaries}
\begin{proof}[Proof of Corollary \ref{cor:quantile}.]
Theorem 2 and the delta method implies that $\Pi(f(\gamma_{n}) \in \cdot | W^{(n)})$ satisfies $d_{BL}(\Pi\left(\sqrt{n}(f(\gamma_{n})-f(\hat{\gamma}_{n})) \in \cdot |W^{(n)}), \mathcal{N}\left(0,\Omega_{0,f}\right)\right) \rightarrow  0$ in $P_{0,W|Z}^{(n)}$-probability for $P_{0,Z}^{\infty}$-almost every fixed realization $\{z_{i}\}_{i \geq 1}$ of $\{Z_{i}\}_{i \geq 1}$ as $n\rightarrow \infty$,
where $\Omega_{0,f} = \frac{\partial}{\partial \gamma'}f(\gamma_{0})V_{0}^{-1}\frac{\partial}{\partial \gamma'}f(\gamma_{0})'$. Lemma 2 in \cite{castillo2015bernstein} then implies that, for $P_{0,Z}^{\infty}$-almost every fixed realization $\{z_{i}\}_{i \geq 1}$ of $\{Z_{i}\}_{i \geq 1}$, $\Pi(\sqrt{n}(f(\gamma_{n})-f(\hat{\gamma}_{n})) \leq x  |W^{(n)}) = P_{\mathcal{N}(0,\Omega_{0,f})}(X \leq x) + o_{P_{0,W|Z}^{(n)}}(1)$ holds for every $x \in \mathbb{R}$ as $n\rightarrow \infty$. Applying Lemma 21.2 of \cite{vaart_1998} and the equivariance of quantiles to monotone transformations, $\sqrt{n}(c_{n,f}(q)-f(\hat{\gamma}_{n})) = z_{f}(q) + o_{P_{0,W|Z}^{(n)}}(1)$ for $P_{0,Z}^{\infty}$-almost every fixed realization $\{z_{i}\}_{i \geq 1}$ of $\{Z_{i}\}_{i \geq 1}$, where $z_{f}(q)$ is the $q$-quantile of $\mathcal{N}\left(0,\Omega_{0,f}\right) $. The claim then follows because $z_{f}(q) = \Phi^{-1}(q) \sqrt{\Omega_{0,f}}$ for any $q \in (0,1)$.
\end{proof}
\begin{proof}[Proof of Corollary \ref{cor:unconditional}.]
Let $A_{n}(\xi) = \{d_{BL}(\Pi(\sqrt{n}(\gamma_{n}-\hat{\gamma}_{n}) \in \cdot |W^{(n)}),\mathcal{N}(0,V_{0}^{-1})) \geq \xi\}$ for $\xi>0$. Theorem 2 implies $P_{0,W|Z}^{(n)}(A_{n}(\xi)) \rightarrow 0 $ as $n\rightarrow \infty$ for $P_{0,Z}^{\infty}$-almost every fixed realization $\{z_{i}\}_{i \geq 1}$ of $\{Z_{i}\}_{i \geq 1}$. The Bounded Convergence Theorem then yields that $E_{P_{0,Z}^{\infty}}[P_{0,W|Z}^{(n)}(A_{n}(\xi))] \rightarrow 0$ as $n\rightarrow \infty$. Applying the law of iterated expectations and using independence, $E_{P_{0,Z}^{\infty}}[P_{0,W|Z}^{(n)}(A_{n}(\xi))] = E_{P_{0,Z}^{n}}[P_{0,W|Z}^{(n)}(A_{n}(\xi))]=P_{0,WZ}^{n}(A_{n}(\xi))$, so $P_{0,WZ}^{n}(A_{n}(\xi)) \rightarrow 0$ as $n\rightarrow \infty$. Since $\xi > 0$ was arbitrary, the proof is complete.
\end{proof}
\begin{remark}\label{rem:uncon}
An inspection of the proof of Corollary \ref{cor:quantile} reveals that if Corollary \ref{cor:unconditional} is invoked, then the asymptotic expansion of $c_{n,f}(q)$ holds unconditionally too (i.e., replacing $P_{0,W|Z}^{(n)}$ with $P_{0,WZ}^{n}$). Hence, the equitailed probability intervals achieve unconditional asymptotic coverage equal to $1-\alpha$ too (and have shortest length).
\end{remark}
\end{document}